\icmltitlerunning{Optimally-weighted Estimators of the Maximum Mean Discrepancy for Likelihood-Free Inference}
\newtheorem{theorem}{Theorem}
\newtheorem{lemma}{Lemma}
\newtheorem{definition}{Definition}
\newtheorem{assumption}{Assumption}
\crefname{assumption}{}{}
\newtheorem{oldassumption}{Assumption}
\crefname{oldassumption}{}{}
\def\X{\mathcal{X}}
\def\R{\mathbb{R}}
\def\Q{\mathbb{Q}}
\def\P{\mathbb{P}}
\def\PX{\mathcal{P}(\X)}
\def\Pk{\mathcal{P}_k(\X)}
\def\U{\mathbb{U}}
\def\W{\mathcal{W}}
\def\L{\mathcal{L}}
\def\d{\text{d}}
\def\calU{\mathcal{U}}
\def\Hk{\mathcal{H}_k}
\def\Hc{\mathcal{H}_c}
\def\Hct{\mathcal{H}_{c_\theta}}
\newenvironment{talign*}
 {\csname align*\endcsname}
 {\endalign}
\newenvironment{talign}
{\align}
{\endalign}
\newcommand{\fxb}[1]{}
\begin{document}


\twocolumn[

\icmltitle{Optimally-weighted Estimators of the Maximum Mean Discrepancy for Likelihood-Free Inference}


\begin{icmlauthorlist}
    \icmlauthor{Ayush Bharti}{aalto}
    \icmlauthor{Masha Naslidnyk}{ucl}
    \icmlauthor{Oscar Key}{ucl}
    \icmlauthor{Samuel Kaski}{aalto,manchester}
    \icmlauthor{Fran\c{c}ois-Xavier Briol}{ucl}
\end{icmlauthorlist}

\icmlaffiliation{aalto}{Department of Computer Science, Aalto University, Espoo, Finland}
\icmlaffiliation{manchester}{Department of Computer Science, University of Manchester, Manchester, United Kingdom}
\icmlaffiliation{ucl}{Department of Statistical Science, University College London, London, United Kingdom}

\icmlcorrespondingauthor{Ayush Bharti}{ayush.bharti@aalto.fi}

\icmlkeywords{Machine Learning, ICML, Likelihood-free inference, Maximum mean discrepancy}
\vskip 0.3in
]



\printAffiliationsAndNotice{}  

\begin{abstract}
  Likelihood-free inference methods typically make use of a distance between simulated and real data. A common example is the maximum mean discrepancy (MMD), which has previously been used for approximate Bayesian computation, minimum distance estimation, generalised Bayesian inference, and within the nonparametric learning framework. The MMD is commonly estimated at a root-$m$ rate, where $m$ is the number of simulated samples. This can lead to significant computational challenges since a large $m$ is required to obtain an accurate estimate, which is crucial for parameter estimation. In this paper, we propose a novel estimator for the MMD with significantly improved sample complexity. The estimator is particularly well suited for computationally expensive smooth simulators with low- to mid-dimensional inputs. This claim is supported through both theoretical results and an extensive simulation study on benchmark simulators.
\end{abstract}

\section{Introduction}

Many domains of science, medicine and engineering use our mechanistic understanding of real-world phenomena to create simulators that can represent system behaviour in different circumstances. Such simulator-based models define a stochastic procedure that can generate (possibly complex) synthetic data-sets, and are widely used in fields such as population genetics \cite{Beaumont2010}, ecology \cite{Wood2010}, astronomy \cite{Cameron2012, Akeret_2015}, epidemiology \cite{Kypraios2017}, atmospheric contamination \cite{Kopka2016}, radio propagation \cite{Bharti2022}, and agent-based modelling \cite{Jennings1999}. However, the ease of simulating data from the model comes at the cost of an intractable likelihood function, rendering most standard statistical inference methods inapplicable to such models. To solve this issue, a host of \emph{likelihood-free inference} methods have been developed that circumvent the need to evaluate the likelihood or its derivatives, see \citet{Lintusaari2016, Cranmer2020} for an overview.

\begin{figure}
	 \centering
	 \includegraphics[trim={0 10 70 5}, clip,  width = 0.9\linewidth]{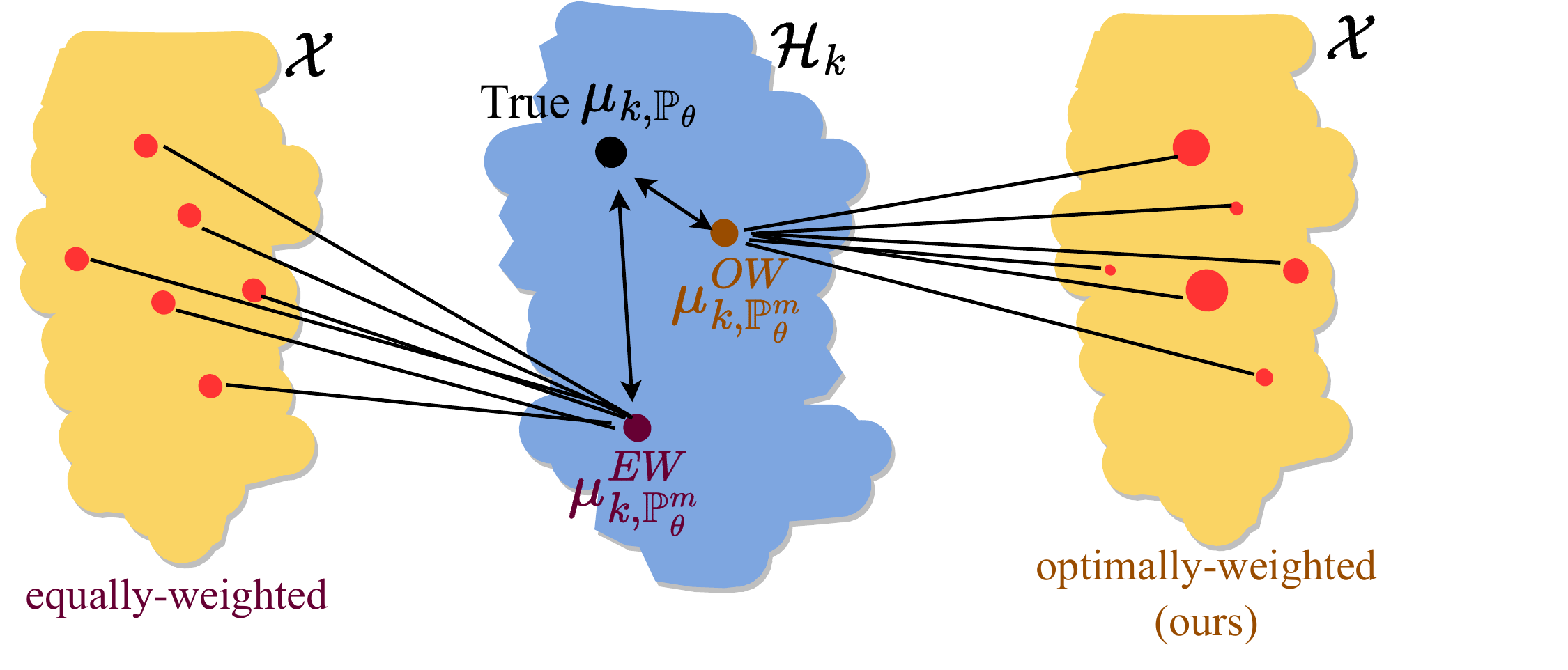}
\caption{
Estimating the MMD requires approximating the embedding $\mu_{k,\P_\theta}$ of the model $\P_\theta$ in a reproducing kernel Hilbert space $\mathcal{H}_k$. The classical approach consists of doing this from $m$ \textcolor{purple}{equally-weighted} independent samples from $\P_\theta$ (denoted \textcolor{purple}{$\mu^{EW}_{k,\P_\theta^m}$}), but we show in this paper that it is possible to improve this estimator by using \textcolor{orange}{optimally-weighted} samples (denoted \textcolor{orange}{$ \mu^{OW}_{k,\P_\theta^m}$}).  \vspace{-3mm}
}
\label{fig:sketch}
\end{figure}

A common approach for likelihood-free inference involves comparing simulated observations from the model and the observed data, with respect to some notion of distance. Accurately estimating the distance is essential for inference but doing so usually requires simulating large amounts of synthetic data. This can be a computational bottleneck, especially for expensive simulators, which in the most extreme cases can take up to hundreds or thousands of CPU hours per simulation; see  \citet{Niederer2019} for an example in cardiac modelling. Other examples include tsunami models based on shallow water equations that require several GPU hours per run \cite{Behrens2015}, runaway electron analysis models for nuclear fusion devices that require 24 CPU hours per run \cite{DREAM}, and models of large-scale wind farms that require 100 CPU hours per run \cite{Kirby2022}. 
Naturally, the discrepancies popular for likelihood-free inference are those which can be efficiently estimated given samples from two distributions, such as the KL divergence \cite{Jiang2018},  Wasserstein distance \cite{Peyre2019, Bernton2019}, Sinkhorn divergence \cite{Genevay2017}, energy distance \cite{Nguyen2020}, classification accuracy \cite{Gutmann2017}, or the maximum mean discrepancy, the latter of which is the topic of this paper. Here, ``efficiently estimated'' is defined in terms of \emph{sample complexity}, which is the rate of convergence at which a statistical distance can be estimated from samples. The faster an estimator converges in the number of samples, the less we need to simulate from the model, and hence, the smaller the computational cost.

We focus on the maximum mean discrepancy (MMD) \cite{Gretton2006, Gretton2012JMLR}, a probability metric which measures the distance between distributions through the distance between their embeddings in a reproducing kernel Hilbert space; see \Cref{fig:sketch} for an illustration. A number of advantages of this distance are commonly put forward in the literature: (i) it has relatively low sample complexity when compared to its alternatives listed above, (ii) it has desirable statistical properties, such as leading to consistent and robust estimators, (iii) it is applicable on any data-type for which a kernel can be defined, and does not require hand-crafted summary statistics. Due to these attractive properties, the MMD has been used in a range of frameworks for likelihood-free inference, including for approximate Bayesian computation (ABC) \cite{Park2015, Mitrovic2016, Kajihara2018, Bharti2022,Legramanti2022}, for minimum distance estimation (MDE) \cite{Briol2019MMD, cherief2021, Alquier2021,Niu2021,Key2021}, for generalised Bayesian inference  \cite{cherief2020mmd,Pacchiardi2021}, for Bayesian nonparametric learning \cite{Dellaporta2022}, and for training generative adversarial networks \cite{Dziugaite2015,Li2015GMMN,Li2017MMDGAN,Binkowski2018}.

In this paper, we do not revisit the question of whether the MMD is the best choice of distance for a particular problem. Instead, we assume that the MMD has been chosen, and focus on constructing estimators with strong sample complexity for this distance. The most common estimators for the MMD are U-statistic or V-statistic estimators, and these have sample complexity of $\mathcal O(m^{-\frac{1}{2}})$, under mild conditions \cite{Briol2019MMD}, where $m$ is the number of samples. In recent work, \citet{Niu2021} showed that this can be improved to $\mathcal O(m^{-1+\epsilon})$ for any $\epsilon>0$ through the use of a V-statistic estimator and randomised quasi-Monte Carlo (RQMC) sampling. This significant improvement does come at the cost of restrictive assumptions --- the simulator must be written in a form where the inputs are uniform random variables, and must satisfy stringent smoothness conditions which are difficult to verify in practice.

In this paper, we propose a novel set of \emph{optimally-weighted} estimators with sample complexity of  $\mathcal O(m^{-\frac{\nu_c}{s} - \frac{1}{2}})$ where $s$ is the dimension of the base space and $\nu_c $ is a parameter depending on the smoothness of the kernel and the simulator. This leads to significantly improved sample complexity against both U- or V-statistic and independent samples for any $\nu_c$, and against RQMC when $\nu_c>s/2$. Additionally, the optimality of the weights guarantees that even if this condition is not satisfied, the order of the sample complexity is still at least as good as that for existing estimators.

The remainder of the paper is structured as follows. \Cref{sec: background} recalls existing estimators for the MMD, and how these are used in likelihood-free inference. \Cref{sec: methodology} presents our estimators, and \Cref{sec:theory} provides a theoretical analysis of their sample complexity. Finally, \Cref{sec:experiments} demonstrates strong empirical performance on a range of simulators, and \Cref{sec:conclusion} discusses future research.

\section{Background}\label{sec: background}

Throughout the paper, $\X$ will denote some set, and $\PX$ will be the set of all Borel probability measures on $\X$.

\vspace{-2mm}

\paragraph{Likelihood-free inference.}
We consider the classic parameter estimation problem, where we assume that we observe some independent and identically distributed (iid) realisations $\{x_i\}_{i=1}^n \subseteq \X$ from some data-generating mechanism $\Q \in \PX$. Given $\{x_i\}_{i=1}^n$ and a  parametric family of distributions $ \{ \P_\theta : \theta \in \Theta\} \subset \PX$ (i.e. the model) with parameter space $\Theta$, we are interested in recovering the parameter value $\theta^* \in \Theta$ such that $\P_{\theta^*}$ is either equal, or in some sense closest, to $\Q$. 

The challenge in likelihood-free inference is that the likelihood associated with $\P_\theta$ is intractable, meaning it cannot be evaluated pointwise. This prevents the use of classical methods such as maximum likelihood estimation or (exact) Bayesian inference. Instead, we assume that we are able to simulate iid realisations from $\P_\theta$, and such models are hence called generative models or simulator-based models. Such models are characterised through their generative process, a pair $(G_\theta,\mathbb{U})$ consisting of a simple distribution $\U$ (such as a multivariate Gaussian or uniform distribution) on a space $\mathcal{U}$ and a map $G_\theta: \mathcal{U} \rightarrow \mathcal{X}$ called the generator or simulator. We will call $\U$ a base measure and $\mathcal{U}$ the base space, and consider $\calU \subset \R^s$ and $\mathcal{X} \subseteq \R^d$.
To sample $y \sim \P_\theta$, one can first sample $u \sim \U$, then apply the generator $y = G_\theta(u)$. To perform parameter estimation for these models, it is common to repeatedly sample simulated data from the model for different parameter values and compare them to $\{x_i\}_{i=1}^n$ using a distance. We now recall the distance which will be the focus of this paper.

\vspace{-2mm}

\paragraph{Maximum mean discrepancy (MMD).} 
Let $\Hk$ be a reproducible kernel Hilbert space (RKHS) associated with the symmetric and positive definite function $k: \X \times \X \rightarrow \R$ \cite{Berlinet2004}, called a reproducing kernel, and denote by $\|\cdot\|_{\Hk}$ and $\langle \cdot,\cdot \rangle_{\Hk}$ the corresponding norm and inner product. Additionally, let $\Pk := \{ \P \in \PX : \int_\X \sqrt{k(x,x)} \P(\text{d}x) < \infty \}$; whenever $k$ is bounded, $\P_k(\X)=\P(\X)$. As illustrated in the sketch in \Cref{fig:sketch}, any distribution $\P \in \Pk$ can be mapped into $\mathcal{H}_k$ via its kernel mean embedding, defined as $\mu_{k,\P} = \int_\X k(\cdot, x) \P(\text{d}x)$. Then, the MMD between $\P$ and $\Q$ is the distance between their embeddings in $\Hk$:
\begin{talign}\label{eq:MMD_withembeddings}
 \text{MMD}_k(\P, \Q) = \Vert \mu_{k,\P} - \mu_{k,\Q} \Vert_{\Hk},
\end{talign}
see \citet{Muandet2017} for a review. Alternatively, the MMD can also be expressed as $\text{MMD}_k (\P, \Q) 
    = \sup_{\|f\|_{\Hk}\leq 1} \left| \int_{\X} f(x) \P(\text{d}x) - \int_{\X} f(x) \Q(\text{d}x) \right|,$
where the supremum is taken over all the functions in the unit-ball of the RKHS $\mathcal{H}_k$. 
Whenever $k$ is a characteristic kernel, the MMD is a probability metric, meaning that $\text{MMD}_k(\P, \Q) = 0$ if and only if $\P = \Q$. This condition is satisfied for kernels including the squared-exponential (SE) $k_\mathrm{SE}(x,y) = \eta \exp(-\|x-y\|^2_2/l^2)$, the Mat\'ern $k_\nu(x,y) = \frac{\eta}{\Gamma(\nu)2^{\nu - 1}} (\frac{\sqrt{2 \nu}}{l} \| x - y \|_2 )^\nu K_\nu(\frac{\sqrt{2 \nu}}{l} \| x-y \|_2)$, where $K_\nu$ is the modified Bessel function of the second kind, and the inverse-multiquadric kernels on $\X = \R^d$ \cite{Sriperumbudur2009}. Mat\'ern kernels are of particular interest: the \emph{order} parameter $\nu$ uniquely determines the smoothness of $\Hk$, and for half-integer orders $\nu \in \{\frac{1}{2}, \frac{3}{2}, \dots\}$, the kernel $k_\nu$ can be written as a product of an exponential and a polynomial of order $\lfloor \nu \rfloor$~\citep{Rasmussen2006}.

Unfortunately, the expression in \eqref{eq:MMD_withembeddings} usually cannot be computed directly since $\mu_{k,\P}$ will not be available in closed form outside of a limited number of $(k,\P)$ pairs. Instead, using the reproducing property (i.e. $f(x) = \langle f, k(\cdot,x)\rangle_{\Hk}$ $\forall f \in \Hk$), we can write
\begin{talign}\label{eq:MMD2_exact}
     \text{MMD}^2_k(\P,\Q) &= \int_{\X} \int_{\X}  k(x,y) \P(\text{d}x) \P(\text{d}y)  \nonumber\\
    & - 2 \int_{\X} \int_{\X}  k(x,y) \P(\text{d}x) \Q(\text{d}y)  \nonumber\\
    & + \int_{\X} \int_{\X}  k(x,y) \Q(\text{d}x) \Q(\text{d}y).
\end{talign}
This expression is convenient to work with as it can be estimated through approximations of the integrals. Let $\{y_i\}_{i=1}^m \sim \P$, $\{x_i\}_{i=1}^n \sim \Q$ and let $\P^m = \frac{1}{m} \sum_{j=1}^m \delta_{y_j}$ and $\Q^n = \frac{1}{n} \sum_{i=1}^n \delta_{x_i}$, where $\delta_{x_i}$ is a Dirac measure at $x_i$. The squared-MMD can be approximated through a V-statistic as
\begin{talign}
    &\text{MMD}_k^2(\P^m,\Q^n) = \frac{1}{m^2} \sum_{i, j = 1}^m k(y_i,y_j) \nonumber\\
    &- \frac{2}{nm} \sum_{i=1}^n \sum_{j=1}^m k(x_i,y_j) +  \frac{1}{n^2} \sum_{i, j = 1}^n k(x_i,x_j).\nonumber
\end{talign}
This is equivalent to approximating $\mu_{k,\P}$ using $\mu^{\text{EW}}_{k,\P^{m}}(x) = \frac{1}{m} \sum_{i=1}^m k(x,x_i)$.
Alternatively, one can use an unbiased U-statistic approximation \cite{Gretton2012JMLR}.
Both of these estimates can be calculated straightforwardly via evaluations of the kernel $k$ at a computational cost~$\mathcal O(m^2+mn+n^2)$.

\vspace{-2mm}

\paragraph{Likelihood-free inference with the MMD.} The MMD has been used within a range of frameworks. 
In a frequentist setting, the MMD was proposed for minimum distance estimation by \citet{Briol2019MMD}:
\begin{talign}
    \hat{\theta}_n = \underset{\theta \in \Theta}{\arg \min}~ \text{MMD}_k^2(\P_\theta,\Q^n).
    \label{eq:MMD_estimator}
\end{talign}
In practice, the minimiser is computed through an optimisation algorithm, which requires evaluations of the squared-MMD or of its gradient. Such evaluations are intractable, but any estimator can be used within a stochastic optimisation algorithm. Similar optimisation problems and stochastic approximations also arise when using the MMD for generative adversarial networks  \cite{Dziugaite2015,Li2015GMMN} and for nonparametric learning  \cite{Dellaporta2022}.

In a Bayesian setting, the MMD has been used to create several pseudo-posteriors by updating a prior distribution $p$ on $\Theta$ using data. For example, the K2-ABC posterior of \citet{Park2015} is a pseudo-posterior of the form:
\begin{talign}
    p_{\mathrm{ABC}}(\theta | x_1\dots, x_n) \propto  \int &\dots \int  \Pi_{j=1}^m \mathbbm{1}_{\{\text{MMD}^2_k(\P_\theta,\Q^n) < \varepsilon\}}(\theta) \nonumber \\ 
    &p(y_j|\theta) p(\theta) \text{d}y_1, \dots, \text{d}y_m. \label{eq:ABC}
\end{talign}
where the indicator function $\mathbbm{1}_{\{A\}}$ is equal to 1 if event $A$ holds.
Here, the MMD is used to determine whether a particular instance of the parametric model is within an $\varepsilon$ distance from the data. The K2-ABC algorithm approximates this pseudo-posterior through sampling of the model $\P_\theta$ which leads to the use of an estimator of the squared-MMD.

Finally, the MMD has also been used for generalised Bayesian inference, where it is used to construct the MMD-Bayes posterior \cite{cherief2020mmd}
\begin{talign*}
    p_{\mathrm{GBI}}(\theta | x_1\dots, x_n) \propto  \exp( - \text{MMD}_k^2(\P_\theta,\Q^n)) p(\theta).
\end{talign*}
Once again, this pseudo-posterior is intractable, but it can be approximated through pseudo-marginal MCMC, in which case an unbiased estimator is used in place of the squared-MMD \cite{Pacchiardi2021}.

\vspace{-2mm}

\paragraph{Sample complexity of MMD estimators.}  \label{sec:complexity_bg}

As highlighted above, the performance of these likelihood-free inference methods relies heavily on how accurately we can estimate the MMD using samples; that is, how fast our estimator approaches $\text{MMD}_k(\P_\theta,\Q)$ as a function of $n$ and $m$, the number of observed and simulated data points, respectively. Let $\widehat{\text{MMD}}_k(\P_\theta^m,\Q^n)$ be any estimator of the MMD based on $m$ simulated data points. Using the triangle inequality, this error can be decomposed as follows: 
\begin{talign}
    &|\text{MMD}_k(\P_\theta,\Q)-\widehat{\text{MMD}}_k(\P_\theta^m,\Q^n)| \nonumber \\
    & \leq  
    |\text{MMD}_k(\P_\theta,\Q)-\text{MMD}_k(\P_\theta,\Q^n)| \nonumber \\
    & \; + |\text{MMD}_k(\P_\theta,\Q^n)- \widehat{\text{MMD}}_k(\P_\theta^m,\Q^n)| \label{eq:error_decomposition}
\end{talign}
where the first term describes the approximation error due to having a finite number of data points $n$, and the second term describes the error due to a finite number $m$ of simulator evaluations. To understand the behaviour of the first term, we can use the following sample complexity result for the V-statistic. The proof is a direct application of the triangle inequality together with Lemma 1 in \cite{Briol2019MMD}. 
\begin{theorem} Suppose that $\sup_{x,x'} k(x,x') <\infty$ and let $\Q^n$ consist of $n$ iid realisations from $\Q \in \Pk$. Then, for any $\P \in \Pk$, we have with high probability 
\begin{talign*}
    \left| \text{MMD}_k(\P,\Q) - \text{MMD}_k(\P,\Q^n)\right| =  \mathcal O(n^{-\frac{1}{2}}).
\end{talign*}
\end{theorem}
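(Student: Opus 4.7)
The plan is to reduce the claim to a standard concentration bound for the MMD between $\Q$ and its empirical measure $\Q^n$. The key observation is that $\text{MMD}_k$ is an RKHS norm distance, $\text{MMD}_k(\P,\Q) = \|\mu_{k,\P} - \mu_{k,\Q}\|_{\Hk}$, so it satisfies the triangle inequality. Applying the reverse triangle inequality gives
\begin{talign*}
\bigl|\text{MMD}_k(\P,\Q) - \text{MMD}_k(\P,\Q^n)\bigr| \leq \text{MMD}_k(\Q,\Q^n),
\end{talign*}
and crucially the right-hand side no longer depends on $\P$. So the first step is simply to pass the $\P$-dependence away using this inequality.

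The second step is to invoke Lemma~1 of \citet{Briol2019MMD} to bound $\text{MMD}_k(\Q,\Q^n)$ at rate $n^{-1/2}$ with high probability. The hypothesis $\sup_{x,x'} k(x,x') < \infty$ together with $\Q \in \Pk$ is precisely what is needed for this lemma. The argument underlying it has two ingredients that I would sketch as follows if I were proving it from scratch: (i) expanding the squared MMD by the kernel trick and using the fact that $\mu_{k,\Q^n} = \tfrac{1}{n}\sum_{i=1}^n k(\cdot,x_i)$ is an unbiased estimator of $\mu_{k,\Q}$ yields $\mathbb{E}[\text{MMD}_k^2(\Q,\Q^n)] \leq K/n$ where $K = \sup_{x,x'} k(x,x')$, from which $\mathbb{E}[\text{MMD}_k(\Q,\Q^n)] \leq \sqrt{K/n}$ by Jensen's inequality; and (ii) a McDiarmid bounded-differences argument applied to the map $(x_1,\dots,x_n) \mapsto \text{MMD}_k(\Q,\tfrac{1}{n}\sum_i \delta_{x_i})$, whose bounded-differences constant is of order $\sqrt{K}/n$, which converts the expectation bound into a high-probability bound at the same rate.

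Combining the two steps gives the claimed $\mathcal O(n^{-1/2})$ bound. I do not expect any serious obstacle here: the triangle-inequality step is immediate from the RKHS structure, and the concentration bound is a textbook application of McDiarmid's inequality. The substance of the theorem is not the proof but rather that it pins down the first term of the decomposition in \eqref{eq:error_decomposition}, so that the rest of the paper can focus exclusively on reducing the second (simulator-side) term below the $\mathcal O(m^{-1/2})$ benchmark.
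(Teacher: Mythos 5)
Your proof is correct and matches the paper's own argument: the paper states that the result is "a direct application of the triangle inequality together with Lemma 1 in \cite{Briol2019MMD}", which is exactly your two-step reduction (reverse triangle inequality to eliminate $\P$, then the standard Jensen-plus-McDiarmid concentration bound for $\text{MMD}_k(\Q,\Q^n)$). Your additional sketch of the ingredients inside Lemma~1 is accurate but not needed, since the paper simply cites that lemma.
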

When $\widehat{\text{MMD}}_k(\P_\theta^m,\Q^n)$ is also a V-statistic approximation, both terms in \eqref{eq:error_decomposition} can be tackled with this result and the overall error is of size $\mathcal O(n^{-\frac{1}{2}}+m^{-\frac{1}{2}})$. This shows that we should take $m = \mathcal O(n)$ to ensure a good enough approximation of the MMD. Though this rate has the advantage of being independent of the dimension of $\X$, it is relatively slow in $m$. We therefore require a large number of simulated data points, which can be computationally expensive.

\citet{Niu2021} recently proposed an alternate approach based on randomised quasi-Monte Carlo (RQMC) \cite{Dick2013} samples within a V-statistic. Using stronger assumptions on $\U, k$ and $G_\theta$, they are able to obtain an estimator with improved sample complexity. We now state their assumptions and result below.

For $f:\X \to \R$ and a multi-index $\alpha = (\alpha_1, \dots \alpha_d) \in \mathbb{N}^d$, we denote the $|\alpha|=\sum_{i=1}^d \alpha_i$ order partial derivative $\partial^\alpha f=\partial^{|\alpha|}f/\partial^{\alpha_1} x_1 \ldots \partial^{\alpha_d}x_d$ by $\partial^\alpha f$. We say $f \in C^{m}(\X)$, for $m \in \mathbb{N}$, if $\partial^\alpha f$ exists and is continuous for any $|\alpha| \in [0, m]$. For two-variable $f:\X \times \X \to \R$, $\partial^{\alpha,\alpha} f$ is the $\alpha$-partial derivative in each variable. The norm $\|\cdot\|_{\L^p(\X)}$ for $f:\X \to \R$ is defined as $\|f\|_{\L^p(\X)} = (\int_\X |f(x)|^p \d x)^{1/p}$. The notation $a_{v}:b_{-v}$ represents a point $u\in[a,b]^{s}$ with $u_{j}=a_{j}$ for $j\in v$, and $u_{j}=b_{j}$ for $j\notin v$.

\begin{oldassumption}
\label{as:oldpoints}
    The base space $\calU = [0,1]^s$, the base measure $\U$ is uniform on $\calU$, and $\{u_i\}_{i=1}^m \subset \calU$ forms an RQMC point set.
\end{oldassumption}

\begin{oldassumption}
\label{as:oldgenerator}
    The generator $G_\theta:[0,1]^s \rightarrow \X$ is s.t.: 
\vspace{-2ex}
\begin{enumerate}
  \setlength{\itemsep}{1pt}
  \setlength{\parskip}{0pt}
  \setlength{\parsep}{0pt}
     \item $\partial^{(1,\ldots,1)} G_{\theta,j} \in C([0,1]^s)$ for all $j = 1, \ldots, d$. 
    \item for all $j = 1, \ldots, d$ and  $v \in \{0,1\}^s\setminus(0,\ldots,0)$, there is a $p_j \in [1,\infty]$, $\sum_{j=1}^d p_j^{-1} \leq 1$, such that for $g(\cdot)=\partial^{v} G_{\theta,j}(\cdot:1_{-v})$ it holds that $\|g\|_{\L^{p_j}([0,1]^{|v|})} < \infty$.
\end{enumerate}
\end{oldassumption}
\begin{oldassumption}
\label{as:oldkernels}
    For any $x \in \X$, $k(x, \cdot) \in C^s(\X)$ and $\forall t\in\mathbb{N}^{d},|t|\leq s,\sup_{x\in\X}\partial^{t,t}k(x,x)<C_k$ where $C_k$ is some universal constant depending only on $k$.
\end{oldassumption}
\begin{theorem}
Under~\Cref{as:oldpoints,as:oldgenerator,as:oldkernels} and $\Q \in \Pk$,
\begin{talign*}
    \left| \text{MMD}_k(\P_\theta,\Q) - \text{MMD}_k(\P_\theta^m,\Q)\right| =  \mathcal O(m^{-1+\epsilon}).
\end{talign*}
\end{theorem}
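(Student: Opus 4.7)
The plan is to reduce the sampling error to a cubature error for $G_\theta$-pulled-back test functions and then invoke RQMC smoothness theory. Because $\Q$ appears symmetrically in both MMDs, the reverse triangle inequality for the RKHS norm yields
\begin{talign*}
\left|\text{MMD}_k(\P_\theta,\Q) - \text{MMD}_k(\P_\theta^m,\Q)\right| &\leq \|\mu_{k,\P_\theta}-\mu_{k,\P_\theta^m}\|_{\Hk} \\
&= \text{MMD}_k(\P_\theta, \P_\theta^m),
\end{talign*}
so it suffices to bound the right-hand side. Using the reproducing property $f(x)=\langle f, k(\cdot,x)\rangle_{\Hk}$ together with the generative form $y=G_\theta(u)$, $u\sim\U$,
\begin{talign*}
\text{MMD}_k(\P_\theta, \P_\theta^m) = \sup_{\|f\|_{\Hk}\leq 1}\!\left|\int_{[0,1]^s}\!\! (f\circ G_\theta)(u)\,\d u - \frac{1}{m}\sum_{i=1}^m (f\circ G_\theta)(u_i)\right|,
\end{talign*}
which identifies the MMD as the worst-case error of the equally-weighted cubature rule $\{u_i\}_{i=1}^m$ over $\{f\circ G_\theta:\|f\|_{\Hk}\leq 1\}$ on the base space $[0,1]^s$.

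The next step is to show that $f\circ G_\theta$ has Hardy--Krause variation bounded by a constant independent of $f$ in the unit ball of $\Hk$. By the reproducing property $\partial^\alpha f(x)=\langle f,\partial^{0,\alpha}k(\cdot,x)\rangle_{\Hk}$ and Cauchy--Schwarz, \Cref{as:oldkernels} gives $\sup_{x\in\X}|\partial^\alpha f(x)|\leq C_k^{1/2}\|f\|_{\Hk}$ for any multi-index $|\alpha|\leq s$, so $f$ is uniformly $C^s$-bounded on the unit ball. Applying the multivariate Fa\`a di Bruno formula, $\partial^{(1,\ldots,1)}(f\circ G_\theta)$ decomposes into a finite sum of products of partial derivatives $\partial^v G_{\theta,j}$ indexed by $v\in\{0,1\}^s$ and $j\in\{1,\ldots,d\}$, each multiplied by a bounded derivative of $f$ of order at most $s$. \Cref{as:oldgenerator} controls each such factor in $\L^{p_j}([0,1]^{|v|})$ with $\sum_j p_j^{-1}\leq 1$, so H\"older's inequality places the whole product in $\L^1([0,1]^s)$ with a constant independent of $f$. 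Parallel arguments along the lower-dimensional faces $1_{-v}$ handle the remaining contributions in the Vitali--Hardy--Krause decomposition, yielding a uniform bound on the variation of $f\circ G_\theta$.

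With this uniform Hardy--Krause bound and \Cref{as:oldpoints} providing an RQMC point set, the Koksma--Hlawka inequality gives a per-$f$ cubature error of $\mathcal O(m^{-1}(\log m)^s) = \mathcal O(m^{-1+\epsilon})$, uniformly over the unit ball of $\Hk$. Taking the supremum delivers $\text{MMD}_k(\P_\theta,\P_\theta^m)=\mathcal O(m^{-1+\epsilon})$, and combining with the reduction above completes the proof. The main obstacle is the Fa\`a di Bruno bookkeeping in the second step: one has to check that each term produced in the expansion of the mixed partial $\partial^{(1,\ldots,1)}(f\circ G_\theta)$ is of exactly the form $\partial^v G_{\theta,j}$ covered by \Cref{as:oldgenerator}, and that the RKHS derivative bound on $f$ propagates uniformly through these multilinear products so that no hidden dependence on $\|f\|_{\Hk}$ beyond the unit ball creeps in.
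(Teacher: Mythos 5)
This theorem is not proved in the paper: it restates Theorem~1 of \citet{Niu2021}, from whose work \Cref{as:oldpoints,as:oldgenerator,as:oldkernels} are also taken verbatim, so there is no in-paper proof to compare against. That said, your sketch follows the route one would expect from that source and that the surrounding exposition hints at: reverse triangle inequality to reduce to $\text{MMD}_k(\P_\theta,\P_\theta^m)$; identification of that quantity as a worst-case cubature error for the pushed-back family $\{f\circ G_\theta : \|f\|_{\Hk}\leq 1\}$ on $[0,1]^s$; a uniform Hardy--Krause variation bound on that family via RKHS derivative bounds and the conditions on $G_\theta$; and Koksma--Hlawka combined with RQMC star-discrepancy rates. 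You also correctly flag the Fa\`a di Bruno bookkeeping as the technical crux.

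One point that needs more care than your sketch allows. The multivariate Fa\`a di Bruno expansion of $\partial^{(1,\ldots,1)}(f\circ G_\theta)$ produces products $\prod_{B}\partial^{w_B}G_{\theta,j_B}$ over partitions of $\{1,\ldots,s\}$ into blocks $B$, and the output indices $j_B$ need \emph{not} be distinct across blocks. \Cref{as:oldgenerator} supplies one H\"older exponent $p_j$ per output dimension with $\sum_{j=1}^d p_j^{-1}\le 1$, which directly yields $\L^1$-integrability only for products that use each $j$ at most once. When a partition has more than $d$ blocks (unavoidable when $s>d$, which is the case for most of the paper's benchmarks, e.g.\ the bivariate Beta and Lotka--Volterra models) or otherwise repeats a component, the naive application of H\"older with these exponents does not close. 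The argument in \citet{Niu2021} must therefore do something more careful at this point --- e.g.\ exploit the continuity of $\partial^{(1,\ldots,1)}G_{\theta,j}$ from part~1 of \Cref{as:oldgenerator}, or use a different face-by-face decomposition of the Hardy--Krause variation --- and your sketch should not be regarded as complete until this case is handled explicitly.
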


In this case, the second term in \eqref{eq:error_decomposition} decreases at a faster rate than the first term and the overall error decreases as $\mathcal O(n^{-\frac{1}{2}}+m^{-1+\epsilon})$ for any $\epsilon>0$. As a result, (ignoring log-terms) we can take $m = \mathcal O(n^{-\frac{1}{2}})$, meaning a much smaller number of simulations are required.
However, the technical conditions required are either very restrictive ($\U$ must be uniform), or will be difficult to verify in practice (the conditions on $G_\theta$ are not very interpretable and difficult to verify). Hence, the range of cases where RQMC can be applied is limited. Additionally, when both $k$ and $G_\theta$ are smooth, faster rates can  be obtained using our optimally-weighted estimator presented in the next section.

\section{Optimally-Weighted Estimators}\label{sec: methodology}

We now present our estimator, which weighs simulated data.
To that end, we denote the empirical measure of the simulated data as $\P_\theta^{m,w} = \sum_{i=1}^m w_i \delta_{y_i}$ where $y_i = G_\theta(u_i)$, and $w_i \in \R$ is the weight associated with $y_i \in \X$ for all $i \in \{1,\ldots,m\}$.  Assuming for a moment that these weights are known, then we have 
\begin{talign} \label{eq:optimally_weighted_MMD2}
    &\text{MMD}^2_k(\P_\theta^{m,w},\Q^n) = \sum_{i, j = 1}^m w_i w_j k(y_i,y_j) \\
    &- \frac{2}{n} \sum_{i=1}^n \sum_{j=1}^m w_j k(x_i,y_j) + \frac{1}{n^2} \sum_{i, j = 1}^n k(x_i,x_j).\nonumber
\end{talign}
Clearly, $w_i=1/m$ for all $i$ recovers the V-statistic approximation of the squared-MMD, but here we have additional flexibility in how to select these weights and not impose any constraints on them beyond being real-valued. To identify our choice of weights, we will make use of a tight upper bound on the approximation error. 
\begin{theorem}\label{thm:optimal_weights}
Let $c:\calU \times \calU \rightarrow \R$ be a reproducing kernel such that $k(x, \cdot) \circ G_\theta \in \mathcal{H}_c$ and $\Q \in \Pk$.
Then, $\exists K>0$ independent of $\{u_i,y_i,w_i\}_{i=1}^m$ but dependent on $c, k$ and $G_\theta$ such that:
\begin{talign*}
    | \text{MMD}_k(\P_\theta,\Q) - \text{MMD}_k(\P_\theta^{m,w},\Q)| \\ \leq K \times \text{MMD}_c \left(\mathbb{U}, \sum_{i=1}^m w_i \delta_{u_i}\right),
\end{talign*}
Additionally, the weights minimising this upper bound can be obtained in closed-form; i.e. 
\begin{talign} 
    w^* & = \underset{ w \in \R^m}{\arg \min} ~ \text{MMD}_c\left(\U,\sum_{i=1}^m w_i \delta_{u_i}\right)\nonumber  \\
    & =  c(U,U)^{-1}  z(U)\label{eq:optimal_weights}
\end{talign}
 where $z(U)_i = \mu_{c,\U}(u_i) =  \int_{\calU} c(u_i,u) \U(\text{d}u)$ is the kernel mean embedding of $\U$ in the RKHS $\mathcal{H}_c$ and $(c(U,U))_{ij} = c(u_i,u_j)$ for all $i,j \in \{1,\ldots,m\}$.
\end{theorem}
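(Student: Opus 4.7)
My plan is to first reduce the theorem to a statement about the kernel mean embedding error in $\Hc$ via the reverse triangle inequality, and then solve a quadratic optimisation for the weights. First I would write
\begin{talign*}
|\text{MMD}_k(\P_\theta, \Q) - \text{MMD}_k(\P_\theta^{m,w}, \Q)| \leq \text{MMD}_k(\P_\theta, \P_\theta^{m,w})
\end{talign*}
by the reverse triangle inequality in $\Hk$ (both MMDs are $\Hk$-norms of $\mu_{k,\cdot} - \mu_{k,\Q}$), and rewrite the right-hand side in its dual form as a supremum over the unit ball of $\Hk$. A change of variable through $G_\theta$ would then turn it into
\begin{talign*}
\sup_{\|f\|_{\Hk} \leq 1} \left| \int (f \circ G_\theta)\, \d\U - \sum_{i=1}^m w_i (f \circ G_\theta)(u_i) \right|,
\end{talign*}
which puts everything onto the base space $\calU$.

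Next I would use the assumption $k(x,\cdot) \circ G_\theta \in \Hc$ for all $x \in \X$, extended by linearity and a density argument, to conclude that $f \circ G_\theta \in \Hc$ for every $f \in \Hk$. Applying the duality bound $|\int g\, \d(\U - \eta)| \leq \|g\|_{\Hc}\, \text{MMD}_c(\U, \eta)$ (immediate from the definition of $\text{MMD}_c$) with $g = f \circ G_\theta$ and $\eta = \sum_i w_i \delta_{u_i}$, and then taking the sup over $\|f\|_{\Hk} \leq 1$, would yield the desired bound with
\begin{talign*}
K = \sup_{\|f\|_{\Hk} \leq 1} \|f \circ G_\theta\|_{\Hc},
\end{talign*}
i.e.\ the operator norm of the pullback map $T: f \mapsto f \circ G_\theta$ from $\Hk$ to $\Hc$; this $K$ depends on $k$, $c$, $G_\theta$ but not on the samples or weights, as required.

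For the closed-form weights, I would expand the squared $\text{MMD}_c$ as a quadratic form in $w$ using the reproducing property of $c$:
\begin{talign*}
\text{MMD}_c^2\!\left(\U, \sum_i w_i \delta_{u_i}\right) = \|\mu_{c,\U}\|_{\Hc}^2 - 2 w^\top z(U) + w^\top c(U,U)\, w.
\end{talign*}
This is strictly convex in $w$ when $c(U,U)$ is positive definite, so setting the gradient to zero immediately yields $w^* = c(U,U)^{-1} z(U)$.

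The delicate step will be ensuring $K < \infty$, i.e.\ that the pullback operator $T$ is bounded. The assumption gives $T$ explicitly on the dense linear span of $\{k(\cdot, x) : x \in \X\}$ via $T(k(\cdot, x)) = k(x,\cdot) \circ G_\theta$; boundedness on all of $\Hk$ then follows from a closed-graph argument, since if $f_n \to f$ in $\Hk$ and $T f_n \to g$ in $\Hc$, pointwise convergence from both sides forces $g(u) = f(G_\theta(u)) = (Tf)(u)$. This operator-theoretic verification is the only non-elementary step; the remaining reproducing-kernel manipulations and the convex optimisation for $w^*$ are routine.
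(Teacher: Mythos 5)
Your first step (reverse triangle inequality), your change of variable onto $\calU$ via the pushforward identity, and your quadratic-form derivation of $w^*$ all match the paper's argument. The divergence is in how the constant $K$ is produced, and your route has a gap. You want to apply the closed graph theorem to the pullback map $T:f\mapsto f\circ G_\theta$, but the closed graph theorem requires $T$ to be defined on \emph{all} of $\Hk$, not merely on the dense span of $\{k(\cdot,x):x\in\X\}$. Your earlier assertion that ``linearity and a density argument'' gives $f\circ G_\theta\in\Hc$ for every $f\in\Hk$ is circular: if $f_n\to f$ in $\Hk$, you cannot conclude $f_n\circ G_\theta$ converges in $\Hc$ (hence that $f\circ G_\theta\in\Hc$) without already knowing $T$ is bounded on the span. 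The pointwise-limit argument you sketch establishes only that the graph of $T$ is closed, i.e.\ $T$ is closable; a densely-defined closable operator need not be bounded (e.g.\ $Te_n=ne_n$ on $\ell^2$). So the step from ``$T$ is defined on a dense subspace and has a closable graph'' to ``$T$ is bounded on $\Hk$'' does not go through.

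The paper avoids constructing an operator on $\Hk$ at all. After the pushforward identity $\text{MMD}_k(\P_\theta,\P_\theta^{m,w})=\text{MMD}_{c_\theta}(\U,\sum_i w_i\delta_{u_i})$ with $c_\theta(u,u')=k(G_\theta(u),G_\theta(u'))$, it notes that the hypothesis gives $\Hct\subseteq\Hc$ and invokes Aronszajn's inclusion theorem \citep[Theorem I.13.IV]{aronszajn1950theory} to obtain a constant $K$ with $\|f\|_{\Hc}\leq K\|f\|_{\Hct}$ for all $f\in\Hct$; the bound $\text{MMD}_{c_\theta}\leq K\,\text{MMD}_c$ then follows by reparametrising the unit ball in the integral-probability-metric form. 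Your construction can in principle be repaired by factoring $T$ through the pullback RKHS $\Hct$ (the pullback map $\Hk\to\Hct$ is automatically a contraction, and the inclusion $\Hct\hookrightarrow\Hc$ is bounded precisely by the Aronszajn theorem), but then you have rederived the paper's route rather than an independent closed-graph argument. I'd recommend replacing the closed-graph paragraph with the $\Hct$ factorisation, or citing the RKHS inclusion theorem directly.
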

Our \textit{optimally-weighted (OW) estimator} is the weighted estimator in \eqref{eq:optimally_weighted_MMD2}
with the optimal weights in \eqref{eq:optimal_weights}. This corresponds to estimating $\mu_{k,\P_\theta}$ with a weighted approximation $\mu^{\text{OW}}_{k,\P^{m}_\theta} = \sum_{i=1}^n w^*_i k(x,x_i) = \sum_{i=1}^n w^*_i k(x,G_\theta(u_i))$ where $w_i^*$ represents the importance of $x_i = G_\theta(u_i)$ for our approximation. 
To calculate these weights, we need to evaluate $\mu_{c,\U}$ pointwise in closed-form. The key insight is that although $\mu_{k,\P_\theta}$ will usually not be available in closed-form,  the same is not true for $\mu_{c,\U}$. This is because, unlike $\P_\theta$, $\U$ is usually a simple distribution such as a uniform, Gaussian, Gamma or Poisson. Additionally, we have full flexibility in our choice of $c$ so long as $k(x, \cdot) \circ G_\theta \in \mathcal{H}_c$. We refer to Table~1 in \cite{Briol2019PI} or the \texttt{ProbNum} Python package \cite{Wenger2021} for a list of known closed-form kernel embeddings. 

The proof of this result (see \Cref{app:proofs_optimal_weights}) relies on two inequalities which make the overall result tight. The first  is a reverse triangle inequality, which allows us to remove dependence on the true data-generating distribution $\Q$, a quantity which is always unknown to us. In this sense, the bound is ``worst-case optimal'' over $\Q$, a desirable property for likelihood-free inference. The second inequality allows us to use the kernel $c$ instead of $c_\theta(u,v) = k(G_\theta(u), G_\theta(v))$ to construct our weights. Of course, this bound is attained if $c=c_\theta$ and is therefore tight. 
However, in practice this choice will often be infeasible due to lack of closed-form kernel embeddings $\mu_{c_\theta,\mathbb{U}}$. We therefore choose $c$ such that the RKHS it induces contains the RKHS induced by $c_\theta$. At a high-level, the smaller the gap between these spaces, the better the bound will be. This choice of $c$ will be explored further through theory (in \Cref{sec:theory}) and experiments (in \Cref{sec:experiments}).

\vspace{-2mm}

\paragraph{Related methods.} The optimal weights in \Cref{thm:optimal_weights} are equivalent to Bayesian quadrature (BQ) weights \cite{Diaconis1988,OHagan1991,Rasmussen2003,Briol2019PI}. BQ is a method for numerical integration based on Gaussian process regression (in our case with prior mean zero and prior covariance function $c$). We can therefore think of our estimator as performing BQ to approximate all integrals against $\P$ in \eqref{eq:MMD2_exact}. This interpretation is helpful for selecting $c$ --- the kernel should be chosen so that the corresponding Gaussian process is a good prior for the integrands in \eqref{eq:MMD2_exact}. This correspondence will also help us derive sample complexity results in the next section.

Our estimator minimises $\text{MMD}_c \left(\mathbb{U}, \sum_{i=1}^m w_i \delta_{u_i}\right)$ over the choice of weights, but we also have flexibility over the choice of $\{u_i\}_{i=1}^m$. Unfortunately, this optimisation cannot be solved in closed-form, and is in fact usually not convex. There is a wide range of methods which have been proposed to do point selection so as to minimise an MMD with equally-weighted points. Kernel thinning \cite{Dwivedi2021}, support points \cite{Mak2018} and Stein thinning \cite{Riabiz2020} are methods based on the MMD to subsample points given a large dataset. Kernel herding \cite{Chen2010,Bach2012} and Stein points \cite{Chen2018,Chen2019} are sequential point selection methods which use an MMD as objective. In addition, similar point selection methods have also been proposed for BQ \cite{Gunter2014,Briol2015,Bardenet2019} and these are therefore closest to our OW setting.

\begin{table*}[t!]
\centering
\caption{Average and standard deviation (in parenthesis) of estimated MMD$^2 ~(\times 10^{-3})$ between $\P_{\theta}^m$ and $\P_{\theta}^n $  computed over 100 runs for the V-statistic and our optimally-weighted (OW) estimator. Settings: $n = 10,000$, $m = 256$. }
\resizebox{0.98\textwidth}{!}{%
\begin{tabular}{@{}c  lll | ll | ll@{}}
\toprule
\multicolumn{1}{c}{Model} & $s$ & $d$ & References & IID V-stat & IID OW (ours) & RQMC V-stat & RQMC OW (ours)\\ \midrule
g-and-k                   & 1                            & 1                                 & \cite{Bharti22a,Niu2021}                                              & 2.25 \textcolor{gray}{(1.52) }                                                                                & \textbf{0.086} \textcolor{gray}{(0.049)}                                                                           & 0.060 \textcolor{gray}{(0.037)}                                                                                & \textbf{0.059} \textcolor{gray}{(0.037) }                                                                         \\
Two moons               & 2         & 2           & \cite{Lueckmann2021,Wiqvist2021}  & 2.36 \textcolor{gray}{(1.94)}                      & \textbf{0.057} \textcolor{gray}{(0.054)}                  & 0.056 \textcolor{gray}{(0.044)}                  & \textbf{0.055} \textcolor{gray}{(0.044)}         \\
Bivariate Beta            & 5                                                                                        & 2                                &    \cite{Nguyen2020,Niu2021}                                            & 2.13 \textcolor{gray}{(1.17)}                                                                                 & \textbf{0.555} \textcolor{gray}{(0.227)}                                                                           & 0.222 \textcolor{gray}{(0.111)}                                                                                & \textbf{0.193} \textcolor{gray}{(0.088)}                                                                            \\
MA(2)        & 12                                                                                       & 10                                            &    \cite{Marin2011,Nguyen2020}                                & 2.42 \textcolor{gray}{(0.80)}                                                                                 & \textbf{0.705} \textcolor{gray}{ (0.107) }                                                                           & 0.381 \textcolor{gray}{(0.054) }                                                                               & \textbf{0.322} \textcolor{gray}{(0.052) }                                                                           \\
M/G/1 queue            & 10                                                                                       & 5                                &                                       \cite{Pacchiardi2021, Jiang2018}          & 2.52 \textcolor{gray}{(1.19) }                                                                               & \textbf{1.71} \textcolor{gray}{(0.568)}                                                                            & \textbf{0.595} \textcolor{gray}{(0.134)}                                                                                    & 0.646 \textcolor{gray}{(0.202)}                                                                         \\
Lotka-Volterra            & 600                                                                                      & 2                                &   \cite{Briol2019MMD, Wiqvist2021}                                              & 2.13 \textcolor{gray}{(1.10)}                                                                                & \textbf{2.04} \textcolor{gray}{(0.956)}                                                                            & 1.44 \textcolor{gray}{(0.955)}                                                                                           & \textbf{1.42} \textcolor{gray}{(0.942)}                                                                                       \\ \bottomrule
\end{tabular}
}
\label{tab:MMDerror}
\end{table*}

\section{Theoretical Guarantees} \label{sec:theory}

\paragraph{Sample complexity.} The following theorem establishes a sample complexity of $\mathcal O(m^{-\frac{\nu_c}{s} - \frac{1}{2}})$ for our optimally-weighted estimator, where $\nu_c$ is a parameter depending on the smoothness of $k$ and $G_\theta$. We achieve a better rate than RQMC under milder conditions, as discussed below.

\begin{assumption}
\label{as:points}
    The base space $\calU \subset \R^s$ is bounded, open, and convex, the data space $\X$ is the entire $\R^d$ or is bounded, open, and convex. The base measure $\U$ has a density $f_\U: \calU \to [C_\U, C_\U']$ for some $C_\U$, $C_\U'>0$, and $\P_\theta$ has a density bounded above. 
    The point set $\{u_i\}_{i=1}^m \subset \calU$ has a fill distance of asymptotics $h_m = \mathcal O(m^{-\frac{1}{s}})$, where $h_m = \sup_{u \in \calU} \min_{i \in [1, m]} \|u - u_i\|_2$.
\end{assumption}

Our assumptions on  $\calU$ and $\U$ are milder than those of~\Cref{as:oldpoints}, which requires $\U$ to be uniform. The assumptions on $\X$ and $\P_\theta$ are likely to hold for simulators in practice. 
We replace the requirement that the point set $\{u_i\}_{i=1}^m$ is RQMC with a milder assumption on the fill distance, which quantifies how far any point in $\calU$ can get from the set $\{u_i\}_{i=1}^m$. The fill distance asymptotics is a standard assumption that ensures the coverage of $\calU$; for example, it holds for regular grids, and in expectation for independent samples. For further examples of point sets that guarantee small fill distance, see \citet{Wynne2020}. 

\begin{assumption}
\label{as:generator}
    The generator is a map $G_\theta:\calU \rightarrow \X$ such that for some integer $l>s/2$, any $j \in [1, d]$ and any multi-index $\alpha \in \mathbb{N}^d$ of size $|\alpha|\leq l$, the partial derivative $\partial^\alpha G_{\theta,j}$ exists and is bounded from above.
\end{assumption}
Assumption \Cref{as:generator} is more interpretable and easier to check than~\Cref{as:oldgenerator} (specifically part 2) as it just requires knowing how many derivatives $G_\theta$ has.
As stated in~\citet{Niu2021}, a simpler condition that implies~\Cref{as:oldgenerator} needs $G_\theta$ to be smooth up to the order $l \geq s$, which rules out the standard choices of $\nu \in \{\frac{1}{2}, \frac{3}{2}, \frac{5}{2}\}$ for large enough $s$. In contrast, we only ask that $l > s/2$. 

\begin{assumption}
\label{as:kernels}
    $k$ is a Mat\'ern kernel on $\X$ of order $\nu_k$ such that $\lfloor \nu_k + d/2 \rfloor > s/2$, or an SE kernel, and $c$ is a Mat\'ern kernel on $\calU$ of order $\nu_c \leq \min(\lfloor \nu_k + d/2 \rfloor, l)$.
\end{assumption}

\Cref{as:kernels} places less restrictions on the choice of $k$ than \Cref{as:oldkernels}. Although both allow for $k$ to be the SE kernel, as a corollary of the Sobolev embedding theorem~\citep[Theorem 4.12]{adams2003sobolev},~\Cref{as:oldkernels} only holds for a Mat\'ern $k$ if $\lceil \nu_k \rceil \geq s + 1$ (i.e. smooth $k$), while our lower bound on $\nu_k$ is much less restrictive. The conditions on $c$ are needed to ensure $k(x, \cdot) \circ G_\theta \in \Hc$. Note that these could be weakened using the work of \cite{Kanagawa2017,Teckentrup2020,Wynne2020}, but at the expense of more restrictive conditions on $\{u_i\}_{i=1}^m$ in \Cref{as:points}.
\begin{theorem}\label{thm:rate_of_convergence}
Under \Cref{as:points,as:generator,as:kernels}, $k(x, \cdot) \circ G_\theta \in \Hc$ holds, and for any  and $\Q \in \Pk$:
\begin{talign*}
    \left| \text{MMD}_k(\P_\theta,\Q) - \text{MMD}_k(\P_\theta^{m,w},\Q)\right| = \mathcal O(m^{-\frac{\nu_c}{s} - \frac{1}{2}}). 
\end{talign*}
\end{theorem}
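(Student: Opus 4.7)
My approach is to combine \Cref{thm:optimal_weights} with standard scattered-data approximation theory for Mat\'ern RKHSs, using the fill distance assumption to convert a kernel-interpolation bound into a sample complexity.

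First, I would apply \Cref{thm:optimal_weights} to reduce bounding the left-hand side to bounding $K \cdot \text{MMD}_c(\U, \sum_{i=1}^m w_i^* \delta_{u_i})$, after verifying its hypothesis $k(x, \cdot) \circ G_\theta \in \Hc$. When $k$ is a Mat\'ern kernel of order $\nu_k$, its RKHS is norm-equivalent to the Sobolev space $H^{\nu_k + d/2}(\X)$. Combined with \Cref{as:generator} and a Fa\`a di Bruno chain-rule argument, the composition $k(x, \cdot) \circ G_\theta$ lies in $H^{\min(\nu_k + d/2,\, l)}(\calU)$, which embeds continuously into $\Hc = H^{\nu_c + s/2}(\calU)$ precisely under the constraint $\nu_c \leq \min(\lfloor \nu_k + d/2 \rfloor, l)$ from \Cref{as:kernels}. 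The SE case reduces to the Mat\'ern case of arbitrary order.

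Second, by the reproducing property, $\text{MMD}_c(\U, \sum_i w_i \delta_{u_i}) = \|\mu_{c,\U} - \sum_i w_i c(\cdot, u_i)\|_{\Hc}$, so by \Cref{thm:optimal_weights} the optimal weights $w^*$ realise the best approximation of the kernel mean $\mu_{c,\U}$ in $\mathrm{span}\{c(\cdot, u_i)\}_{i=1}^m$. This is exactly the worst-case error of Bayesian quadrature for integrating $\Hc$-functions against $\U$. I would then invoke scattered-data approximation results: under \Cref{as:points}, the Wendland-type kernel-interpolation bound (see e.g.\ \citet{Wynne2020}) applied to $\mu_{c,\U}$ viewed as an element of $\Hc$ yields a rate of $\mathcal O(h_m^{\nu_c})$ in the $\Hc$ norm. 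An additional factor of $h_m^{s/2}$ is then recovered via a ``doubling trick'': since $\mu_{c,\U}$ is a convolution of $c$ with the bounded density $f_\U$, it belongs to a Sobolev space of doubled smoothness $H^{2\nu_c + s}(\calU)$, which together with the lower bound on $f_\U$ upgrades the worst-case BQ error to $\mathcal O(h_m^{\nu_c + s/2})$; this sharper bound is standard in the BQ literature (see e.g.\ \citet{Briol2019PI}, \citet{Kanagawa2017}). Substituting the fill-distance asymptotics $h_m = \mathcal O(m^{-1/s})$ from \Cref{as:points} and absorbing the constant $K$ yields the claimed rate $\mathcal O(m^{-\nu_c/s - 1/2})$.

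The main technical obstacle is the doubling step: a direct Wendland bound only delivers $\mathcal O(m^{-\nu_c/s})$, so recovering the extra $1/2$ in the exponent is what makes the result tight. This requires carefully exploiting both the upper and lower bounds on $f_\U$ in \Cref{as:points} (to place $\mu_{c,\U}$ in a twice-smoothness space and to relate $L^2(\U)$-norms to $L^2$-norms on $\calU$). A secondary issue is the bookkeeping of mixed partial derivatives in the composition $k(x,\cdot) \circ G_\theta$, which is routine but requires the dimension count $\nu_c \leq \min(\lfloor \nu_k + d/2 \rfloor, l)$ in \Cref{as:kernels} to hold uniformly in $x$. All other steps are applications of standard RKHS identities and Sobolev embeddings.
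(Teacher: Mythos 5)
Your high-level plan matches the paper's: invoke \Cref{thm:optimal_weights} to reduce the problem to bounding $K\cdot\text{MMD}_c(\U,\sum_i w_i^*\delta_{u_i})$, verify the hypothesis $k(x,\cdot)\circ G_\theta\in\Hc$ by combining the Mat\'ern/Sobolev norm equivalence with a Fa\`a di Bruno-style chain rule for Sobolev spaces (the paper does this via \Cref{lemma:rkhs_of_matern} and \Cref{thm:k_circ_G_is_Sobolev}), and then convert a scattered-data approximation estimate into the final rate via the fill-distance assumption.

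Where you genuinely diverge is the mechanism for the extra factor of $h_m^{s/2}$. The paper bypasses this by directly invoking \citet[Theorem 9]{Wynne2020} (packaged as \Cref{lemma:bq_convergence}), which already delivers the rate $h_m^{\nu_c+s/2}$ for the worst-case quadrature error over the unit ball of $\Hc$. The simplest reason that rate holds is not a smoothness-doubling of $\mu_{c,\U}$: it is that the quadrature error for $f\in\Hc$ equals $\int(f-s_f)f_\U\,du$, which by Cauchy--Schwarz in $L^2(\calU)$ is controlled by the $L^2$ kernel-interpolation bound $\|f-s_f\|_{L^2(\calU)}\lesssim h_m^{\nu_c+s/2}\|f\|_{\Hc}$ times $\|f_\U\|_{L^2(\calU)}<\infty$; the extra $s/2$ over the na\"ive $L^\infty$ estimate is precisely the $L^2$-versus-$L^\infty$ sampling-inequality gain. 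Your ``doubling trick''---placing $\mu_{c,\U}$ in $H^{2\nu_c+s}(\calU)$ and applying an escape-the-native-space result in the $\Hc$-norm---can also produce the same exponent, but it is a more elaborate route that the paper does not take, and it carries extra technical burden: on a \emph{bounded} $\calU$ one must justify that the restricted convolution $\int_\calU c(\cdot,y)f_\U(y)\,dy$ retains the doubled Sobolev smoothness near $\partial\calU$, and one must cite a version of the refined interpolation estimates (Narcowich--Ward--Wendland / \citet{Kanagawa2017}) that bounds the error in the $\Hc$-norm rather than $L^2$. So your route is plausible but over-engineered relative to the paper's.

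One misattribution worth fixing: you assign the role of the lower bound $f_\U\geq C_\U>0$ to the quadrature step (``to relate $L^2(\U)$-norms to $L^2$-norms on $\calU$''). In the paper, the quadrature step (\Cref{lemma:bq_convergence}) only needs the \emph{upper} bound $f_\U\leq C_\U'$. The lower bound is instead used in \Cref{lemma:chain_rule_w1p}, where it is the change of variables from $\P_\theta$-integrals to $\U$-integrals that requires $f_\U\geq C_\U$ to bound $\|\cdot\|_{L^2(\calU)}$ by $\|\cdot\|_{L^2(\U)}$ when proving $k(x,\cdot)\circ G_\theta\in\W^{1,2}(\calU)$.
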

The result shows that our method has improved sample complexity over the V-statistic for any $\nu_c$ and $s$. Additionally, it is better than RQMC when $\nu_c>s/2$. In practice, we should pick a kernel $c$ that is as smooth as possible whilst not being smoother than $G_\theta$ or $k$, as per \Cref{as:kernels}. 
Hence, we should take $\nu_c$ to be smaller than $l$ and $\nu_k$, the smoothness of $G_\theta$ and $k$, respectively. 
In case the smoothness of $G_\theta$ is unknown, the conservative choice is to take a smaller value of $\nu_c$ to ensure \Cref{as:kernels} is satisfied. 

\begin{figure}
	 \centering
	 \includegraphics[trim={40 615 255 45}, clip,  width = 0.75\linewidth]{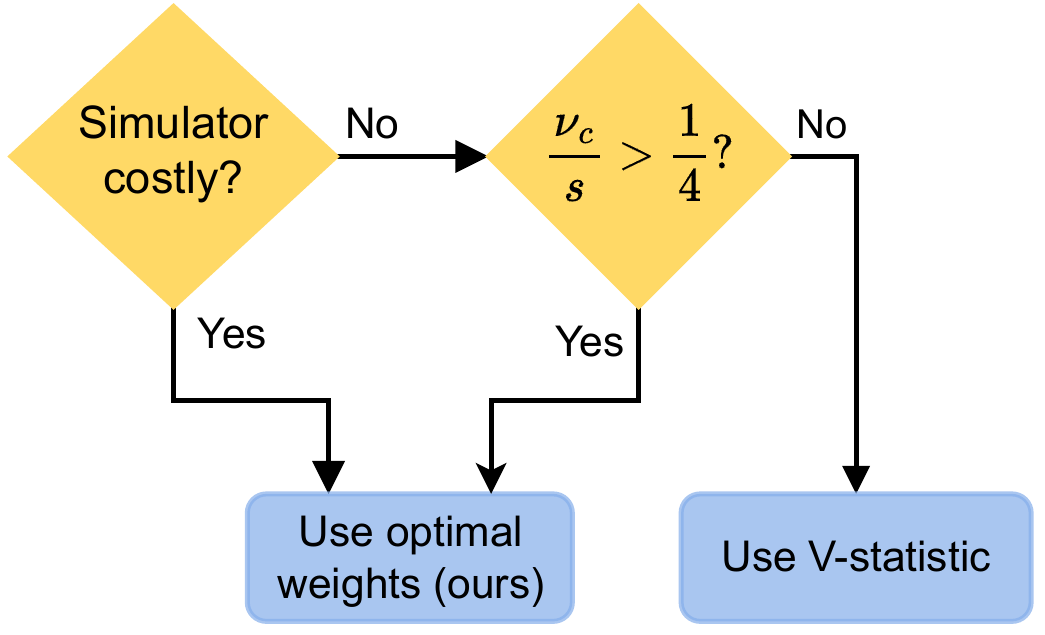}
\caption{Guidelines on when to use our optimally-weighted estimator over the V-statistic: a) when the simulator is costly relative to the cost of MMD estimation, or , b) when $\nu_c$ is large and the dimension $s$ is low.}
\vspace{-3mm}
\label{fig:flowchart}
\end{figure}

\vspace{-2mm}

\paragraph{Computational Cost.} The total computational cost of our method is the sum of (i) the cost of simulating from the model, which is $\mathcal{O}(m  C_{\mathrm{gen}})$, where $C_{\mathrm{gen}}$  is the cost of sampling one data point, and (ii) the cost of estimating MMD, which is $\mathcal{O}(m^2+mn+n^2)$ for the V-statistic and $\mathcal{O}(m^3+mn+n^2)$ for the OW estimator. Our method is hence slightly more expensive when $m$ is large. However, the cost of the simulator is often the computational bottleneck, sometimes taking up to tens or hundreds of CPU hours per run; see \citet{Behrens2015, Kirby2022}. As a result, proposing data efficient likelihood-free inference methods \cite{Beaumont2009, Gutmann2016, Greenberg2019} is still an active research area. In cases where $\mathcal{O}(m C_{\mathrm{gen}}) \gg \mathcal{O}(m^3)$, the OW estimator is more efficient than the V-statistic as it requires fewer simulations to estimate the MMD. If the simulator is not more expensive than estimating the MMD and assuming a fixed computational budget, then the OW estimator achieves lower error than the V-statistic if $\nu_c / s > 1/4$ and assumptions \Cref{as:points,as:generator,as:kernels} hold. This result is straightforwardly derived from \Cref{thm:rate_of_convergence}, see \Cref{app:cost_error} for details. \Cref{fig:flowchart} summarises the cases in which one should opt for our OW estimator instead of the V-statistic estimator.

We remark that the cost of inverting the kernel matrix in our method (\Cref{eq:optimal_weights}) could be reduced by using specific pairs of kernel and point sets; see \citet{Jagadeeswaran2019,Karvonen2019-kk,Karvonen2019-mu}. In this case, significant gains could be observed for even cheaper simulators.

\begin{figure*}
     \centering
     \begin{subfigure}[b]{0.26\textwidth}
         \centering
         \includegraphics[width=\textwidth]{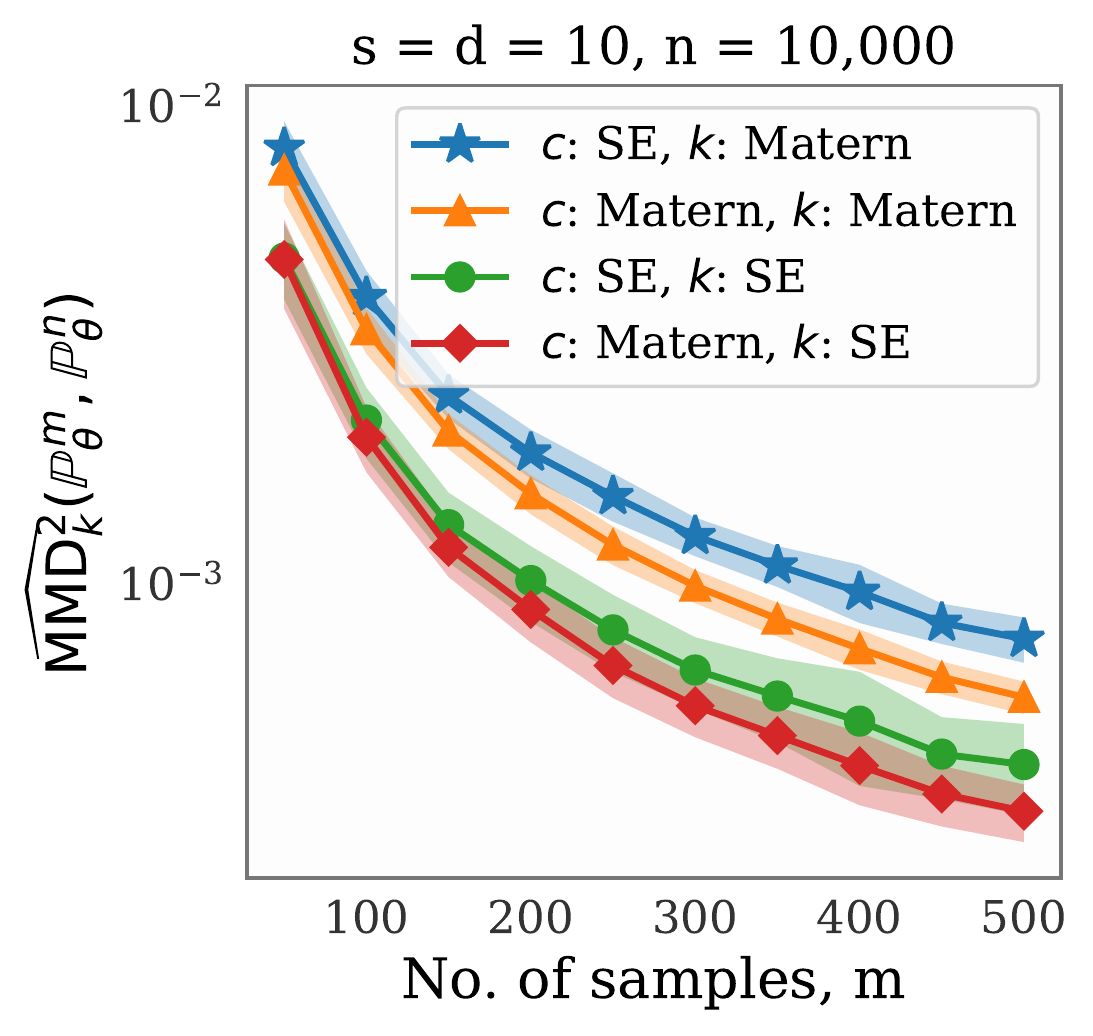}
         \vspace{-3ex}
         \caption{}
         \label{fig:mvgk_a}
     \end{subfigure}
     \begin{subfigure}[b]{0.24\textwidth}
         \centering
         \includegraphics[width=\textwidth]{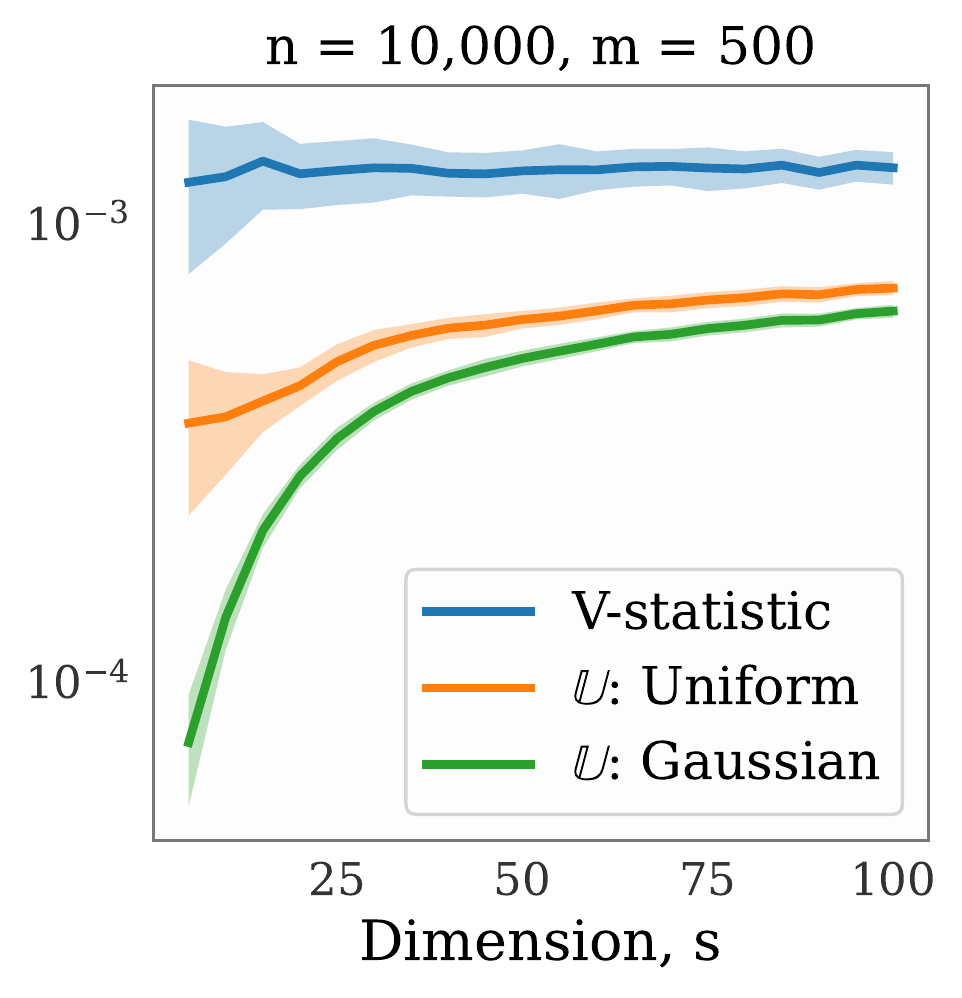}
         \vspace{-3ex}
         \caption{}
         \label{fig:mvgk_b}
     \end{subfigure}
     \begin{subfigure}[b]{0.24\textwidth}
         \centering
         \includegraphics[width=\textwidth]{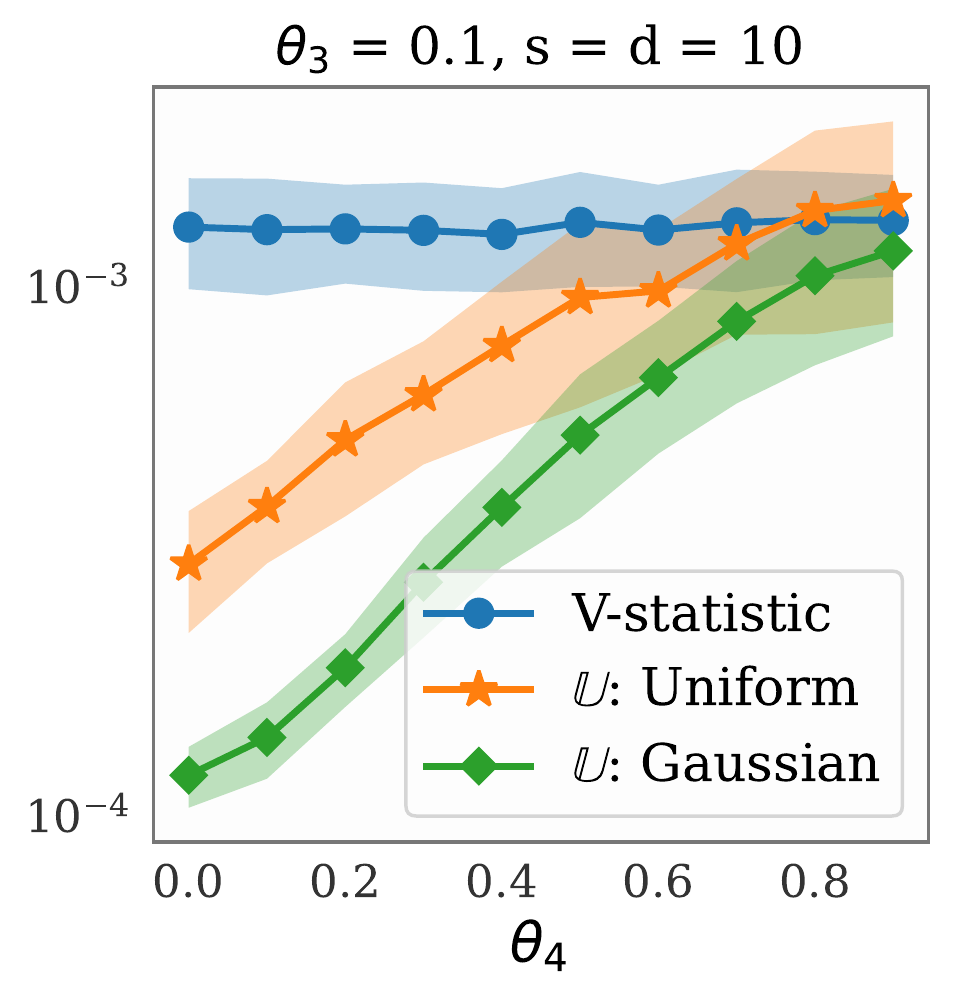}
         \vspace{-3ex}
         \caption{}
         \label{fig:mvgk_c}
     \end{subfigure}
     \begin{subfigure}[b]{0.243\textwidth}
         \centering
         \includegraphics[width=\textwidth]{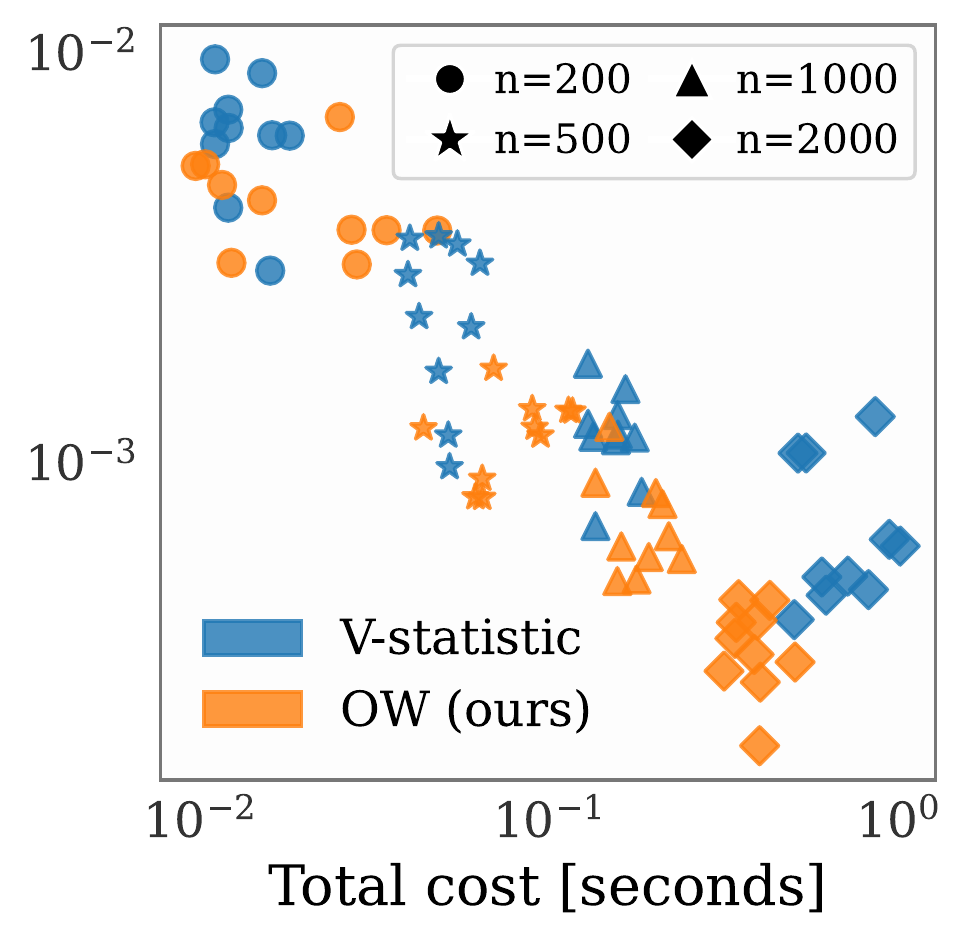}
         \vspace{-3ex}
         \caption{}
         \label{fig:mvgk_d}
     \end{subfigure}
        \caption{Error in estimating MMD$^2$ for the multivariate g-and-k distribution. (a) Error of our OW estimator for different choices of $k$ and $c$. Increasing the smoothness of $k$ improves the performance. (b) Comparison of V-statistic and OW estimator as a function of dimension. OW performs better for both parametrisations of $\mathbb{U}$, with the Gaussian giving lowest error. (c) Value of $\theta_4$ also impacts the performance of the OW estimator. (d) Error vs. total computation cost for different $n$. OW performs better than V-statistic for similar cost: $m=n$ for V-statistic, whereas $m = (68, 126, 200, 317)$ for OW.}
        \label{fig:mvgk_all}
\end{figure*}

\section{Numerical Experiments} \label{sec:experiments}

We now illustrate the performance of our OW estimator on various benchmark simulators and on challenging likelihood-free inference tasks. The lengthscale of kernels $k$ and $c$ is set using the median heuristic \cite{Garreau2018}, unless otherwise stated. The closed-form kernel mean embeddings used in the experiments are derived in \Cref{app:derivation_aligns}. Our code is available at \url{https://github.com/bharti-ayush/optimally-weighted_MMD}. 

\subsection{Benchmarking on popular simulators}\label{sec:benchmark}

We begin by comparing the V-statistic with our OW estimator on a number of popular benchmark simulators having different dimensions for $\mathcal{U} \subseteq \R^s$ and $\mathcal{X} \subseteq \R^d$. The experiments are conducted for $\{u_i\}_{i=1}^m$ being iid as well as RQMC points. We fix $\theta$ for each model (see \Cref{app:trueParam} for exact values) and estimate the MMD$^2$ between $\P_{\theta}^m$ and $\P_{\theta}^n $, with $k$ and $c$ both being the SE kernel. We set $n = 10,000$ to be large in order to make $\P_\theta^n$ an accurate approximation of $\P_\theta$, and $m=2^8$ so as to facilitate comparison with RQMC, which requires $m$ to be a power of $2$. 

The results are reported in \Cref{tab:MMDerror}. For RQMC points, the errors are generally either similar for the two estimators (g-and-k, two moons, and Lotka-Volterra models) or smaller for the OW estimator (bivariate Beta and MA$(2)$), with the OW estimator achieving lower errors in all cases barring the M/G/1 queuing model. This is not surprising since the M/G/1 model has a discontinuous generator, and our theory therefore does not hold. It is also important to note that although RQMC performs very well here even without the optimal weights, the simulators were chosen in order to make this comparison feasible. In many cases, $\mathbb{U}$ will not be uniform and therefore the RQMC approach will not be possible to implement and only the iid approach is feasible. 

For the iid points, the improvement in performance is much more significant. The OW estimator achieves the lowest error for all the models when  $\{u_i\}_{i=1}^m$ are taken to be iid uniforms. Its error is reduced by a factor of around $20$ and $40$ for the g-and-k and the two moons model, respectively, compared to the V-statistic. As expected from our sample complexity results, the magnitude of this improvement reduces as $s$ (the dimension of $\mathcal{U}$) increases. However, the OW estimator still performs slightly better than the V-statistic for the Lotka-Volterra model where $s=600$.

\subsection{Multivariate g-and-k distribution}
\label{sec:mvgk}

We now assess the impact of various practical choices on the performance of our method. To do so, we consider the multivariate extension of the g-and-k distribution introduced in \cite{Drovandi2011} and used as a benchmark in \cite{Li2017Copula, Jiang2018, Nguyen2020}. This flexible parametric family of distributions does not have a closed-form likelihood, but is easy to simulate from.
We define a distribution in this family through $ (G_\theta, \mathbb{U}_\theta)$, where
%
\begin{talign*}
     G_{\theta}(u)  =   \theta_1 \hspace{-0.5ex} +  \theta_2 \hspace{-0.5ex} \left[1 \hspace{-0.5ex} + \hspace{-0.5ex} 0.8 \frac{1 - \exp(-\theta_3 z(u))}{1 + \exp(-\theta_3 z(u))}\right] \hspace{-0.5ex} \left(1 \hspace{-0.5ex} + \hspace{-0.5ex} z(u)^2 \right)^{\theta_4} \hspace{-0.5ex} z(u),
\end{talign*}
with $\theta=(\theta_1,\theta_2,\theta_3,\theta_4,\theta_5)$, $z(u) = \Sigma^{\frac{1}{2}}u$ and
$\mathbb{U} = \mathcal{N}(0, I_s)$, where $\Sigma \in \mathbb{R}^{d \times d}$ is a symmetric tri-diagonal Toeplitz matrix such that $\Sigma_{ii} = 1$ and $\Sigma_{ij} = \theta_5$.
The parameters $\theta_1$,$\theta_2$,$\theta_3$, and $\theta_4$ govern the location, scale, skewness, and kurtosis respectively, and $s = d$.
An alternative formulation is through  $(\tilde{\mathbb{U}},\tilde{G}_\theta)$ where $\tilde{\mathbb{U}} = \text{Unif}(0,1)^s$, and $\tilde{G}_\theta = G_{\theta} \circ \Phi^{-1}$ where $\Phi$ is the cumulative distribution function of a $\mathcal{N}(0,1)$.

\vspace{-2mm}

\paragraph{Varying choice of $k$ and $c$.}
We first investigate the performance of our OW estimator for different combinations of $k$ and $c$, the choices being either the SE or the Mat\'ern kernel. We estimate the squared-MMD for each of these combinations as a function of $m$, with $d = 10$ and $n=10,000$. The Lebesgue measure formulation is used while computing the embeddings for both the kernels. The Mat\'ern kernel is set to order $\nu_k = \nu_c = 2.5$, and the parameter value to $\theta_0=(3,1,0.1,0.1,0.1)$. The resulting curves are shown in \Cref{fig:mvgk_a}. Our method performs best when $k$ is the SE kernel, i.e., when it is infinitely smooth. The performance degrades slightly when $k$ is Mat\'ern, while the combination of $c$ as SE and $k$ as the Mat\'ern kernel is the worst. This is expected from our theory, and is because the composition of $G_\theta$ and $k$ is not smooth, but we approximate it with an infinitely smooth function. Hence, from a computational viewpoint, it is always beneficial to take $k$ to be very smooth.

\begin{figure*}[t]
	 \centering
	 \includegraphics{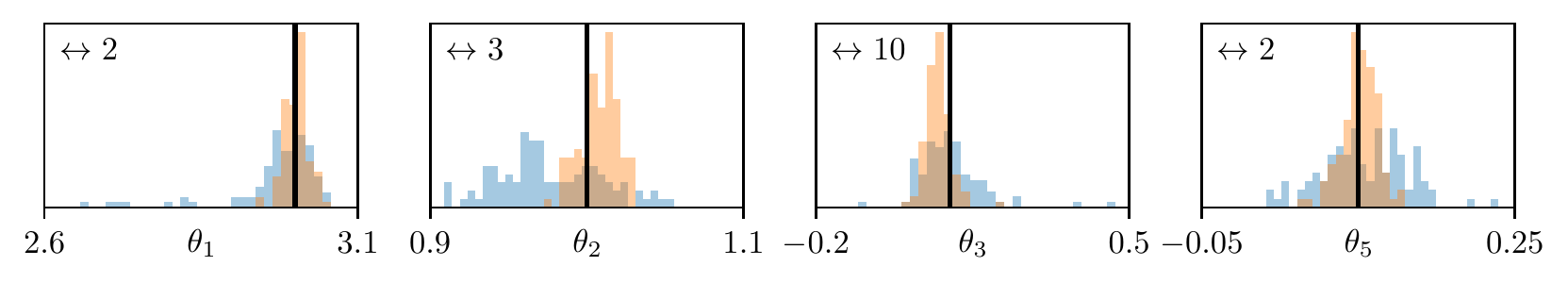}
  \vspace{-3mm}
\caption{
    Histogram of parameter estimates obtained using \Cref{eq:MMD_estimator} with the V-statistic estimator (blue) and the OW estimator (orange) over 100 runs during the composite goodness-of-fit test.
    The black vertical lines denote the true value of the parameter.
    $\leftrightarrow$ indicates the number of estimates for each parameter from the V-statistic that are outliers and hence not included in the plot.
    For the OW estimator, all estimates are within the x-axis range. The parameter estimates obtained using the OW estimator are more concentrated around the true parameter value, whereas the estimates obtained using the V-statistic have higher variance.
    }
\label{fig:gandk_min_dist}
\end{figure*}

\vspace{-2mm}

\paragraph{Varying dimensions $s$ and $d$.}
We now analyse the impact of the choice of measure, either Gaussian or uniform. \Cref{fig:mvgk_b} shows the OW and V-statistic estimators as  the dimension $s=d$ varies. The parameter values are the same as before, $m = 500$, and the SE kernel is used for both $k$ and $c$. We observe that the OW estimator performs better than the V-statistic even in dimensions as high as 100. In lower dimensions, the Gaussian embedding achieves lower error than the uniform for this model, with their performance converging around $d = 60$. This is likely due to the fact that $\tilde{G}_\theta$ is an easier function to approximate than $G_\theta$, but this is harder to assess a-priori for the user and highlights some open questions not yet covered by our theory.

\vspace{-2mm}

\paragraph{Varying model parameters.}
Building on the previous result, we show that the performance of the OW estimation is also impacted by $\theta$. In \Cref{fig:mvgk_c}, we analyse the performance of the estimators as a function of parameter $\theta_4$. The SE kernel is used for both $k$ and $c$. While the V-statistic is not impacted by the choice of $\theta_4$, the performance of our estimators degrade as $\theta_4$ increases. The behaviour is similar on varying $\theta_3$, albeit not as drastic as  $\theta_4$, see \Cref{app:mvgk} for the plot. We expect that this difference in performance is due to the regularity of the generator varying with $\theta$.

\vspace{-2mm}

\paragraph{Performance vs. computational cost.}
Finally, since the OW estimator tends to be more computationally expensive and this simulator is relatively cheap ($\approx$ 1~ms to generate one sample), we also compare estimators for a fixed computational budget. To that end, we vary $n$ and take $m=n$ for the V-statistic and $m = 2n^{2/3}$ for the OW estimator. \Cref{fig:mvgk_d} shows their performance with respect to their total computational cost, including the cost of simulating from the model ($d = s = 5$). We see that the OW estimator achieves lower error on average than the V-statistic. Hence, it is preferable to use the OW estimator even for a computationally cheap simulator like the multivariate g-and-k.

\vspace{-2mm}

\paragraph{Composite goodness-of-fit test.}
We demonstrate the performance of our method when applied to composite goodness-of-fit testing, using the method proposed by \citet{Key2021} with a test statistic based on the squared-MMD.
Given iid draws from some distribution $\Q$, the test considers whether  $\Q$ is an element of some parametric family $\{\P_{\theta}: \theta \in \Theta\}$ (null hypothesis) or not (alternative hypothesis).
The approach uses a parametric bootstrap \citep{Stute1993} to estimate the distribution of the squared-MMD under the null hypothesis, which can then be used to decide whether or not to reject. This requires repeatedly performing two steps: (i) estimating a parameter value through an MMD estimator of the form in \Cref{eq:MMD_estimator}, and (ii) estimating the squared-MMD between $\Q$ and the model at the estimated parameter value. See \Cref{app:gof} for the full algorithm. This needs to be done up to $B$ times, where $B$ can be in the hundreds or thousands, which can be a significant challenge computationally. This limits the number of simulated samples $m$ that can be used at each step, and is therefore a prime use case for our OW estimator.  

We performed this test with a level of $0.05$ using the V-statistic and OW estimator, using $B=200$. We considered the multivariate g-and-k model with unknown $\theta_1, \theta_2, \theta_3$, and $\theta_5$ but fixed $\theta_4 = 0.1$. We used $m=100$ and $n=500$ and considered two cases: $\Q$ is a multivariate g-and-k with $\theta_4=0.1$ (null holds) or $\theta_4=0.5$ (alternative holds). When the null hypothesis holds, we should expect the tests to reject the null at a rate close $0.05$, whereas when the alternative holds, we should reject at a rate close to $1$.
\Cref{tab:composite_test_results} shows that our test based on the OW estimator performs significantly better in that respect than the V-statistic. This is due to the fact that the OW estimator is able to improve both the estimate of the parameter (see \Cref{fig:gandk_min_dist}), and the estimate of the test statistic, thus improving the overall performance.

\begin{table}[]
    \centering
    \caption{Fraction of repeats for which the null was rejected. An ideal test would have $0.05$ when the null holds, and $1$ otherwise. }
    \vspace{2mm}
    \resizebox{0.9\columnwidth}{!}{%
    \begin{tabular}{c c c}
        \toprule
        Cases & IID V-stat & IID OW (ours)  \\ \midrule
         $\theta_4 =0.1$ (null holds) & 0.040 & 0.047 \\
         $\theta_4 =0.5$ (alternative holds) & 0.040 & 0.413 \\
        \bottomrule
    \end{tabular}
    }
    \label{tab:composite_test_results}
\vspace{-4mm}
\end{table}

\Cref{fig:gandk_min_dist} shows that the estimates of the parameters computed using the OW estimator are more concentrated around the true parameter value, whereas the estimates computed using the V-statistic have higher variance.
Therefore, when using the V-statistic, the distribution of the test statistic approximated by the bootstrap has higher variance, thus the estimated critical value is more conservative, and the test is not sensitive to smaller departures from the null hypothesis.
In contrast, when using the OW estimator, the estimated critical value is less conservative and the test has higher performance.

\subsection{Large scale offshore wind farm model}

Finally, we consider a low-order wake model \cite{Niayifar2016, Kirby2023} for large-scale offshore wind farms. The model simulates an estimate of the farm-averaged local turbine thrust coefficient \cite{Nishino2016}, which is an indicator of the energy produced. The parameter $\theta$ is the angle (in degrees) at which the wind is blowing. The turbulence intensity is assumed to have zero-mean additive Gaussian noise (i.e. $\mathbb{U} = \mathcal{N}(0,10^{-3})$), which then goes through the non-linear mapping of the generator. Although this model is an approximation of the state-of-the-art models that can take around 100 CPU hours per run (see e.g. \cite{Kirby2022}), one realisation from this model takes $\approx 2$~mins, which is still computationally prohibitive for likelihood-free inference. 
This example is indicative of the expensive simulators which are widely used in science, and is thus suitable for our method. 

We apply the ABC method of \eqref{eq:ABC} to estimate $\theta$ with both the OW estimator and the V-statistic. The tolerance threshold $\varepsilon$ is taken in terms of percentile, i.e., the proportion of the data that yields the least MMD distances. We use $1000$ parameter values from the $\text{Unif}(0,30)$ prior on $\Theta$. As the cost of the model far exceeds that of estimating the MMD, we take $m=10$ for both estimators. With few $m$, setting the lengthscale of $c$ using median heuristic is difficult, so we fix it to be 1.  The simulated datasets took $\approx 245$~hours to generate, while estimating the MMD took around $0.13$~s and $0.36$~s for the V-statistic and the OW estimator, respectively. 

The resulting posteriors, which are approximations of the ABC posterior obtained if the MMD was computable in closed-form, are in \Cref{fig:windfarm}. We observe that the OW estimator's posterior is much more concentrated around the true value than that of the V-statistic for both values of $\varepsilon$. This is because the OW estimator approximates the MMD more accurately than the V-statistic for the same $m$. Hence, our method can achieve similar performance as the V-statistic with much smaller $m$, saving hours of computation time. 

\begin{figure}[t!]
	 \centering
	 \includegraphics[ width = 0.43 \linewidth]{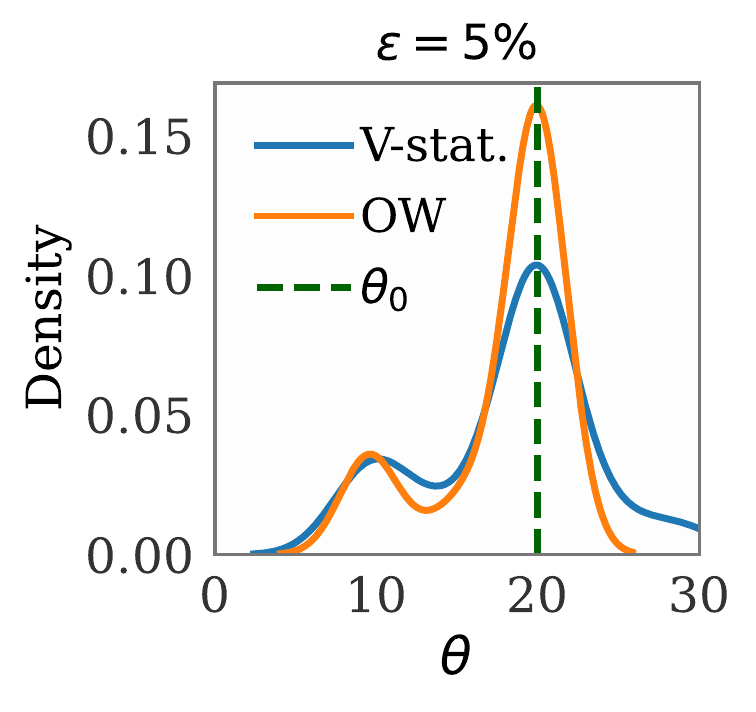}
	 \includegraphics[ width = 0.395 \linewidth]{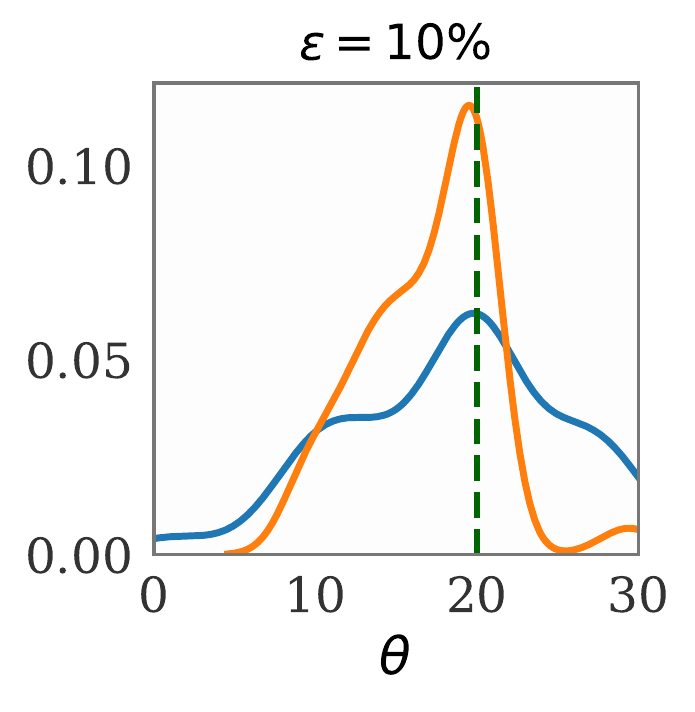}
  \vspace{-2mm}
\caption{ABC posteriors for the wind farm model. Our OW estimator yields posterior samples that are more concentrated around the true $\theta_0$ than the V-statistic. Performance of the U-statistic estimator is similar to the V-statistic, see \Cref{app:windfarm}. Settings: $n=100$, $\theta_0=20$.}
  \vspace{-3mm}
\label{fig:windfarm}
\end{figure}

\section{Conclusion} \label{sec:conclusion}

We proposed an optimally-weighted MMD estimator which has improved sample complexity than the V-statistic when the generator and kernel are smooth and the dimensionality is small or moderate. Thus, our estimator requires fewer data points than alternatives in this setting, making it especially advantageous for computationally expensive simulators which are widely used in the natural sciences, biology and engineering. However, a number of open questions remain, and we highlight the most relevant below. 

The parameterisation of a simulator through a generator $G_\theta$ and a measure $\U$ is usually not unique, and it is often unclear which parameterisation is most amenable to our method. One approach would be to choose a parameterisation where the dimension of $\U$ is small so as to improve the convergence rate. However, our result in \Cref{thm:rate_of_convergence} also contains rate constants which are difficult to get a handle on, and it is therefore difficult to identify which parameterisation is best amongst those with fixed smoothness and dimensionality.

Finally, our sample complexity result could be extended. One limitation is that we focus on the MMD and not its gradient, meaning that our results are not directly applicable for gradient-based likelihood-free inference such as the method used for our g-and-k example \cite{Briol2019MMD}. A future line of work could also investigate if our ideas translate to other distances used for likelihood-free inference, such as the Wasserstein distance \cite{Bernton2019} and Sinkhorn divergence \cite{Genevay2017,Genevay2018}.

\subsubsection*{Acknowledgements}

AB was supported by the Academy of Finland (Flagship programme: Finnish Center for Artificial Intelligence FCAI).
MN and OK acknowledge support from UKRI under the EPSRC grant
number [EP/S021566/1]. MN was also supported through The Alan Turing Institute’s Enrichment Scheme. SK was supported by the UKRI Turing AI World-Leading Researcher Fellowship, [EP/W002973/1].
FXB was supported by the Lloyd’s Register Foundation Programme on Data-Centric Engineering and The Alan Turing Institute under the EPSRC grant [EP/N510129/1], and through an Amazon Research Award on “Transfer Learning for Numerical Integration in Expensive Machine Learning Systems”.

\bibliography{bibliography}
\bibliographystyle{icml2023}

\newpage
\appendix

\onecolumn

{
\begin{center}
\Large
    \textbf{Supplementary Materials}
\end{center}
}
In \Cref{app:proofs}, we present the proofs and derivations of all the theoretical results in our paper, while \Cref{app:expDetails} contains additional details regarding our experiments.

\section{Proof of Theoretical Results}\label{app:proofs}

In this section, we prove~\Cref{thm:optimal_weights,thm:rate_of_convergence} and intermediate results required, and expand on the technical background. 

\subsection{Proof of \Cref{thm:optimal_weights}}\label{app:proofs_optimal_weights}
\begin{proof}

Let $\P_\theta^{m,w} = \sum_{i=1}^m w_i \delta_{y_i} = \sum_{i=1}^m w_i \delta_{G_\theta(u_i)}$. Using the fact that the MMD is a metric, we can use the reverse triangle inequality to get
\begin{talign*}
 &\left| \text{MMD}_k(\P_\theta,\Q) - \text{MMD}_k(\P_\theta^{m,w},\Q)\right|
 \leq  \text{MMD}_k(\P_\theta,\P_\theta^{m,w}). 
\end{talign*}
Define a kernel $c_\theta$ on $\calU$ as $c_\theta(u, u') = k(G_\theta(u), G_\theta(u'))$. As $\P_\theta$ is a pushforward of $\U$ under $G_\theta$, it holds that:
\begin{talign*}
 \text{MMD}^2_k(\P_\theta,\P_\theta^{m,w}) &= \int_{\X} \int_{\X}  k(x,x') \P_\theta (\d x)  \P_\theta (\d x') - 2 \sum_{i=1}^m w_i \int_{\X}  k(x_i,x) \P_\theta (\d x) + \sum_{i,j=1}^m w_i w_j k(x_i,x_j)\\
 &= \int_{\X} \int_{\calU}  k(G_\theta(u),G_\theta(u')) \U (\d u)  \U (\d u') - 2 \sum_{i=1}^m w_i \int_{\calU}  k(G_\theta(u_i), G_\theta(u)) \U (\d u) \\
 &\hspace{167pt} + \sum_{i,j=1}^m w_i w_j k(G_\theta(u_i), G_\theta(u_j))\\
 & = \text{MMD}^2_{c_\theta}(\U,\sum_{i=1}^m w_i \delta_{u_i}). 
\end{talign*}
Since $c_\theta(u, \cdot) \in \Hc$ for all $u \in \calU$---by the assumption that $k(x, \cdot) \circ G_\theta \in \Hc$ for all $x \in \X$---it holds that $\mathcal{H}_{c_\theta} \subseteq \mathcal{H}_c$. If $\mathcal{H}_{c_\theta} = \mathcal{H}_c$, we have $\text{MMD}_k(\P_\theta,\P_\theta^{m,w})=\text{MMD}_c(\U,\sum_{i=1}^m w_i \delta_{u_i})$, and the result holds for $K=1$.

Suppose $\mathcal{H}_{c_\theta} \subset \mathcal{H}_c$. Then, by~\citet[Theorem I.13.IV]{aronszajn1950theory}, for any $f \in \mathcal{H}_{c_\theta}$ there is a constant $K$ independent of $f$ such that $\|f\|_{\Hc} \leq K \|f\|_{\Hct}$. Together with the fact that  $\text{MMD}_{c_\theta}$ is an integral-probability metric with underlying function class being the unit-ball in $\Hct$, this gives
\begin{talign*}
    \text{MMD}_{c_\theta}(\U,\sum_{i=1}^m w_i \delta_{u_i}) &= \sup_{\|f\|_{\Hct}\leq 1} \left| \int_{\calU} f(u) \U(\text{d}u) - \sum_{i=1}^m w_i  f(u_i) \right| \\
    &=K \times \sup_{\|f\|_{\Hct}\leq 1/K} \left| \int_{\calU} f(u) \U(\text{d}u) - \sum_{i=1}^m w_i  f(u_i) \right| \\
    &\leq K \times\sup_{\substack{f \in \Hct\\\|f\|_{\Hc}\leq 1}} \left| \int_{\calU} f(u) \U(\text{d}u) - \sum_{i=1}^m w_i  f(u_i) \right| \\
    &\leq K \times \sup_{\|f\|_{\Hc}\leq 1} \left| \int_{\calU} f(u) \U(\text{d}u) - \sum_{i=1}^m w_i  f(u_i) \right| \\
    &= K \times \text{MMD}_c(\U,\sum_{i=1}^m w_i \delta_{u_i}),
\end{talign*}
where the second equality is simply a reparametrisation from $f$ to $Kf$, and the inequalities use the fact that supremum of a set is not greater than supremum of its superset, and
\begin{talign*}
    \{f \in \Hct \ |\ K \|f\|_{\Hct} \leq 1 \} \subseteq \{f \in \Hct\ |\ \|f\|_{\Hc} \leq 1 \} \subseteq \{f \in \Hc\ |\ \|f\|_{\Hc} \leq 1 \}.
\end{talign*}
Note that the tightness of the bound will depend on the gap between $\mathcal{H}_{c_\theta}$ and $ \mathcal{H}_c$; the smaller this gap, the tighter the bound will be. This is illustrated in~\Cref{fig:the_gap}.

To prove the result about the exact form of $w$, we note that 
\begin{talign*}
 \underset{ w \in \R^m}{\arg \min } ~ \text{MMD}_c\left(\U,\sum_{i=1}^m w_i \delta_{u_i}\right)= \underset{ w \in \R^m}{\arg \min }~ \text{MMD}^2_c\left(\U,\sum_{i=1}^m w_i \delta_{u_i}\right),
\end{talign*}
and 
\begin{talign*}
\text{MMD}^2_c\left(\U,\sum_{i=1}^m w_i \delta_{u_i}\right)  = \int_{\calU} \int_{\calU}  c(u,v) \U(\text{d}u) \U(\text{d}v) - 2 \sum_{i=1}^m w_i \int_{\calU}  c(u_i,u) \U(\text{d}u) + \sum_{i,j=1}^m w_i w_j c(u_i,u_j).
\end{talign*}
The latter is a quadratic form in $w$, meaning it can be minimised in closed-form over $w$ and the optimal weights are given by $w^*$. This completes the proof of the second part of the theorem.

\end{proof}

\begin{figure}
    \centering
    \includegraphics[width=0.2\textwidth]{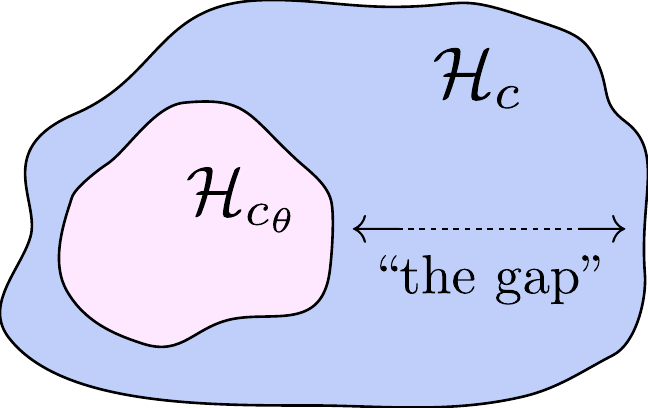}
    \caption{Pictorial representation of the gap between the RKHS induced by the kernels $c$ and $c_\theta = k(G_\theta(u), G_\theta(v))$. The size of the gap affects the tightness of the bound in~\Cref{thm:optimal_weights}, and consequently~\Cref{thm:rate_of_convergence}.} \label{fig:the_gap}
\end{figure}

\subsection{Chain rule in Sobolev spaces}

The proof of~\Cref{thm:rate_of_convergence}, specifically the result $k(x, \cdot) \circ G_\theta \in \Hc$ for Mat\'ern $k$ and $c$, will use a specific form of a chain rule for Sobolev spaces. We justify the choice of Mat\'ern kernels---or more generally, kernels the RKHS of which is norm-equivalent to the well-studied Sobolev space---and prove the form of the chain rule for Sobolev spaces that will imply $k(x, \cdot) \circ G_\theta \in \Hc$.

For general $c$ and $k$, $k(x, \cdot) \circ G_\theta \in \Hc$ is non-trivial to check. Here, we introduce sufficient conditions on $c$, $k$, and $G_\theta$ that are easily interpretable and correspond to common practical settings. Specifically, we consider $c$ and $k$ the RKHS of which, $\Hc$ and $\Hk$, are Sobolev spaces, and $G_\theta$ of a certain degree of smoothness---which reduces the problem to a form of a chain rule for Sobolev spaces.\footnote{Though various forms of the chain rule for Sobolev spaces exist in the literature (for example,~\citet[Section 4.2.2]{evans2018measure}), they tend to either consider $F \circ f$, where $f$ is in the Sobolev space (rather than $F$), or place overly strong assumptions on $f$.} The rest of the section proceeds as follows: first, we introduce the background definitions and results, then show that the required form of the chain rule holds for first order derivatives (\Cref{lemma:chain_rule_w1p}), and finally extend the result to higher order derivatives (\Cref{thm:k_circ_G_is_Sobolev}).

\paragraph{Background.} We consider the well-studied Sobolev kernels~\citep[see e.g.][Chapter 10]{Wendland2005}, which are kernels that induce a reproducing kernel Hilbert space (RKHS) that is norm-equivalent to a Sobolev space $\W^{l, 2}(\X)$, $\X \subseteq \R^d$, for some integer $l>d/2$. 
We give the definition of $\W^{l, 2}$ Sobolev spaces below, and refer to~\citet{adams2003sobolev} for an in-depth treatment of Sobolev spaces and~\citet{Berlinet2004} for general RKHS theory.

\begin{definition}[Sobolev spaces]
    Suppose $\X$ is an open subset of $\R^d$. The Sobolev space $\W^{l, 2}(\X)$, $l>d/2$, is a space of functions $f: \X \to \R$ such that $\|f\|^2_{\L^2(\X)} = \int_\X f^2(x) \d x < \infty$,
    and for any multi-index $\alpha \in \mathbb{N}^d$ with $|\alpha|=\sum_{i=1}^d \alpha_i \leq l$, the weak derivative $D^\alpha f = D^{\alpha_1}_{x_1} \dots D^{\alpha_d}_{x_d} f$ exists and $ \|D^\alpha f\|_{\L^2(\X)} < \infty$.
\end{definition}

A weak derivative is a generalisation of the concept of a derivative to functions that are not differentiable. A locally integrable function $D_{x_i} f$ is a weak derivative of $f$ in $x_i$ if it closely resembles the behavior of the ordinary derivative on any open $U \subseteq \X$: for any infinitely continuously differentiable function 
with a compact support, the integration chain rule holds with $f$ and $D_{x_i} f$---as it would for an ordinary derivative.
As the definition is only concerned with equality of the integrals in the chain rule,
a weak derivative is not uniquely defined: two functions $g_1$ and $g_2$ can be weak derivatives of $f$ in $x_i$ if (and only if) they only differ on a zero-volume set, meaning a set the Lebesgue measure of which is zero. 
As such, by $D_{x_i} f$ we will refer to any function that satisfies the definition of a weak derivative. For a multi-index $\alpha = (\alpha_1, \dots \alpha_d) \in \mathbb{N}^d$, by $D^\alpha f$ we denote the $|\alpha|$ order weak derivative $D^\alpha f=D^{\alpha_1}_{x_1} \dots D^{\alpha_d}_{x_d} f$, where
\begin{talign*}
D^n x_i f=\underbrace{D_{x_i} \dots D_{x_i}}_n f \text{ for any } n \in \mathbb{N}.
\end{talign*}
If an ordinary derivative $\partial^\alpha f=\partial^{|\alpha|} f/\partial^{\alpha_1} x_1 \dots \partial^{\alpha_d} x_d$ exists, it is equal to any weak $D^\alpha f$. It is important to clarify that the definition of Sobolev spaces given here is specific to the case $\W^{l, 2}(\X)$, $l>d/2$. General Sobolev spaces $W^{l, p}(\X)$ are subspaces of more general Lebesgue spaces, and are spaces not of functions, but of \emph{equivalence classes} of functions. Two functions $f_1$, $f_2$ are in the same equivalence class $[f]$ if they are equal almost everywhere. General Lebesgue and Sobolev space theory requires careful handling of the notion of equivalence classes, as the functions in them may differ arbitrarily on sets of Lebesgue measure zero. However, by Sobolev embedding theorem~\citep[Theorem 4.12]{adams2003sobolev} every element of $\W^{l, 2}(\X)$ is continuous if $l > d/2$, which implies that every equivalence class contains exactly one function---and we may define $\W^{l, 2}(\X)$ as a space of functions, as is done above.

Throughout the proofs, we will say $f \in \L^\infty(\X)$ if it is bounded on $\X$, and $f \in C^{m}(\X)$, for $m \in \mathbb{N}$, if $\partial^\alpha f$ exists and is continuous for any $|\alpha| \in [0, m]$. Specifically, $C^0(\X)$ is the space of continuous functions, and $C^\infty(\X)$ a space of infinitely differentiable functions with continuous derivatives. 
The output space of functions in both $\L^\infty(\X)$ and $C^{m}(\X)$ is omitted from the notation as it will be clear from the specific $f$ in question.

We start by recalling an important result that characterises Sobolev functions as limit points of sequences of $C^\infty(\X)$ functions.
Since it is a necessary and sufficient condition, we will use this result both to operate on a function in a Sobolev space using the "friendlier" smooth functions, and to prove a function of interest lies in a Sobolev space by finding a sequence of smooth function that approximates it accordingly.

\begin{theorem}[Theorem 3.17, \cite{adams2003sobolev}]
\label{thm:sobolev_function_approximation}
For an open set $\X \subseteq \R^d$, a function $f: \X \to \R$ lies in the Sobolev space $\W^{1, 2}(\X)$ and has weak derivatives $D_{x_j}[f]$, $j \in [1, d]$ %
if and only if there exists a sequence of functions $f_n \in C^\infty(\X) \cap \W^{1, 2}(\X)$ such that for $j \in [1, d]$
\begin{talign}
\label{eq:sobolev_function_approximation_funcs}
    \| f - f_n \|_{\L^2(\X)} \to 0,&\quad n \to \infty,  \\
    \left\| D_{x_j}[f] - \frac{\partial f_n}{\partial x_j} \right\|_{\L^2(\X)} \to 0,&\quad n \to \infty, \label{eq:sobolev_function_approximation_funcs2}
\end{talign}
where $\frac{\partial f_n}{\partial x_j}$ is the ordinary derivative of $f_n$ with respect to $x_j$.
\end{theorem}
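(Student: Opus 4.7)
\textbf{Proof plan for \Cref{thm:sobolev_function_approximation}.} The statement has two directions. The easy direction ($\Leftarrow$) is a standard limiting argument in the distributional definition of the weak derivative; the hard direction ($\Rightarrow$) is the Meyers--Serrin ``$H=W$'' theorem and requires a careful construction on a general open $\X$. The plan is to dispatch the easy direction first, then build the smooth approximants via a localised mollification argument subordinate to an exhaustion of $\X$ by compact sets.

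For the backward direction, suppose $f_n \in C^\infty(\X) \cap \W^{1,2}(\X)$ satisfies \eqref{eq:sobolev_function_approximation_funcs}--\eqref{eq:sobolev_function_approximation_funcs2} with $L^2$-limits $f$ and $g_j$ respectively. Since $f, g_j \in \L^2(\X)$, it remains to show that $g_j$ is a weak $x_j$-derivative of $f$. For any test function $\varphi \in C_c^\infty(\X)$, integration by parts for the smooth $f_n$ gives
\begin{equation*}
 \int_\X f_n \, \frac{\partial \varphi}{\partial x_j} \, dx \;=\; -\int_\X \frac{\partial f_n}{\partial x_j}\, \varphi \, dx .
\end{equation*}
Both sides converge by Cauchy--Schwarz, since $\varphi$ and $\partial \varphi/\partial x_j$ are bounded with compact support, yielding $\int_\X f \, (\partial \varphi/\partial x_j) \, dx = -\int_\X g_j \varphi\, dx$. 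Hence $g_j = D_{x_j}[f]$ in the weak sense, and $f \in \W^{1,2}(\X)$ with the prescribed weak derivatives.

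For the forward direction, I would exhaust $\X$ from the inside: set $\Omega_k = \{x \in \X : \operatorname{dist}(x, \partial \X) > 1/k\} \cap B(0, k)$ for $k \geq 1$, with $\Omega_0 = \Omega_{-1} = \emptyset$, and let $V_k = \Omega_{k+1} \setminus \overline{\Omega_{k-1}}$, which is an open cover of $\X$ with locally finite overlap. Choose a smooth partition of unity $\{\psi_k\}$ subordinate to $\{V_k\}$ with $\psi_k \in C_c^\infty(V_k)$ and $\sum_k \psi_k \equiv 1$ on $\X$. Since $\psi_k$ is smooth and compactly supported, standard product rules for weak derivatives give $\psi_k f \in \W^{1,2}(\X)$ with compact support in $V_k$, and $D_{x_j}[\psi_k f] = (\partial \psi_k / \partial x_j) f + \psi_k D_{x_j}[f]$. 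Now pick a standard mollifier $\eta_\varepsilon$ and convolve: for each $k$ and any prescribed $\delta > 0$, choose $\varepsilon_k > 0$ small enough that (i) $\operatorname{supp}(\eta_{\varepsilon_k} * (\psi_k f)) \subset V_k$, and (ii)
\begin{equation*}
 \| \eta_{\varepsilon_k} * (\psi_k f) - \psi_k f \|_{\L^2(\X)} + \sum_{j=1}^d \| \eta_{\varepsilon_k} * D_{x_j}[\psi_k f] - D_{x_j}[\psi_k f] \|_{\L^2(\X)} \;<\; \delta\, 2^{-k}.
\end{equation*}
Both are possible by the standard mollifier approximation theorem on compactly supported $\L^2$ functions and the identity $\partial (\eta_{\varepsilon}* g)/\partial x_j = \eta_\varepsilon * D_{x_j}[g]$ for compactly supported $g \in \W^{1,2}$. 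Define $f_\delta = \sum_k \eta_{\varepsilon_k} * (\psi_k f)$. The local finiteness of $\{V_k\}$ makes this sum locally finite, so $f_\delta \in C^\infty(\X)$; using $\sum_k \psi_k f = f$ and $\sum_k D_{x_j}[\psi_k f] = D_{x_j}[f]$ together with the triangle inequality and the $\delta\, 2^{-k}$ bounds yields $\|f - f_\delta\|_{\L^2(\X)} < \delta$ and $\|D_{x_j}[f] - \partial f_\delta/\partial x_j\|_{\L^2(\X)} < \delta$ for each $j$. Taking $\delta = 1/n$ produces the required sequence $f_n$ in $C^\infty(\X) \cap \W^{1,2}(\X)$.

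The main obstacle is making the localised mollification construction in the forward direction work for a general open $\X$ (rather than $\R^d$ or a nicely bounded domain): the mollifier $\eta_\varepsilon * h$ is only defined on $\X$ when $\operatorname{supp}(h)$ is at distance $> \varepsilon$ from $\partial \X$, which is why the exhaustion $\Omega_k$ and the strip cover $V_k$ are essential, and why each $\varepsilon_k$ must be chosen small enough to keep the support inside $V_k$. The second subtle point is justifying termwise differentiation and termwise identification $\sum_k D_{x_j}[\psi_k f] = D_{x_j}[f]$ in $\L^2$, which follows from local finiteness of $\{\psi_k\}$ and the Leibniz rule for weak derivatives multiplied by smooth compactly supported functions. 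Once these are in hand the error estimates aggregate cleanly via a geometric bound, and both directions together give the stated equivalence.
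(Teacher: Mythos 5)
The paper does not prove this statement; it is cited directly as Theorem~3.17 of~\citet{adams2003sobolev} and used as background. Your proof is correct: the backward direction is the standard closedness-under-limits argument via integration by parts against test functions, and the forward direction reproduces the Meyers--Serrin ``$H=W$'' construction (compact exhaustion, strip cover $V_k$ with finite overlap, subordinate partition of unity, per-piece mollification with $\varepsilon_k$ chosen so supports stay inside $V_k$, geometric aggregation of the $\delta\,2^{-k}$ errors). This is precisely the argument in the cited reference, so there is no methodological divergence to report. One minor care point you implicitly handle but could state explicitly: the identity $\partial(\eta_\varepsilon * g)/\partial x_j = \eta_\varepsilon * D_{x_j}[g]$ is valid only where the convolution is well-defined away from $\partial\X$, which is exactly why $\operatorname{supp}(\eta_{\varepsilon_k} * (\psi_k f)) \subset V_k$ must be enforced before the identity is invoked; and $f_\delta \in \W^{1,2}(\X)$ (not merely $C^\infty$) follows from the error bound plus $f \in \W^{1,2}(\X)$, which you need to conclude $f_\delta \in C^\infty(\X)\cap\W^{1,2}(\X)$ as the statement requires.
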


Note that the functions $f_n$ converge to $f$ in the Sobolev $\W^{1, 2}(\X)$ norm, $\| f - f_n \|^2_{\W^{1, 2}(\X)} = \|f -f_n\|_{\L^2(\X)} + \sum_{j=1}^d \|D_{x_j} f- \partial f_n / \partial x_j \|_{\L^2(\X)} \to 0$ as $n \to \infty$ , if and only if ~\eqref{eq:sobolev_function_approximation_funcs} and \eqref{eq:sobolev_function_approximation_funcs2} hold.

\paragraph{Chain rule for $\W^{1, 2}$.} We now prove that chain rule holds for $\varphi \circ G_\theta$ for $\varphi$ in a Sobolev space $W^{1, 2}(\X)$. For clarity, we will explicitly state the assumptions on $G_\theta$ in the main text. Recall that a measure $\P_\theta$ on $\X \subseteq \R^d$ is said to be a pushforward of a measure $\U$ on $\calU \subseteq \R^s$ under $G_\theta: \calU \to \X$ if for any $\X$-measurable $f: \X \to \R$ it holds that $\int_\X f(x) \P_\theta(\d x)= \int_\calU \left[f \circ G_\theta\right](u) \U(\d u)$.

\begin{lemma}[Chain rule for $\W^{1, 2}$]
\label{lemma:chain_rule_w1p}

Suppose
\begin{itemize}
  \item $\varphi \in \W^{1, 2}(\X)$. 
  \item $\calU \subset \R^{s}$ is bounded, $\X \subset \R^d$ is open, and $\X = G_\theta(\calU)$ for some $G_\theta = (G_{\theta, 1}, \dots, G_{\theta, d})^\top$. %
  The partial derivative $\partial G_{\theta, j}/\partial u_i$ exists and $|\partial G_{\theta, j}/\partial u_i| \leq C_G$ for some $C_G$ for all $i \in [1, s]$ and $j \in [1, d]$.
  \item $\U$ is a probability distribution on $\calU$ that has a density $f_\U: \calU \to [C_\U, \infty)$ for $C_\U>0$. 
  \item $\P_\theta$ is a pushforward of $\U$ under $G_\theta$, and has a density $f_{\P_\theta}$ such that $f_{\P_\theta}(x) \leq C_{\P_\theta}$ for all $x \in \X$ for some $C_{\P_\theta}$.
\end{itemize}
Then $\varphi \circ G_\theta \in \W^{1, 2}(\calU)$, and for $i \in [1, s]$, its weak derivative $D_{u_i}[\varphi \circ G_\theta]$ is equal to $\sum_{j=1}^d [D_{x_j} \varphi \circ G_\theta] \frac{\partial {G_{\theta, j}}}{\partial u_i}$.
\end{lemma}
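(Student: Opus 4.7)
The plan is to use Theorem 3.17 to approximate $\varphi$ by a sequence of smooth functions $\varphi_n \in C^\infty(\X) \cap \W^{1,2}(\X)$ with $\|\varphi - \varphi_n\|_{\L^2(\X)} \to 0$ and $\|D_{x_j}\varphi - \partial \varphi_n/\partial x_j\|_{\L^2(\X)} \to 0$ for every $j \in [1, d]$. For each $n$, the composition $f_n := \varphi_n \circ G_\theta$ is continuously differentiable on $\calU$, and the classical chain rule gives $\partial f_n/\partial u_i = \sum_{j=1}^d (\partial \varphi_n/\partial x_j \circ G_\theta)\,\partial G_{\theta, j}/\partial u_i$. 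The goal is then to pass to the limit and identify $h_i := \sum_{j=1}^d (D_{x_j}\varphi \circ G_\theta)\,\partial G_{\theta, j}/\partial u_i$ as the weak derivative $D_{u_i}[\varphi \circ G_\theta]$.

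The crucial technical step is a pushforward inequality that transfers $\L^2(\X)$ convergence into $\L^2(\calU)$ convergence after composition with $G_\theta$. For any nonnegative measurable $g$ on $\X$, combining the lower bound $f_\U \geq C_\U$ with $f_{\P_\theta} \leq C_{\P_\theta}$ and the change of variables under the pushforward yields
\begin{equation*}
\int_\calU g(G_\theta(u))\,\d u \,\leq\, \frac{1}{C_\U}\int_\calU g(G_\theta(u))\,\U(\d u) \,=\, \frac{1}{C_\U}\int_\X g(x)\,\P_\theta(\d x) \,\leq\, \frac{C_{\P_\theta}}{C_\U}\,\|g\|_{\L^1(\X)}.
\end{equation*}
Applied with $g = (\varphi - \varphi_n)^2$, this bounds $\|f_n - f\|_{\L^2(\calU)}^2 \leq (C_{\P_\theta}/C_\U)\|\varphi - \varphi_n\|_{\L^2(\X)}^2 \to 0$, where $f := \varphi \circ G_\theta$. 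Applied with $g = (\partial \varphi_n/\partial x_j - D_{x_j}\varphi)^2$ and using $|\partial G_{\theta, j}/\partial u_i| \leq C_G$, it similarly shows $\partial f_n/\partial u_i \to h_i$ in $\L^2(\calU)$. The same inequality also handles the a.e.-ambiguity of $D_{x_j}\varphi$: if a null set $N \subset \X$ were modified, then $\U(G_\theta^{-1}(N)) = \P_\theta(N) = 0$, and the bound $f_\U \geq C_\U > 0$ forces $G_\theta^{-1}(N)$ to be Lebesgue-null in $\calU$, so $h_i$ is a well-defined element of $\L^2(\calU)$.

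To identify $h_i$ as the weak derivative, I would test against arbitrary $\psi \in C_c^\infty(\calU)$. Since $f_n$ is $C^1$ and $\psi$ has compact support in the interior of $\calU$, classical integration by parts gives $\int_\calU f_n\,\partial \psi/\partial u_i\,\d u = -\int_\calU (\partial f_n/\partial u_i)\,\psi\,\d u$. The two $\L^2(\calU)$ convergences from the previous step, together with Cauchy--Schwarz and the boundedness of $\psi$ and $\partial \psi/\partial u_i$, let me pass to the limit on both sides to obtain $\int_\calU f\,\partial \psi/\partial u_i\,\d u = -\int_\calU h_i\,\psi\,\d u$, which is precisely the defining property of the weak derivative. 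Combined with the bounds $\|f\|_{\L^2(\calU)}^2 \leq (C_{\P_\theta}/C_\U)\|\varphi\|_{\L^2(\X)}^2$ and $\|h_i\|_{\L^2(\calU)}^2 \leq d\,C_G^2(C_{\P_\theta}/C_\U)\sum_j \|D_{x_j}\varphi\|_{\L^2(\X)}^2$, this completes the proof that $\varphi \circ G_\theta \in \W^{1,2}(\calU)$.

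The main obstacle is the pushforward inequality itself: it is where all assumptions on the densities and on $G_\theta$ are used simultaneously, and without both the lower bound on $f_\U$ and the upper bound on $f_{\P_\theta}$, composition with $G_\theta$ could distort $\L^2$ norms arbitrarily, breaking both the convergence of $f_n$ and the well-definedness of $h_i$.
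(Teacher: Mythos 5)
Your proof is correct and follows the same core strategy as the paper's: approximate $\varphi$ by $\varphi_n \in C^\infty(\X) \cap \W^{1,2}(\X)$ (Theorem~3.17), use the density bounds $f_\U \geq C_\U$ and $f_{\P_\theta} \leq C_{\P_\theta}$ together with the pushforward identity to transfer $\L^2(\X)$ convergence to $\L^2(\calU)$ convergence after composition with $G_\theta$, and then use the classical chain rule on the smooth $\varphi_n$ plus the bound $|\partial G_{\theta,j}/\partial u_i| \leq C_G$ to establish $\L^2(\calU)$ convergence of $\partial[\varphi_n \circ G_\theta]/\partial u_i$ to the candidate weak derivative $h_i$. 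Where you differ from the paper is the final identification step. The paper closes by invoking the ``sufficient condition'' of Theorem~3.17 directly, whereas you instead test $h_i$ against arbitrary $\psi \in C_c^\infty(\calU)$ using classical integration by parts for the $C^1$ functions $\varphi_n \circ G_\theta$ and pass to the limit with Cauchy--Schwarz. Your route is slightly more careful on a technical point: Theorem~3.17 as stated requires a $C^\infty$ approximating sequence, but $\varphi_n \circ G_\theta$ is only $C^1$ when $G_\theta$ is merely once differentiable, so the paper's appeal to that theorem is nominally not licensed (the conclusion still holds, because the closedness of $\W^{1,2}$ under simultaneous $\L^2$ convergence of functions and derivatives does not in fact need smoothness of the approximants). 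Your direct integration-by-parts argument sidesteps this entirely. You also make explicit how the pushforward inequality resolves the almost-everywhere ambiguity of $D_{x_j}\varphi$ under composition with $G_\theta$, a point the paper leaves implicit. Both approaches rely on exactly the same inequality and the same assumptions; yours is marginally more self-contained at the cost of a few extra lines.
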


\begin{proof}

Since $\X$ is open, by \Cref{thm:sobolev_function_approximation} there is a sequence $\varphi_n \in C^\infty(\X) \cap \W^{1, 2}(\X)$ such that
\begin{talign*}
    \| \varphi - \varphi_n \|_{\L^2(\X)} \to 0,&\quad n \to \infty, \\
    \left\| D_{x_j}\varphi - \frac{\partial \varphi_n}{\partial x_j} \right\|_{\L^2(\X)} \to 0,&\quad n \to \infty,
\end{talign*}
The proof proceeds as follows: we show that the sequence $\varphi_n \circ G_\theta$ approximates $\varphi \circ G_\theta$, and $\frac{\partial \left[\varphi_n \circ G_\theta\right]}{\partial u_i}$ approximates the sum in the statement of the lemma, $\sum_{j=1}^d [D_{x_j} \varphi \circ G_\theta] \frac{\partial {G_{\theta, j}}}{\partial u_i}$, in $\L^2(\calU)$--norm. Then, by the sufficient condition in \Cref{thm:sobolev_function_approximation}, $\varphi \circ G_\theta$ lies in $\W^{1, 2}(\calU)$, and its weak derivative in $u_i$ is $\sum_{j=1}^d [D_{x_j} \varphi \circ G_\theta](u) \frac{\partial {G_{\theta, j}}}{\partial u_i}(u)$, for any $i \in [1, s]$.

Since $\P_\theta$ has a density,
for any $\X$-measurable $f$ it holds that
\begin{talign*}
    \int_\X f(x) f_{\P_\theta}(x) \d x = \int_\calU \left[f \circ G_\theta\right](u) f_\U(u) \d u.
\end{talign*}
Together with density bounds, this gives $\|\varphi \circ G_\theta - \varphi_n \circ G_\theta\|_{\L^2(\calU)} \to 0$ as
\begin{talign*}
   \int_\calU \left(\varphi \circ G_\theta(u) - \varphi_n \circ G_\theta(u) \right)^2 \d u 
   \leq C_\U^{-1} \int_\calU \left(\varphi \circ G_\theta(u) - \varphi_n \circ G_\theta(u) \right)^2 f_\U(u) \d u 
   &= C_\U^{-1} \int_\X \left(\varphi(x)  - \varphi_n(x)\right)^2 f_{\P_\theta}(x) \d x \\
   &\leq C_\U^{-1} C_{\P_\theta} \int_\X \left(\varphi(x)  - \varphi_n(x)\right)^2 \d x.
\end{talign*}
In the same fashion, $\|D_{x_j}\varphi \circ G_\theta - \frac{\partial \varphi_n}{\partial x_j} \circ G_\theta\|_{\L^2(\calU)} \to 0$ since
\begin{talign*}
   \int_\calU \big(D_{x_j}\varphi \circ G_\theta(u) - \frac{\partial \varphi_n}{\partial x_j} \circ G_\theta (u)\big)^2 \d u 
   & \leq C_\U^{-1} \int_\calU \big(D_{x_j}\varphi \circ G_\theta (u) - \frac{\partial \varphi_n}{\partial x_j} \circ G_\theta (u) \big)^2 f_\U(u) \d u \\
   &= C_\U^{-1} \int_\X \big(D_{x_j}\varphi(x)  - \frac{\partial \varphi_n}{\partial x_j} (x)\big)^2 f_{\P_\theta}(x) \d x \nonumber  \\
   &\leq C_\U^{-1} C_{\P_\theta} \int_\X \big(D_{x_j}\varphi (x)  - \frac{\partial \varphi_n}{\partial x_j} (x)\big)^2 \d x.
\end{talign*}
Since $\varphi$ and $G_\theta$ are both differentiable, the ordinary chain rules applies to $\varphi_n \circ G_\theta$,
\begin{talign*}
    \frac{\partial [\varphi_n \circ G_\theta]}{\partial u_i}= \sum_{j=1}^d \big[\frac{\partial \varphi_n }{\partial x_j} \circ G_\theta \big] \frac{\partial {G_{\theta, j}}}{\partial u_i},
\end{talign*}
and for any $i \in [1, s]$ the convergence of derivatives 
 $\|[D_{x_j} \varphi \circ G_\theta] \frac{\partial {G_{\theta, j}}}{\partial u_i} - \frac{\partial [\varphi_n \circ G_\theta]}{\partial u_i} \|_{\L^2(\calU)} \to 0$ follows since
\begin{talign}
    \int_\calU \Big( \sum_{j=1}^d \big[D_{x_j} \varphi \circ G_\theta\big] \frac{\partial {G_{\theta, j}}}{\partial u_i} - \frac{\partial [\varphi_n \circ G_\theta]}{\partial u_i} \Big)^2  \d u 
    &= \int_\calU \Big( \sum_{j=1}^d \big[D_{x_j}\varphi \circ G_\theta - \frac{\partial \varphi_n}{\partial x_j} \circ G_\theta\big]  \frac{\partial {G_{\theta, j}}}{\partial u_i} \Big)^2 \d u \nonumber \\
    &\leq 2 \sum_{j=1}^d \int_\calU \Big(  \big[D_{x_j}\varphi \circ G_\theta - \frac{\partial \varphi_n}{\partial x_j} \circ G_\theta\big]  \frac{\partial {G_{\theta, j}}}{\partial u_i} \Big)^2 \d u \nonumber \\
    &\leq 2 C_G^2 \sum_{j=1}^d \int_\calU \big(  D_{x_j}\varphi \circ G_\theta - \frac{\partial \varphi_n}{\partial x_j} \circ G_\theta  \big)^2 \d u \nonumber
\end{talign}
where
%
the first inequality is using the inequality $(\sum_{i=1}^d a_i)^2 \leq 2 \sum_{i=1}^d a_i^2$.
This completes the proof.

\end{proof}

\paragraph{Chain rule for $\W^{l, 2}$.} To extend~\Cref{lemma:chain_rule_w1p} to Sobolev spaces of order higher than 1, we need the following version of the weak derivative product rule, for a product of a function $f$ in $\W^{1, 2}$ and bounded differentiable function $g$ with bounded derivatives. Other versions of the product rule---for different regularity assumptions on $g$---exist in the literature (for example,~\citet{adams2003sobolev}); we will require this specific form.
\begin{lemma}[Product rule]
\label{lemma:product_rule}
    Suppose $\X \subseteq \R^d$ is open, $f \in \W^{1, 2}(\X)$, $g(x)$ is differentiable on $\X$, and $g(x) \leq L$, $[\partial g / \partial x_i ] (x) \leq L$ for all $x \in \X$ for some constant $L$. Then $fg \in \W^{1, 2}(\X)$ and for any $i \in [1, d]$,
    \begin{talign*}
        D_{x_i} [fg] = [ D_{x_i} f ] g + f \big[\partial g / \partial x_i\big]
    \end{talign*}
\end{lemma}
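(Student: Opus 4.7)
The plan is to approximate $f$ by smooth functions via \Cref{thm:sobolev_function_approximation}, apply the classical product rule to each approximation, and then pass to the limit in $\L^2(\X)$ using the boundedness of $g$ and $\partial g/\partial x_i$. Concretely, since $\X$ is open and $f \in \W^{1,2}(\X)$, I first invoke \Cref{thm:sobolev_function_approximation} to obtain a sequence $f_n \in C^\infty(\X) \cap \W^{1,2}(\X)$ such that $\|f_n - f\|_{\L^2(\X)} \to 0$ and $\|\partial f_n/\partial x_i - D_{x_i} f\|_{\L^2(\X)} \to 0$ for every $i \in [1,d]$. Because $f_n$ is smooth and $g$ is differentiable, each product $f_n g$ is continuously differentiable on $\X$, and the classical product rule gives $\partial(f_n g)/\partial x_i = (\partial f_n/\partial x_i) g + f_n (\partial g/\partial x_i)$ pointwise.

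Next I would verify the target weak-derivative identity through integration against test functions. For any $\phi \in C_c^\infty(\X)$, classical integration by parts applies to $f_n g$ (the boundary term vanishes because $\phi$ has compact support in $\X$), giving
\begin{talign*}
    \int_\X f_n g \, \frac{\partial \phi}{\partial x_i} \d x = -\int_\X \Bigl( \frac{\partial f_n}{\partial x_i} g + f_n \frac{\partial g}{\partial x_i}\Bigr) \phi \d x.
\end{talign*}
Using $|g| \leq L$ and $|\partial g/\partial x_i| \leq L$, Cauchy--Schwarz together with the $\L^2$ convergences above yields
\begin{talign*}
    \|(f_n - f) g\|_{\L^2(\X)} &\leq L \|f_n - f\|_{\L^2(\X)} \to 0, \\
    \|(\partial f_n/\partial x_i - D_{x_i} f) g \|_{\L^2(\X)} &\leq L \|\partial f_n/\partial x_i - D_{x_i} f\|_{\L^2(\X)} \to 0, \\
    \|(f_n - f) (\partial g/\partial x_i) \|_{\L^2(\X)} &\leq L \|f_n - f\|_{\L^2(\X)} \to 0.
\end{talign*}
Since $\partial \phi/\partial x_i$ and $\phi$ are bounded and have compact support, passing to the limit $n \to \infty$ on both sides gives
\begin{talign*}
    \int_\X fg \, \frac{\partial \phi}{\partial x_i} \d x = -\int_\X \bigl([D_{x_i} f]\, g + f\, [\partial g/\partial x_i]\bigr) \phi \d x,
\end{talign*}
which is exactly the defining identity of the weak derivative. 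Membership $fg \in \L^2(\X)$ follows from $\|fg\|_{\L^2(\X)} \leq L \|f\|_{\L^2(\X)}$, and the candidate derivative lies in $\L^2(\X)$ via the analogous bound $L (\|D_{x_i} f\|_{\L^2(\X)} + \|f\|_{\L^2(\X)})$; together these place $fg$ in $\W^{1,2}(\X)$ with $D_{x_i}[fg] = [D_{x_i} f] g + f [\partial g/\partial x_i]$.

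The main delicate step is justifying classical integration by parts for $f_n g$, since $g$ is only assumed differentiable with bounded partials rather than $C^1$. However, the bound $|\partial g/\partial x_i| \leq L$ and the mean value theorem make $g$ locally Lipschitz and hence absolutely continuous along coordinate lines, which is all that is required to carry out the one-dimensional integration by parts (after restricting to the compact support of $\phi$ and applying Fubini). Beyond this, the argument is routine: everything else reduces to $\L^2$-norm estimates exploiting the uniform bounds on $g$ and its partials, combined with the $\W^{1,2}$ convergence of $f_n$ to $f$ supplied by \Cref{thm:sobolev_function_approximation}.
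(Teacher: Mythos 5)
Your proof uses the same core machinery as the paper — approximating $f$ by smooth $f_n$ via \Cref{thm:sobolev_function_approximation}, applying the ordinary product rule to $f_n g$, and passing to the limit using the $\L^2$ bounds furnished by $|g|\leq L$ and $|\partial g/\partial x_i|\leq L$ — but closes the argument differently. The paper invokes the sufficiency direction of \Cref{thm:sobolev_function_approximation}: since $f_n g \to fg$ and $\partial(f_n g)/\partial x_i \to [D_{x_i}f]g + f[\partial g/\partial x_i]$ in $\L^2(\X)$, it concludes directly that $fg \in \W^{1,2}(\X)$ with the claimed weak derivative. You instead verify the defining integral identity of the weak derivative against test functions $\phi \in C_c^\infty(\X)$, which requires classical integration by parts for $f_n g$; since $g$ is only assumed differentiable with bounded partials (not $C^1$), you correctly flag this as the delicate step and justify it via the local Lipschitz property that the bounded-derivative hypothesis implies. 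This is a bit more work, but it is arguably the more careful route: strictly speaking, \Cref{thm:sobolev_function_approximation} applies to sequences in $C^\infty(\X)\cap\W^{1,2}(\X)$, whereas $f_n g$ is only differentiable — so the paper's appeal to that theorem quietly requires the very fact you verify, namely that the ordinary derivative of $f_n g$ coincides with its weak derivative. Both proofs are correct, but yours makes this point explicit.
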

\begin{proof}
    By the criterion in~\Cref{thm:sobolev_function_approximation}, there is a sequence of smooth functions $f_n$ approximating $f$, meaning 
    \begin{talign*}
        \int_\X (f(x) - f_n(x))^2 \d x \to 0 \text{ as } n \to \infty,\\
        \int_\X \big(D_{x_i} f(x) - [\partial f_n/ \partial x_i](x)\big)^2 \d x \to 0 \text{ as } n \to \infty.
    \end{talign*}
     We will show that $f_n g$ approximates $fg$ with weak derivatives taking the form $[ D_{x_i} f ] g + f [\partial g / \partial x_i]$; by the aforementioned criterion, it will follow that $fg \in \W^{1, 2}(\X)$.

    First, we establish convergence of functions. As $n \to \infty$,
    \begin{talign*}
        \| fg - f_n g \|^2_{\L^2(\X)} = \int_\X \left(f(x) g(x) - f_n (x) g(x)\right)^2 \d x \leq L^2 \int_\X \left(f(x) - f_n (x) \right)^2 \d x \to 0.
    \end{talign*}
    By the ordinary chain rule, $\partial [f_n g] / \partial x_i = [\partial f_n / \partial x_i]g + f [\partial g / \partial x_i]$.
    Then, applying triangle inequality for norms and the fact that $(a+b)^2 \leq 2a^2 + 2b^2$ for any $a$, $b$, we get that for $n \to \infty$,
    \begin{talign*}
        \big\| \frac{\partial f_n }{ \partial x_i}g + f_n \frac{\partial g}{\partial x_i} - \left[ D_{x_i} f \right] g - f \frac{\partial g }{ \partial x_i} \big\|^2_{\L^2(\X)} 
        &\leq 2\left\| \frac{\partial f_n}{ \partial x_i}g - \left[ D_{x_i} f \right] g \right\|^2_{\L^2(\X)} + 2\left\| f_n \frac{\partial g}{ \partial x_i} - f \frac{\partial g}{ \partial x_i} \right\|^2_{\L^2(\X)} \\
        &\leq 2 L^2 \left\| \frac{\partial f_n}{ \partial x_i} - \left[ D_{x_i} f \right] \right\|^2_{\L^2(\X)} + 2 L^2 \left\| f_n - f \right\|_{\L^2(\X)} \to 0.
    \end{talign*}
    This completes the proof.
\end{proof}

We are now ready to extend the chain rule from order 1---proven in~\Cref{lemma:chain_rule_w1p}---to arbitrary order $l$.

\begin{theorem}[Chain rule for $\W^{l, 2}$]
\label{thm:k_circ_G_is_Sobolev}
Suppose
\begin{itemize}
  \item $\varphi \in \W^{l_\varphi, 2}(\X)$.
  \item $\calU \subset \R^{s}$ is bounded, $\X \subset \R^d$ is open, and $\X = G_\theta(\calU)$ for some $G_\theta = (G_{\theta, 1}, \dots, G_{\theta, d})^\top$. For some $l_G$ and any $|\alpha|\leq l_G$, $j \in [1,s]$, the derivative $\partial^\alpha G_{\theta,j}$ exists and is in $\L^\infty(\calU)$.
  \item $\U$ is a probability distribution on $\calU$ that has a density $f_\U: \calU \to [C_\U, \infty)$ for $C_\U>0$.
  \item $\P_\theta$ is a pushforward of $\U$ under $G_\theta$ with a density bounded above.
\end{itemize}
Then $\varphi \circ G_\theta \in \W^{l, 2}(\calU)$ for $l=\min\{l_\varphi, l_G\}$, and for any $k \leq l$ and $|\alpha_0| = k$, the derivative takes an $\alpha_0$-specific $(\kappa, \beta, \alpha, \eta)$--form
\begin{talign}
\label{eq:general_faa_di_bruno}
    D^{\alpha_0}[\varphi \circ G_\theta] = \sum_{i=1}^I \sum_{j=1 }^{d^{\kappa_i}} \left[ D^{\beta_{ij}} \varphi \circ G_\theta \right] \prod_{l=1}^{\kappa_i} \partial^{\alpha_{ijl}} G_{\theta, \eta_{ijl} },
\end{talign}
where $I \in \mathbb{N}$, and for any $i \in [1, I]$, $k \geq \kappa_i \in \mathbb{N}$; $\beta_{ij} \in \mathbb{N}^d$ is a multi-index of size $\kappa_i$ for $j \in [1, d^{\kappa_i}]$; $\alpha_{ijl} \in \mathbb{N}^s$ is of size $|\alpha_{ijl}| \leq k$, and $\eta_{ijl} \in [1, d]$ for $l \in [1, \kappa_i]$.
\end{theorem}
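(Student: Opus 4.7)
The natural approach is induction on the derivative order $k\le l$, using Lemma~\ref{lemma:chain_rule_w1p} to differentiate compositions of the form $\psi\circ G_\theta$ in one direction, and Lemma~\ref{lemma:product_rule} to differentiate the resulting products of bounded functions. The $(\kappa,\beta,\alpha,\eta)$--form in~\eqref{eq:general_faa_di_bruno} is a bookkeeping device for the Fa\`a di Bruno pattern: each increase in $k$ either increments some $|\alpha_{ijl}|$ by $1$ (when the product rule acts on an outer $G_\theta$--factor) or raises $\kappa_i$ to $\kappa_i+1$ and appends a new $G_\theta$--factor (when Lemma~\ref{lemma:chain_rule_w1p} is applied to the inner composition $D^{\beta_{ij}}\varphi\circ G_\theta$). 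The role of the density assumptions on $\U$ and $\P_\theta$ is to guarantee, via the pushforward, that $L^2(\X)$--convergence of approximants of Sobolev functions on $\X$ transfers to $L^2(\calU)$--convergence of their compositions with $G_\theta$, exactly as in the proof of Lemma~\ref{lemma:chain_rule_w1p}.

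\emph{Base case $k=1$.} Since $l\le l_\varphi$, we have $\varphi\in\W^{1,2}(\X)$, and since $l\le l_G$, the first-order partials of $G_\theta$ exist and are bounded. Lemma~\ref{lemma:chain_rule_w1p} therefore yields $\varphi\circ G_\theta\in\W^{1,2}(\calU)$ together with the identity $D_{u_i}[\varphi\circ G_\theta]=\sum_{j=1}^d[D_{x_j}\varphi\circ G_\theta]\,\partial G_{\theta,j}/\partial u_i$, which is an instance of~\eqref{eq:general_faa_di_bruno} with $I=1$, $\kappa_1=1$, $\beta_{1j}=e_j$, $\alpha_{1j1}=e_i$, $\eta_{1j1}=j$.

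\emph{Inductive step.} Assume the statement holds at order $k-1<l$, and fix any $\alpha_0$ with $|\alpha_0|=k$, written as $\alpha_0=\alpha_0'+e_m$ with $|\alpha_0'|=k-1$. Apply $D_{u_m}$ to each summand of the $\alpha_0'$--representation of $D^{\alpha_0'}[\varphi\circ G_\theta]$. Each summand is a product of (i) the bounded-with-bounded-derivatives factors $\partial^{\alpha_{ijl}}G_{\theta,\eta_{ijl}}$ (bounded because $|\alpha_{ijl}|\le k-1$ and $|\alpha_{ijl}|+1\le k\le l_G$ by hypothesis on $G_\theta$), and (ii) a single $\W^{1,2}(\calU)$--factor $D^{\beta_{ij}}\varphi\circ G_\theta$. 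The latter lies in $\W^{1,2}(\calU)$ because $D^{\beta_{ij}}\varphi\in\W^{1,2}(\X)$ (since $|\beta_{ij}|=\kappa_i\le k-1\le l_\varphi-1$) and so Lemma~\ref{lemma:chain_rule_w1p} applies to it directly, giving
\begin{talign*}
D_{u_m}\bigl[D^{\beta_{ij}}\varphi\circ G_\theta\bigr]=\sum_{j'=1}^d\bigl[D^{\beta_{ij}+e_{j'}}\varphi\circ G_\theta\bigr]\frac{\partial G_{\theta,j'}}{\partial u_m}.
\end{talign*}
Iterating Lemma~\ref{lemma:product_rule} to push $D_{u_m}$ across the bounded $G_\theta$--factors is then legitimate at each step, and produces either $\partial^{\alpha_{ijl}+e_m}G_{\theta,\eta_{ijl}}$ in place of one $\partial^{\alpha_{ijl}}G_{\theta,\eta_{ijl}}$, or replaces $[D^{\beta_{ij}}\varphi\circ G_\theta]$ by the sum above and appends a new $\partial G_{\theta,j'}/\partial u_m$ factor. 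Each resulting term still fits the $(\kappa,\beta,\alpha,\eta)$--template: $\kappa_i$ either stays the same or becomes $\kappa_i+1\le k$, and all new multi-indices satisfy $|\alpha_{ijl}|\le k$ and $|\beta_{ij}|\le k$.

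\emph{Finiteness of $L^2$--norms and closing the induction.} For every resulting summand, the $G_\theta$--factors are uniformly bounded, and
\begin{talign*}
\int_\calU\bigl(D^{\beta_{ij}}\varphi\circ G_\theta(u)\bigr)^2\,\d u\le C_\U^{-1}\!\!\int_\X\!\bigl(D^{\beta_{ij}}\varphi(x)\bigr)^2 f_{\P_\theta}(x)\,\d x\le C_\U^{-1}C_{\P_\theta}\|D^{\beta_{ij}}\varphi\|_{\L^2(\X)}^2<\infty,
\end{talign*}
exactly as in the proof of Lemma~\ref{lemma:chain_rule_w1p}. Hence each summand, and therefore $D^{\alpha_0}[\varphi\circ G_\theta]$, lies in $\L^2(\calU)$, which together with the identity gives $\varphi\circ G_\theta\in\W^{k,2}(\calU)$ and completes the induction at order $k$. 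The main technical point (and the only one requiring some care) will be the multi-index bookkeeping that confirms the template~\eqref{eq:general_faa_di_bruno} is preserved under both kinds of differentiation, and the verification that the hypotheses of Lemmas~\ref{lemma:chain_rule_w1p} and~\ref{lemma:product_rule} are met at every inductive step; the regularity $k\le\min\{l_\varphi,l_G\}$ is exactly what is needed for this.
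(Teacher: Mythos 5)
Your proof follows essentially the same inductive argument as the paper's: same base case from Lemma~\ref{lemma:chain_rule_w1p}, same inductive step using Lemma~\ref{lemma:chain_rule_w1p} on the $D^{\beta_{ij}}\varphi\circ G_\theta$ factor and Lemma~\ref{lemma:product_rule} on the bounded $G_\theta$--derivatives, and the same pushforward/density argument for $\L^2(\calU)$ finiteness. The step you defer as ``multi-index bookkeeping'' — exhibiting the explicit $(\tilde\kappa,\tilde\beta,\tilde\alpha,\tilde\eta)$ that realises the order-$k$ template — is precisely the portion the paper writes out in full, but your description of how the template is preserved under each of the two differentiation rules is correct and would support that construction.
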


By saying the $(\kappa, \beta, \alpha, \eta)$ form is $\alpha_0$-specific, we mean that the values of $I, (\kappa, \beta, \alpha, \eta)$ depend on $\alpha_0$, and may be different for $\alpha'_0 \neq \alpha_0$; we do not index $I, (\kappa, \beta, \alpha, \eta)$ by $\alpha_0$ for the sake of readability.

Before proving this result, let us point out that the $(\kappa, \beta, \alpha, \eta)$--form introduced in the theorem can be seen as a form of the Faà di Bruno's formula which generalises the chain rule to higher derivatives~\citep[Theorem 1]{Constantine1996}. 
However, since our ultimate goal is to show $\varphi \circ G_\theta \in \W^{l, 2}(\calU)$, and the expression for the derivative is simply a means for proving that, an unspecified $(\kappa, \beta, \alpha, \eta)$--form suffices. It is simpler to conduct a proof for general $(\kappa, \beta, \alpha, \eta)$ without using explicit Faà di Bruno forms.

\begin{proof}[Proof of~\Cref{thm:k_circ_G_is_Sobolev}]
Note that $\varphi \circ G_\theta \in \W^{l, 2}(\calU)$ if and only if $\varphi \circ G_\theta \in \W^{k, 2}(\calU)$ for $k \leq l$. We use this to construct a proof by induction: we show the statement holds for $k=1$, and that $\varphi \circ G_\theta \in \W^{k, 2}(\calU)$ implies $\varphi \circ G_\theta \in \W^{k+1, 2}(\calU)$ if $k+1\leq l$ (and the weak derivatives take a $(\kappa, \beta, \alpha, \eta)$-form stated in~\Cref{eq:general_faa_di_bruno}).

\paragraph{Case $k=1$:} $\varphi \circ G_\theta$ is in $\W^{1, 2}(\calU)$. 
    
Suppose $\alpha_0=e[m]$ for some unit vector $e[m] = (0, \dots, 0, 1, 0, \dots, 0)$ where the $1$ is the $m$'th element. Then, as proven in~\Cref{lemma:chain_rule_w1p}, $D^{e[m]}[\varphi \circ G_\theta] = D_{u_m}[\varphi \circ G_\theta]$ is equal to $\sum_{j=1}^d [D_{x_j} \varphi \circ G_\theta] [\partial {G_{\theta, j}} / \partial u_m]=\sum_{j=1}^d [D^{e[j]} \varphi \circ G_\theta] \partial^{e[m]} {G_{\theta, j}} $, so the statement holds for $I=1$, $\kappa_1 = 1$, $\beta_{1j} = e[j]$, $\alpha_{1j1} = e[m]$, $\eta_{1j1} = j$.

\paragraph{Case $k$ implies $k+1$:} If $k+1 \leq l$ and $\varphi \circ G_\theta$ is in $\W^{k, 2}(\calU)$, and for every $|\alpha_0|=k$~\Cref{eq:general_faa_di_bruno} holds for some $\alpha_0$-specific $(\kappa, \beta, \alpha, \eta)$, then $\varphi \circ G_\theta$ is in $\W^{k+1, 2}(\calU)$, and for any $|\tilde{\alpha}_0|=k+1$ there is a  $(\tilde{\kappa}, \tilde{\beta}, \tilde{\alpha}, \tilde{\eta})$--form, $|\tilde{\kappa}|=\tilde I$,
\begin{talign}
\label{eq:induction_proof_eq5}
    D^{\tilde{\alpha}_0}[\varphi \circ G_\theta] = \sum_{i=1}^{\tilde{I}} \sum_{j=1 }^{d^{\tilde{\kappa}_i}} \left[ D^{\tilde{\beta}_{ij}} \varphi \circ G_\theta \right] \prod_{l=1}^{\tilde{\kappa}_i} \partial^{\tilde{\alpha}_{ijl}} G_{\theta, \tilde{\eta}_{ijl} }.
\end{talign}
By induction assumption, $\varphi \circ G_\theta$ is in $\W^{k, 2}(\calU)$, so it is in $\W^{k+1, 2}(\calU)$ if and only if $D^{\alpha_0} \left[ \varphi \circ G_\theta \right]$ is in $\W^{1, 2}(\calU)$ for any $\alpha_0$ of size $k$. The latter can be shown by studying the $(\kappa, \beta, \alpha, \eta)$--form that $D^{\alpha_0}[\varphi \circ G_\theta]$ takes by~\eqref{eq:general_faa_di_bruno}, for some $\alpha_0$--specific $(\kappa, \beta, \alpha, \eta)$. Since $l_\varphi \geq l \geq k+1$ (the last inequality holds by the induction assumption), it holds that $\W^{l_\varphi, 2}(\X) \subseteq \W^{l, 2}(\X) \subseteq \W^{k+1, 2}(\X)$. Then $\varphi \in \W^{k+1, 2}(\X)$, and since $|\beta_{ij}|=\kappa_i \leq k$ by definition of $\beta_{ij}$, we have $D^{\beta_{ij}} \varphi \in \W^{1, 2}(\X)$ for all $i, j$. Then by~\Cref{lemma:chain_rule_w1p}, its composition with $G_\theta$ is in $\W^{1, 2}(\calU)$, meaning $D^{\beta_{ij}}\varphi \circ G_\theta \in \W^{1, 2}(\calU)$. Consequently, $D^{\alpha_0}\left[\varphi \circ G_\theta\right]$ as per~\Cref{eq:general_faa_di_bruno} is a sum over the product of functions in $\W^{1, 2}(\calU)$, and bounded functions with bounded derivatives; by~\Cref{lemma:product_rule}, such product is in $\W^{1, 2}(\calU)$, and it follows that $D^{\alpha_0}\left[\varphi \circ G_\theta\right] \in \W^{1, 2}(\calU)$ as well.

Finally, we show that for any fixed $|\tilde{\alpha}_0|=k+1$ there are $\tilde{I}, \tilde{\kappa}, \tilde{\beta}, \tilde{\alpha}, \tilde{\eta}$ for which~\eqref{eq:induction_proof_eq5} holds; this will conclude the induction step.
Suppose $\alpha_0$ of size $k$, $|\alpha_0| = k$, is such that $\tilde{\alpha}_0 = \alpha_0 + e[m]$ for some $\alpha_0$ (that is unrelated to $\alpha_0$ in the previous part of the proof) and a unit vector $e[m]$ (such pair of $m$ and $\alpha_0$ must exist as $|\tilde{\alpha}_0|=k+1$). For this $\alpha_0$, in a slight abuse of notation, we shall say that $\kappa, \beta, \alpha, \eta$ are such that $D^{\alpha_0}[\varphi \circ G_\theta]$ takes a $(\kappa, \beta, \alpha, \eta)$ form. Then, by the sum rule for weak derivatives and the product rule of~\Cref{lemma:product_rule}, $D^{\tilde{\alpha}_0}\left[\varphi \circ G_\theta\right] = D_{u_m}\left[D^{\alpha_0}\left[\varphi \circ G_\theta\right]\right]$ takes the form
\begin{talign}
        D^{\tilde{\alpha}_0}\left[\varphi \circ G_\theta\right] = D_{u_m}\left[D^{\alpha_0}\left[\varphi \circ G_\theta\right]\right]
        & = \sum_{i=1}^I \sum_{j=1 }^{d^{\kappa_i}}  D_{u_m}\left[D^{\beta_{ij}} \varphi \circ G_\theta\right] \prod_{l=1}^{\kappa_i} \partial^{\alpha_{ijl}} G_{\theta, \eta_{ijl} } \nonumber \\
        &\qquad + \sum_{i=1}^I \sum_{j=1 }^{d^{\kappa_i}} \left[ D^{\beta_{ij}} \varphi \circ G_\theta \right] \partial^{e[m]} \Big[\prod_{l=1}^{\kappa_i} \partial^{\alpha_{ijl}} G_{\theta, \eta_{ijl} }\Big].
\label{eq:induction_proof_eq}
\end{talign}   
By the product rule for regular derivatives,
\begin{talign}   
    \partial^{e[m]} \Big[\prod_{l=1}^{\kappa_i} \partial^{\alpha_{ijl}} G_{\theta, \eta_{ijl} }\Big] = \sum_{l_0=1}^{\kappa_i} \partial^{\alpha_{ijl_0} + e[m]} G_{\theta, \eta_{ijl_0}} \prod_{\substack{l \in [1, \kappa_i] \\ l \neq l_0}} \partial^{\alpha_{ijl}} G_{\theta, \eta_{ijl} }.
\end{talign}   

Since $D^{\beta_{ij}} \varphi \in \W^{1, 2}(\X)$, the statement in~\Cref{lemma:chain_rule_w1p} applies to its composition with $G_\theta$, meaning
\begin{talign*}
    D_{u_m}\left[D^{\beta_{ij}} \varphi \circ G_\theta\right] = \sum_{j_0=1}^d \left[D_{x_{j_0}} \left[D^{\beta_{ij}} \varphi\right] \circ G_\theta\right] \frac{\partial {G_{\theta, j_0}}}{\partial u_m} = \sum_{j_0=1}^d \left[D^{\beta_{ij} + e[j_0]} \varphi \circ G_\theta\right] \frac{\partial {G_{\theta, j_0}}}{\partial u_m},
\end{talign*}
where, recall, $e[j_0]$ is a $d$-dimensional unit vector with $1$ as the $j_0$'th element. Substituting these into~\eqref{eq:induction_proof_eq}, we get
\begin{talign}
\begin{split}
\label{eq:induction_proof_eq2}
        D^{\tilde{\alpha}_0}[\varphi \circ G_\theta] 
        &= \sum_{i=1}^I \sum_{j=1 }^{d^{\kappa_i}}  \sum_{j_0=1}^d \left[D^{\beta_{ij} + e[j_0]} \varphi \circ G_\theta\right] \frac{\partial {G_{\theta, j_0}}}{\partial u_m} \prod_{l=1}^{\kappa_i} \partial^{\alpha_{ijl}} G_{\theta, \eta_{ijl} }  \\
        &\qquad \qquad \qquad \qquad \quad+\sum_{i=1}^I \sum_{l_0=1}^{\kappa_i} \sum_{j=1 }^{d^{\kappa_i}} \left[ D^{\beta_{ij}} \varphi \circ G_\theta \right] \partial^{\alpha_{ijl_0} + e[m]} G_{\theta, \eta_{ijl_0}} \prod_{\substack{l \in [1, \kappa_i] \\ l \neq l_0}} \partial^{\alpha_{ijl}} G_{\theta, \eta_{ijl} }
\end{split}
\end{talign}  
Now all that is left to do is find $\tilde{I}, \tilde{\kappa}, \tilde{\beta}, \tilde{\alpha}, \tilde{\eta}$ for which this will that the $(\tilde{\kappa}, \tilde{\beta}, \tilde{\alpha}, \tilde{\eta})$--form similar to~\Cref{eq:general_faa_di_bruno}. One can already see this should be possible, due to the flexibility in the definition of $(\tilde{\kappa}, \tilde{\beta}, \tilde{\alpha}, \tilde{\eta})$--forms; for completeness, we give the exact values now.

Define $\kappa_0 = 0$. Take $\tilde{I} = I+\sum_{i=1}^I \kappa_i$, $\tilde{\kappa}_i=\kappa_i+1$ for $i \in [1, I]$ and $\tilde{\kappa}_i=\kappa_p$ when $i \in [I+\sum_{j=0}^{p-1} \kappa_j, I+\sum_{j=0}^{p} \kappa_j]$ for $p \in [1, I]$, $p\in \mathbb{N}$, and
\begin{talign*}
    \tilde{\beta}_{ij} = 
    \begin{cases}
        \beta_{i\lfloor j/d \rfloor} + e[j \mod d], & i \in [1, I],\ j \in [1, 2^{\kappa_i+1}], \\
        \beta_{pj}, & i \in (I+\sum_{j=0}^{p-1} \kappa_j, I+\sum_{j=0}^{p} \kappa_j],\ j \in [1, 2^{\kappa_p}] \text{ for } p \in [1, I],
    \end{cases}
\end{talign*}
\begin{talign*}
    \tilde{\alpha}_{ijl} = 
    \begin{cases}
         \alpha_{i\lfloor j/d \rfloor l}, & i \in [1, I],\ j \in [1, 2^{\kappa_i+1}],\ l \in [1, \kappa_i], \\
         e[m], & i \in [1, I],\ j \in [1, 2^{\kappa_i+1}],\ l = \kappa_i+1, \\
         \alpha_{pjl}, & i \in (I+\sum_{j=0}^{p-1} \kappa_j, I+\sum_{j=0}^{p} \kappa_j],\ j \in [1, 2^{\kappa_p}],\ l \in [1, \kappa_p]\setminus \{i - I-\sum_{j=0}^{p-1} \kappa_j\}  \text{ for } p \in [1, I], \\
         \alpha_{pjl} + e[m], & i \in (I+\sum_{j=0}^{p-1} \kappa_j, I+\sum_{j=0}^{p} \kappa_j],\ j \in [1, 2^{\kappa_p}],\ l = i - I-\sum_{j=0}^{p-1} \kappa_j \text{ for } p \in [1, I],
    \end{cases}
\end{talign*}
\begin{talign*}
    \tilde{\eta}_{ijl} = 
    \begin{cases}
        \eta_{i\lfloor j/d \rfloor l}, & i \in [1, I],\ j \in [1, 2^{\kappa_i+1}],\ l \in [1, \kappa_i], \\
        j \mod d , & i \in [1, I],\ j \in [1, 2^{\kappa_i+1}],\ l = \kappa_i+1, \\
        \eta_{pjl}, & i \in (I+\sum_{j=0}^{p-1} \kappa_j, I+\sum_{j=0}^{p} \kappa_j],\ j \in [1, 2^{\kappa_p}],\ l \in [1, \kappa_p]  \text{ for } p \in [1, I].
    \end{cases}
\end{talign*}
where $j \mod d$ is the remainder of dividing $j$ by $d$. Then,~\eqref{eq:induction_proof_eq2} becomes
\begin{talign*}
    D^{\tilde{\alpha}_0}[\varphi \circ G_\theta] = \sum_{i=1}^{\tilde{I}} \sum_{j=1 }^{d^{\tilde{\kappa}_i}} \left[ D^{\tilde{\beta}_{ij}} \varphi \circ G_\theta \right] \prod_{l=1}^{\tilde{\kappa}_i} \partial^{\tilde{\alpha}_{ijl}} G_{\theta, \tilde{\eta}_{ijl} }.
\end{talign*}
This completes the proof of the induction step, and the theorem.
\end{proof}

\subsection{Proof of~\Cref{thm:rate_of_convergence}}

Before proving the main theorem, we introduce two auxilliary lemmas,~\Cref{lemma:rkhs_of_matern,lemma:bq_convergence}. The former will allow us to apply the chain rule of~\Cref{thm:k_circ_G_is_Sobolev} to get $k(x, \cdot) \circ G_\theta \in \Hc$, and the latter claim the asymptotic rate of $m^{-\nu_c/s - 1/2}$. The proof of~\Cref{thm:rate_of_convergence} will follow. 

Given~\Cref{as:points,as:generator,as:kernels}, all that is missing to prove $k(x, \cdot) \circ G_\theta \in \Hc$ by applying~\Cref{thm:k_circ_G_is_Sobolev} is the connection between RKHS of Mat\'ern kernels, and Sobolev spaces.
To that end, we introduce a Lemma (that is a minor extension to classic results, see for instance~\citet[Corollary 10.48]{Wendland2005}) that links RKHS $\Hk$ of a Mat\'ern kernel $k$ of order $\nu$ with Sobolev spaces $\W^{l, 2}$ for $l \in \mathbb{N}_+$. In the Lemma, we briefly refer to Bessel potential spaces---only for their norm-equivalence both to the RKHS of Mat\'ern kernels, and to the Sobolev spaces of fractional order, which themselves lie between Sobolev spaces of integer order---and to extension operators, that allow us to extend results on $\R^d$ to open, connected, bounded $\X$ with a Lipschitz boundary. Every open, bounded, and convex $\X$ has a Lipschitz boundary~\citep{stein1970singular}; for example, this includes the hypercube $\X=(0, 1)^d$. For a detailed overview of Bessel potential spaces, fractional Sobolev spaces, and extension operators, we refer to~\citet{adams2003sobolev}; these will only appear in the proof of the following Lemma.

For a $\beta \in \R_+ \cup \{0\}$, we denote the ceiling operation $\lceil \beta \rceil = \min(\{ z \in \mathbb{N} \ |\ z \geq \beta \})$, and the rounding operation $\lfloor \beta \rfloor = \max(\{ z \in \mathbb{N} \ |\  z \leq \beta \})$.

\begin{lemma}
\label{lemma:rkhs_of_matern}
    Suppose $\X=\R^d$, or $\X \subseteq \R^d$ is open, connected, and bounded with a Lipshitz boundary, and $k$ is a Mat\'ern kernel on $\X$ of order $\nu$. Then, the RKHS $\Hk$ induced by $k$ lies between Sobolev spaces $\W^{\lceil \nu + d/2 \rceil,2}(\X)$ and $\W^{\lfloor \nu + d/2 \rfloor,2}(\X)$, meaning
    \begin{talign*}
        \W^{\lceil \nu + d/2 \rceil,2}(\X) \subseteq \Hk \subseteq \W^{\lfloor \nu + d/2 \rfloor,2}(\X).
    \end{talign*}
\end{lemma}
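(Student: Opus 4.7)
The plan is to reduce the statement to the classical identification of the Matérn RKHS with a Bessel potential space, and then sandwich the resulting fractional Sobolev space between the two integer-order Sobolev spaces indicated. Throughout, write $s = \nu + d/2$ for brevity, so the goal becomes $\W^{\lceil s \rceil, 2}(\X) \subseteq \Hk \subseteq \W^{\lfloor s \rfloor, 2}(\X)$.

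First, I would treat the case $\X = \R^d$. The Matérn kernel of order $\nu$ has a Fourier transform proportional to $(1 + \|\omega\|_2^2)^{-(\nu + d/2)}$, so by the standard characterisation of RKHSs via spectral densities (Corollary 10.48 in \citet{Wendland2005}), $\Hk$ is norm-equivalent to the Bessel potential space $H^{s}(\R^d) = \{ f \in \L^2(\R^d) : (1 + \|\cdot\|_2^2)^{s/2} \hat f \in \L^2(\R^d) \}$. For $s \geq 0$, the Bessel potential space $H^{s}(\R^d)$ coincides (with equivalent norms) with the fractional Sobolev space $\W^{s, 2}(\R^d)$ defined via the Gagliardo seminorm; see, e.g., \citet[Chapter 7]{adams2003sobolev}. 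The monotonicity of $\W^{t, 2}(\R^d)$ in $t$ then gives the chain of continuous inclusions
\begin{talign*}
    \W^{\lceil s \rceil, 2}(\R^d) \subseteq \W^{s, 2}(\R^d) \subseteq \W^{\lfloor s \rfloor, 2}(\R^d),
\end{talign*}
which, combined with the identification $\Hk \simeq \W^{s, 2}(\R^d)$, yields the claim on $\R^d$.

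For $\X \subsetneq \R^d$ open, connected, bounded, and with Lipschitz boundary, I would lift and restrict via an extension operator. By the Stein extension theorem~\citep{stein1970singular}, there exists a bounded linear extension operator $E: \W^{t, 2}(\X) \to \W^{t, 2}(\R^d)$ for every $t \geq 0$, and the restriction $R: \W^{t, 2}(\R^d) \to \W^{t, 2}(\X)$ is trivially bounded. For the upper inclusion $\Hk \subseteq \W^{\lfloor s \rfloor, 2}(\X)$, take $f \in \Hk(\X)$; by the restriction characterisation of the Matérn RKHS on bounded domains (again \citet[Section 10.7]{Wendland2005}), $f$ extends to some $\tilde f \in \Hk(\R^d) \simeq \W^{s, 2}(\R^d) \subseteq \W^{\lfloor s \rfloor, 2}(\R^d)$, and restricting back gives $f = R \tilde f \in \W^{\lfloor s \rfloor, 2}(\X)$. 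For the lower inclusion $\W^{\lceil s \rceil, 2}(\X) \subseteq \Hk(\X)$, take $g \in \W^{\lceil s \rceil, 2}(\X)$; apply $E$ to get $E g \in \W^{\lceil s \rceil, 2}(\R^d) \subseteq \W^{s, 2}(\R^d) \simeq \Hk(\R^d)$, and restrict to conclude $g = R(E g) \in \Hk(\X)$.

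The main obstacle is the bounded-domain half, because RKHSs on subdomains are defined intrinsically rather than as restrictions, so one must carefully invoke the restriction/extension correspondence between $\Hk(\X)$ and $\Hk(\R^d)$ that is valid precisely for translation-invariant kernels on Lipschitz domains. Once that identification is in place, everything else is a mechanical chain of continuous embeddings, and no computation with explicit Bessel functions or Gagliardo seminorms is needed.
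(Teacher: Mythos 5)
Your proposal is correct and follows essentially the same route as the paper: identify the Matérn RKHS with a Bessel potential space on $\R^d$ via the spectral characterisation, sandwich the corresponding fractional Sobolev space between integer-order Sobolev spaces, and then pass to bounded Lipschitz domains via the Stein extension operator — the paper defers this last step to Wendland's Corollary 10.48 while you spell out the extension/restriction argument explicitly. One small citation note: the spectral characterisation on $\R^d$ that you want is Wendland's Corollary 10.13, whereas Corollary 10.48 is the bounded-domain version.
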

\begin{proof}
    We start by proving the result for $\X=\R^d$. By~\citet[Corollary 10.13]{Wendland2005}, the RKHS of a Mat\'ern $k$ is norm-equivalent to the Bessel potential space $H^s(\X)$ for $s=\nu + d/2$. The Bessel potential space $H^s(\X)$, by~\citet[Section 7.62]{adams2003sobolev}, is norm-equivalent to a fractional Sobolev space (a Sobolev-Slobodeckij space) $W^{s,2}(\X)$, which lies between spaces of integer order, $W^{\lceil s \rceil,2}(\X) \subseteq W^{s,2}(\X) \subseteq W^{\lfloor s \rfloor,2}(\X)$. 
    
    Finally, the result $\W^{\lceil s \rceil,2}(\X) \subseteq \Hk \subseteq \W^{\lfloor s \rfloor,2}(\X)$ extends to an open connected bounded $\X\subset\R^d$ with a Lipshitz boundary identically to the proof of~\citet[Corollary 10.48]{Wendland2005}, which makes use of the extension operator introduced for such $\X$ by~\citet{stein1970singular}.
\end{proof}

To show the claimed asymptotic rate, we use the following straightforward corollary of~\citet[Theorem 9]{Wynne2020}.

\begin{lemma}[Corollary of Theorem 9 in~\citet{Wynne2020}]
\label{lemma:bq_convergence}
    Suppose for any $m \geq M \in \mathbb{N}_+$,
    \begin{itemize}
        \item $\U$ is a measure on a convex, open, and bounded $\calU \subset \R^s$ that has a density $f_\U: \calU \to [0, C_\U']$ for some $C_\U'>0$.
        \item $\{u_i\}_{i=1}^m$ are such that the fill distance $h_m =\mathcal O(m^{-1/s})$.
        \item $\{w_i\}_{i=1}^m$ are the optimal weights obtained based on the kernel $c_{\beta_m}$ and measure $\U$, parametrised by $\beta_m \in B$ for some parameter space $B$,
        \item for any $\beta \in B$, $c_{\beta}$ is a Mat\'ern kernel of order $\nu_c$; $\nu_c$ is independent of $\beta$.
    \end{itemize}
    Then, for some $C_0$ independent of $m$ and $f$, and any $f \in \Hc$ with $\|f\|_{\Hc} = 1$,
    \begin{talign*}
        \left| \int_\calU f (u)\U(\d u) - \sum_{i=1}^m w_i f(u_i) \right| \leq C_0 m^{-\nu_c/s - 1/2}.
    \end{talign*}
\end{lemma}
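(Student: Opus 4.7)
The plan is to reduce the pointwise quadrature error to a worst-case error in $\Hc$, and then invoke Theorem~9 of Wynne et al.\ (2020) on the Bayesian quadrature rate for Mat\'ern kernels.

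First, I would fix $f \in \Hc$ with $\|f\|_{\Hc} = 1$ and use the reproducing property together with Cauchy--Schwarz in $\Hc$ to bound
\begin{talign*}
\left| \int_\calU f(u)\,\U(\d u) - \sum_{i=1}^m w_i f(u_i) \right|
&= \left| \left\langle f,\ \mu_{c_{\beta_m},\U} - \sum_{i=1}^m w_i\, c_{\beta_m}(\cdot, u_i) \right\rangle_{\Hc} \right| \\
&\leq \text{MMD}_{c_{\beta_m}}\!\left( \U,\ \sum_{i=1}^m w_i \delta_{u_i} \right).
\end{talign*}
The right-hand side is independent of $f$, so any rate established for it yields the claim with a constant $C_0$ independent of $f$.

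Second, the weights $\{w_i\}$ are exactly the optimally-weighted/Bayesian-quadrature weights of Theorem~1 applied to kernel $c_{\beta_m}$, base measure $\U$, and nodes $\{u_i\}_{i=1}^m$, so the MMD above coincides with the worst-case error of the optimal kernel quadrature rule in the unit ball of $\Hc$. Theorem~9 of Wynne et al.\ (2020) applies directly under the hypotheses we have (convex, open, bounded $\calU$; $\U$ with density bounded above; Mat\'ern $c_{\beta_m}$ of fixed order $\nu_c$; nodes with fill distance $h_m$), yielding a bound of the form $C(\beta_m)\, h_m^{\nu_c + s/2}$ on this worst-case error. Substituting the fill-distance asymptotics $h_m = \mathcal O(m^{-1/s})$ gives the desired rate $\mathcal O(m^{-\nu_c/s - 1/2})$.

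The main obstacle will be uniformity of the rate constant over $\beta_m \in B$, since $\beta_m$ is permitted to vary with $m$, while the statement demands a single $C_0$ independent of $m$. Because the Mat\'ern order $\nu_c$ is held fixed across $B$, Wynne's constant $C(\beta)$ only inherits dependence through the remaining hyperparameters (amplitude and lengthscale), and tracking its derivation shows this dependence is continuous. Under the mild implicit requirement that these hyperparameters stay in a bounded range as $\beta$ ranges over $B$ (away from degenerate limits), I would set $C_0 := \sup_{\beta \in B} C(\beta) < \infty$ and conclude. Combining this uniform worst-case bound with the Cauchy--Schwarz reduction above completes the proof.
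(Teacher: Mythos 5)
Your plan is essentially the same as the paper's: reduce everything to Theorem~9 of \citet{Wynne2020} for the Bayesian-quadrature error under Mat\'ern kernels and substitute the fill-distance asymptotics. The conclusion is right, but a few details deserve care. The Cauchy--Schwarz detour through the MMD is both unnecessary and imprecise as written: the reproducing-property step $f(u_i) = \langle f, c_{\beta_m}(\cdot, u_i)\rangle$ and the embedding $\mu_{c_{\beta_m},\U}$ live in $\mathcal{H}_{c_{\beta_m}}$, not in the $\Hc$ in which the hypothesis places $f$. Since the weights are computed with $c_{\beta_m}$ while the norm is $\|\cdot\|_{\Hc}$, these are two kernels whose RKHS coincide only up to norm equivalence---this mismatch is precisely the ``misspecified'' setting Wynne's theorem was designed for---so your one-line calculation silently picks up a norm-equivalence factor that can depend on $\beta_m$, hence on $m$. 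The paper avoids the detour entirely and applies Theorem~9 directly to the quadrature error $|\int f\,\d\U - \sum_i w_i f(u_i)|$ for fixed $f$, which is what that theorem bounds, with $\|f\|$-dependence already baked into the constant.

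On the uniformity over $\beta_m$, you correctly flag this as the delicate point. Your fix---assume the hyperparameters stay in a bounded subset of $B$ and set $C_0 := \sup_{\beta\in B} C(\beta)$---adds an extra hypothesis not present in the lemma. The paper handles it inside Wynne's own framework by verifying Assumption 2 there: because the Mat\'ern order $\nu_c$ is held fixed, the smoothness parameter $\tau(\beta) = \nu_c + s/2$ and the relevant norm-equivalence constants do not vary with $\beta$, so the extreme-value set $B_m^*$ appearing in Wynne's constant is finite and $m$-independent. You should also actually walk through the four hypotheses of \citet[Theorem~9]{Wynne2020} (domain, kernel parameters, smoothness range, target/mean function) rather than asserting the theorem ``applies directly''---that verification is the substance of the paper's proof.
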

\begin{proof}
    The expression on the left hand side of~\citet[Theorem 9]{Wynne2020} is $|\int_\calU f (u)\U(\d u) - \sum_{i=1}^m w_i f(u_i)|$; the notation from their paper to this result maps as $\theta \to \beta$, $p \to f_\U$, $\X \to \calU$, $x \to u$, $\Theta \to B$, and the prior mean $\mu(\beta)=0$ for any $\beta \in B$. First, we show the assumptions in the Theorem hold.
    
    {Assumption 1 (Assumptions on the Domain):} An open, bounded, and convex $\calU$ satisfies the assumption, as discussed in~\citet{Wynne2020}.
    
    {Assumption 2 (Assumptions on the Kernel Parameters):} Since $c_\beta$ is a Mat\'ern kernel of order $\nu_c$, the smoothness constant of $c_\beta$ is $\nu_c + s/2$ regardless of the value of $\beta \in B$, meaning $\tau(\beta)=\tau_c^-=\tau_c^+=\nu_c + s/2>s/2$. Lastly, the norm equivalence constants of ~\citet[Equation 3]{Wynne2020} are the same for all $\beta$---since the respective RKHS and Sobolev spaces are---so the set of extreme values $B_m^*$ is finite and does not depend on $m$; we denote $B_c^*=B_m^*$, to highlight that $B_c^*$ only depends on the choice of kernel family $c$ and not $m$.
    
    {Assumption 3 (Assumptions on the Kernel Smoothness Range):} As discussed in Assumption 2, $\tau(\beta)=\nu_c + s/2$ for any $\beta \in B$, so the set in the statement of Assumption 3 has only one element.
    
    {Assumption 4 (Assumptions on the Target Function and Mean Function):} The target function $f$ is in $\Hc$, meaning $\tau_f = \tau_c^- = \tau_c^+ = \nu_c + s/2$. The mean function $\mu(\beta)$ was taken to be zero, so has zero norm.

    Lastly, take $h_0$ such that $h_1 \leq h_0$; as we assumed $h_m = \mathcal O(m^{-1/s})$, it holds that $h_0 \leq h_m$ for all $m \geq 1$. Therefore, all the assumptions are satisfied and~\citet[Theorem 9]{Wynne2020} applies; moreover, the bounding expression is $C_0 m^{-\alpha/s}$ for $\alpha=\nu_c + s/2$ and some $C_0$ independent of $m$ and $f$ since
    \begin{itemize}
        \item $h_m = \mathcal O(m^{-1/s})$, and as $\tau_f=\tau_c^-=\tau_c^+=\nu_c + s/2$ as discussed in the verification of assumptions, $h_m^{\max(\tau_f, \tau_c^-)} = O(m^{-\nu_c/s - 1/2})$,
        \item the rest of the multipliers do not depend on $m$ and $f$: $C$ depends only on $\calU$, $s$, $\tau_f = \nu_c + s/2$, and $B^*$; $\|f_\U \|_{\L^2(\calU)}$ is a constant and finite since $f_\U$ is bounded above; $\tau_f-\tau_c^+=0$ so rising to its power produces $1$; the norm $\|f\|_{\Hc} = 1$; for any $m \geq M$, $\mu(\beta_m) = 0$.
    \end{itemize}
    This completes the proof.
\end{proof}

Now we are ready to prove the main theorem.
\begin{proof}[Proof of Theorem~\ref{thm:rate_of_convergence}]
To show $k(x, \cdot) \circ G_\theta \in \Hc$ for all $x \in \X$, first note that~\Cref{lemma:rkhs_of_matern} applies for both $\calU$ and $\X$ that satisfy~\Cref{as:points}: trivially for $\R^d$, and for an open, convex, and bounded space since it has a Lipschitz boundary~\citep{stein1970singular}. Since by~\Cref{as:kernels}, $k$ is a Mat\'ern kernel of order $\nu_k$, it holds by~\Cref{lemma:rkhs_of_matern} that $k(x, \cdot) \in \W^{l_\varphi, 2}(\X)$ for $l_\varphi=\lfloor\nu_k+d/2\rfloor$ and any $x \in \X$. Then, by~\Cref{thm:k_circ_G_is_Sobolev}, $k(x, \cdot) \circ G_\theta \in \W^{\tilde{l}, 2}(\calU)$, for a $G_\theta$ that satisfies~\Cref{as:generator}, and $\tilde{l}=\min(l_\varphi, l)=\min(\lfloor\nu_k+d/2\rfloor, l)$. By~\Cref{as:kernels}, $\nu_c \leq \min(\lfloor \nu_k + d/2 \rfloor, l) - s/2=\tilde{l}- s/2$, and it holds that $\tilde{l} \geq \nu_c + s/2$. Since $\tilde{l}$ is an integer, this implies $\tilde{l} \geq \lceil \nu_c + s/2 \rceil$, and we have that $\W^{\tilde{l}, 2}(\calU) \subseteq \W^{ \lceil \nu_c + s/2 \rceil, 2}(\calU)$. 
Finally, as $c$ is a Mat\'ern kernel or order $\nu_c$, by~\Cref{lemma:rkhs_of_matern} it holds that $ \W^{\lceil\nu_c+s/2 \rceil, 2}(\calU) \subseteq \Hc$, and we arrive at $k(x, \cdot) \circ G_\theta \in \Hc$.

Since $k(x, \cdot) \circ G_\theta \in \Hc$ holds, we can use~\Cref{thm:optimal_weights} and state
    \begin{talign*}
        | \text{MMD}_k(\P_\theta,\Q^n) - \text{MMD}_k(\P_\theta^m,\Q^n)| \leq K \times \text{MMD}_c \left(\mathbb{U}, \sum_{i=1}^m w_i \delta_{u_i}\right).
    \end{talign*}
    By the reproducing property, it holds that $\sup_{\substack{f \in \Hc \\ \|f\|_{\Hc} = 1}} | \int_\calU f (u)\P(d u) -  \int_\calU f (u)\Q(d u) |$ is equal to $\text{MMD}_c(\P, \Q)$ for any two distributions $\P, \Q$ on $\calU$. Then,
    \begin{talign*}
        \text{MMD}_c(\mathbb{U}, \sum_{i=1}^m w_i \delta_{u_i}) = \sup_{\substack{f \in \Hc \\ \|f\|_{\Hc}=1}} \left| \int_\calU f (u)\U(d u) - \sum_{i=1}^m w_i f(u_i) \right|.
    \end{talign*}
    The expression under the supremum is bounded by~\Cref{lemma:bq_convergence} with $C_0 m^{-\nu_c/s - 1/2}$, for $C_0$ independent of $m$ and $f$. Therefore, $\text{MMD}_c(\mathbb{U}, \sum_{i=1}^m w_i \delta_{u_i}) \leq C_0 m^{-\nu_c/s - 1/2}$, and the result holds.
\end{proof}

Note that while the result was formulated for the special case of convex spaces, it applies more generally to any open, connected, bounded $\X \subset\R^d$, $\calU \subset \R^s$ with Lipschitz-continuous boundaries---with no changes to the proof. The applicability to $\X=\R^d$ remains unchanged; $\calU$, however, must remain bounded for~\Cref{thm:k_circ_G_is_Sobolev} to hold.

\subsection{Computational and sample complexity}\label{app:cost_error}

We derive the condition under which the OW estimator achieves better sample complexity than the V-statistic for the same order of computational cost, see Table~\ref{tab:cost_error} for the rates.

Suppose the cost for both V-statistic and OW is $\mathcal{O}(\tilde{m})$. Then, the sample complexity for the V-statistic can be written in terms of $\tilde m$ as $\mathcal{O}(\tilde{m}^{-1/4})$. Similarly, for the OW estimator, the sample complexity in terms of $\tilde m$ is $\mathcal{O}(\tilde{m}^{-(\nu_c + \frac{s}{2})/3s})$. 
The more accurate estimator is therefore the one whose error rate goes to zero quicker. Therefore, the OW estimator is more accurate than the V-statistic if
\begin{talign*}
   \frac{\nu_c}{s} > \frac{1}{4}, 
\end{talign*}
which for the common choice of $\nu_c=5/2$ implies $s< 10$.

\begin{table}[]
\centering
\caption{Computational and sample complexity rates of the V-statistic and the OW estimator with respect to $m$.}
\begin{tabular}{@{}lll@{}}
\toprule
            & \multicolumn{1}{c}{\textbf{Cost}} & \multicolumn{1}{c}{\textbf{Error}} \\ \midrule
V-statistic & $\mathcal{O}(m^2)$                & $\mathcal{O}(m^{-\frac{1}{2}})$            \\
OW          & $\mathcal{O}(m^3)$                & $\mathcal{O}(m^{- \frac{\nu_c}{s} - \frac{1}{2}})$    \\ \bottomrule
\end{tabular}
\label{tab:cost_error}
\end{table}

\subsection{Derivation of closed-form kernel embeddings} 
\label{app:derivation_aligns}

We have $z_i = \int_{\calU} c(u_i, v) \U(\text{d} v)$, where $c:\mathcal U \times \mathcal U \rightarrow \R$ is the SE kernel parameterised by the lengthscale $l>0$, i.e., $c(u,v) = \sqrt{2\pi}l \varphi(u; v, l^2)$, where $\varphi$ is the Gaussian pdf. For $s>1$, we can write the kernel as $c(u, v) = \prod_{j=1}^s c(u_j, v_j) $. We now derive closed-form kernel embeddings for $z_i$ for different choices of the base space $\calU$ and the distribution $\U$.

For $\calU=[0, 1]^s$, and $\U$ the uniform distribution, i.e., $u_i \sim \text{Uniform}([0,1]^s)$, we get
\begin{talign*}
    z_i &= \prod_{j=1}^s \int_{[0,1]} c(u_{ij}, v_j) \text{d} v_j =  \prod_{j=1}^s \sqrt{2\pi}l \left[ \varphi(1; u_{ij}, l^2) - \varphi(0; u_{ij}, l^2) \right],
\end{talign*}
where $\varphi$ is the Gaussian cdf and $u_{ij}$ is the $j^\textup{th}$ element of $u_i$. 

In the case of $\calU=\R^s$, and $\U$ being the Gaussian distribution such that $u_i \sim \mathcal{N}( \mu, \Sigma)$, where $ \mu = [\mu_1, \dots, \mu_s]^\top$ and $\Sigma$ is the $s$-dimensional diagonal matrix with entries $(\sigma^2_1, \dots, \sigma^2_s)$, the closed-form embedding for $z_i$ reads 
\begin{talign*}
    z_i = \prod_{j=1}^s \int_{-\infty}^\infty c(u_{ij}, v_j) \varphi(v_j; \mu_j, \sigma^2_j) \text{d} v_j 
    & = \prod_{j=1}^s \sqrt{2\pi}l \int_{-\infty}^\infty \varphi(v_j; u_{ij}, l^2) \varphi(v_j; \mu_j, \sigma^2_j) \text{d} v_j\\
        &= \prod_{j=1}^s \sqrt{\frac{l^2}{(l^2 + \sigma^2_j)}} \exp\left( \frac{- (u_{ij} - \mu_j)^2}{2(l^2 + \sigma^2_j)}\right).
\end{talign*}
For the special case of $\Sigma = \mathrm{diag}(\sigma^2, \dots, \sigma^2)$, the expression simplifies to
\begin{talign*}
    z_i = \left( \frac{l^2}{l^2 + \sigma^2} \right)^{s/2} \exp\left(- \frac{\Vert u_i - \mu \Vert^2}{2(l^2 + \sigma^2)} \right).
\end{talign*}

\section{Additional Experimental details} \label{app:expDetails}

True parameter values of the benchmark simulators in \Cref{sec:benchmark} is given in \Cref{app:trueParam}. \Cref{app:mvgk} and \Cref{app:compositetest} provide additional results and details regarding the experiments in \Cref{sec:mvgk}. Finally, the link to the source code of the wind farm simulator is in \Cref{app:windfarm}.

\subsection{Benchmark Simulators}\label{app:trueParam}
We now provide further details on the benchmark simulators. For drawing iid or RQMC points, we use the implementation from \texttt{SciPy} \cite{SciPy}. Below, we report the parameter value $\theta$ used to generate the results in Table~\ref{tab:MMDerror} for each model. We refer the reader to the respective reference in Table~\ref{tab:MMDerror} for a description of the model and their parameters.

\textbf{g-and-k distribution}: $(A, B, g, k) = (3,1,0.1,0.1)$

\textbf{Two moons}: $(\theta_1, \theta_2) = (0,0)$

\textbf{Bivariate Beta}: $(\theta_1, \theta_2, \theta_3, \theta_4, \theta_5) = (1,1,1,1,1)$

\textbf{Moving average (MA) 2}: $(\theta_1, \theta_2) = (0.6,0.2)$

\textbf{M/G/1 queue}: $(\theta_1, \theta_2, \theta_3)= (1,5,0.2)$

\textbf{Lotka-Volterra}: $(\theta_{11}, \theta_{12}, \theta_{13})= (5,0.025,6)$

\subsection{Multivariate g-and-k}\label{app:mvgk}

The performance of the V-statistic and our OW estimator as a function of $\theta_3$ parameter of the multivariate g-and-k distribution is shown in \Cref{fig:mvgk_varyingG} (left). The observed effect on the performance is similar to that of \Cref{fig:mvgk_c}, where the error in the OW estimator increases as we vary $\theta_3$. The degradation in performance is not as severe as when varying $\theta_4$, indicating that the smoothness of the multivariate g-and-k generator is not impacted by $\theta_3$ compared to $\theta_4$. Both the uniform and the Gaussian embedding achieves better performance than the V-statistic, whose performance remains unaffected by $\theta_3$.

\begin{figure}[t]
	 \centering
         \includegraphics[width=0.4\textwidth]{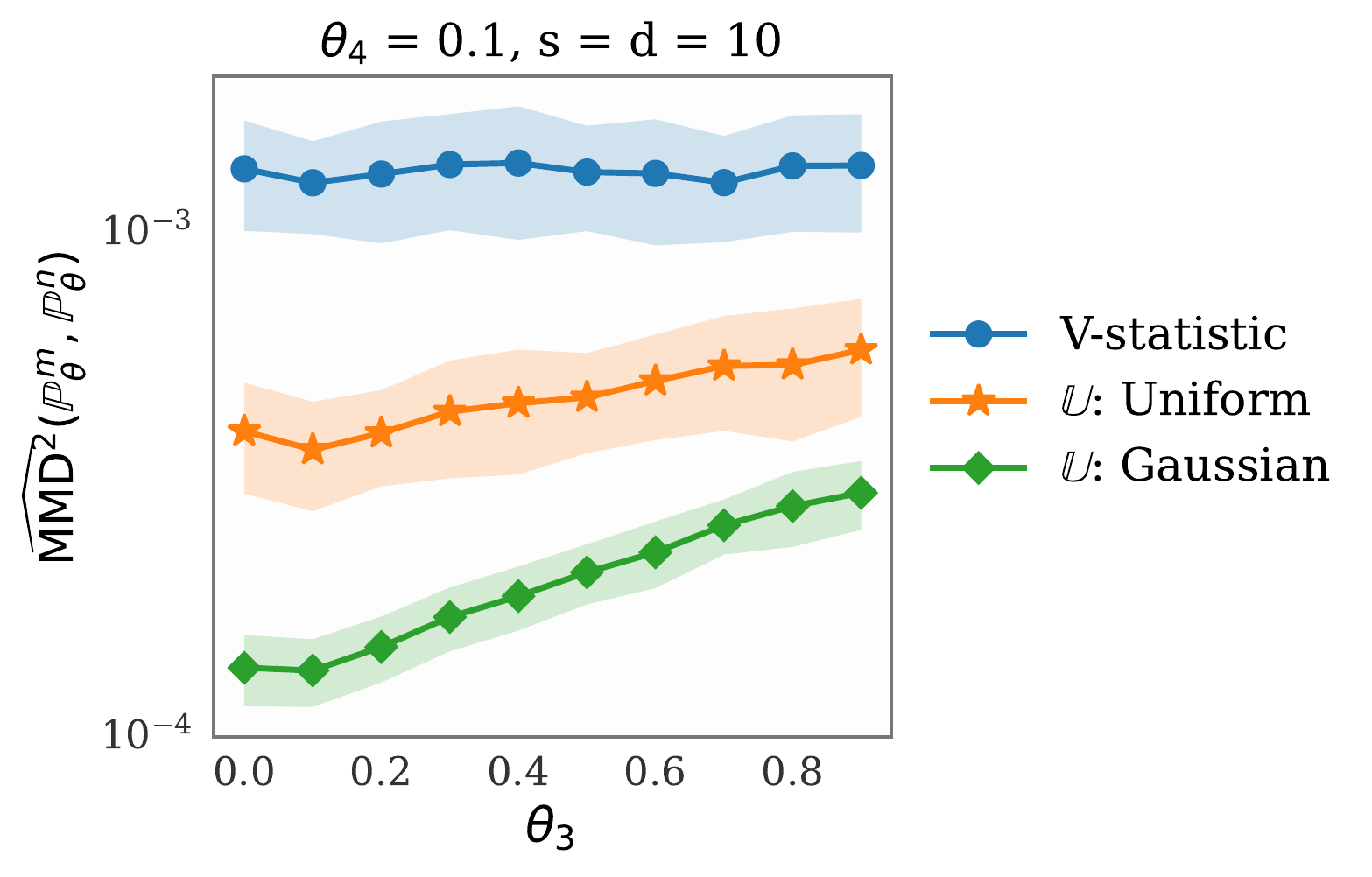}        
     \hspace{5mm}
         \includegraphics[width=0.37\textwidth]{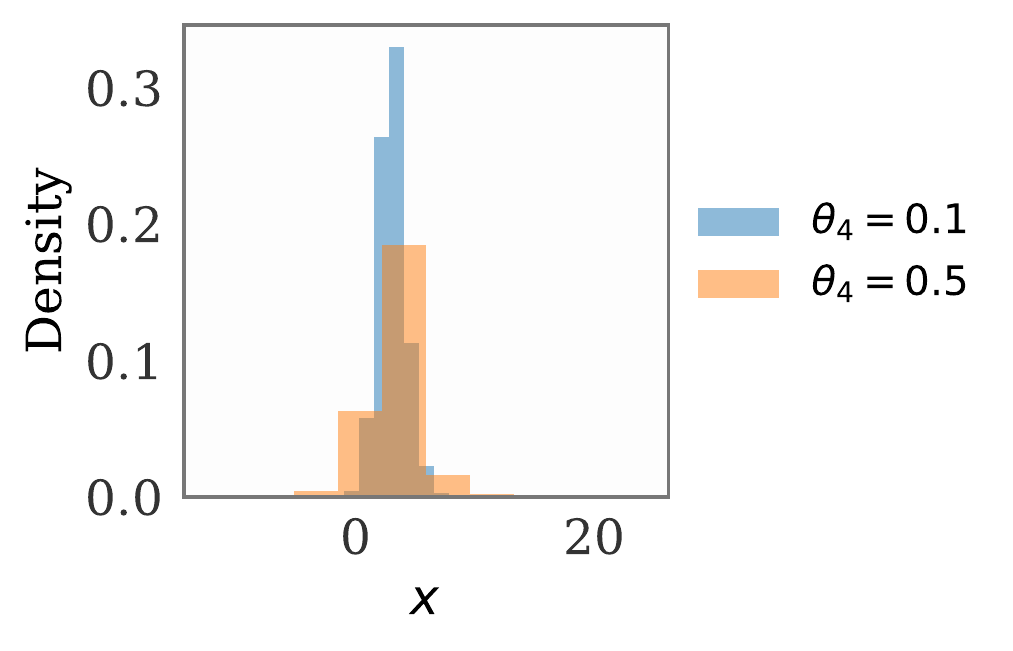}
\caption{Additional results for the multivariate g-and-k distributions. \textit{Left:} Estimated MMD$^2$ for the V-statistic and our OW estimator as a function of $\theta_3$. \textit{Right:} Histogram of samples from the g-and-k distribution for different values of $\theta_4$. Settings: no. of samples = 100,000, $\theta_1 = 3$, $\theta_2 = 1$, $\theta_3 = 0.1$.}
\label{fig:mvgk_varyingG}
\end{figure}

\subsection{Exchange rate data experiment}

We apply the g-and-k simulator to the US dollar to Canadian dollar exchange rate data \cite{Verbeek2018} from the \url{Ecdat} R package. The data is shown in \Cref{fig:exchange_data1} in black, which has $n=501$ points. We fit the univariate g-and-k model to this data using the ABC method of \Cref{eq:ABC}, with both the V-statistic and our OW estimator. For simplicity, we keep $\theta_3 = 0.12$ and $\theta_4 = 0.35$ fixed and only estimate the first two parameters. We set $m=20$ and simulate 2000 parameter values from the prior $\mathcal{U}([0.5,1] \times [0,0.1])$. The resulting ABC posteriors with tolerance $\varepsilon = 5\%$ are shown in \Cref{fig:exchange_posterior1} and \Cref{fig:exchange_posterior2} for $\theta_1$ and $\theta_2$, respectively. \Cref{fig:exchange_data1} shows the corresponding predictions based on the MAP estimate of the ABC posteriors. We observe that our OW estimator leads to a better fit than the V-statistic estimator. We are able to estimate the variance of the data (governed by $\theta_2$) much more accurately than the V-statistic, which overestimates the variance.

\begin{figure*}[t]
     \centering
     \begin{subfigure}[b]{0.245\textwidth}
         \centering
         \includegraphics[width=\textwidth]{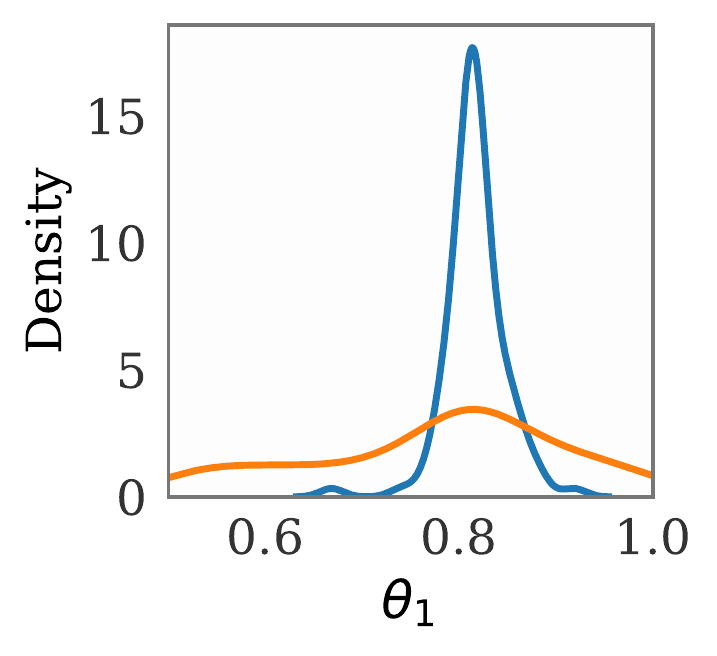}
         \caption{}
         \label{fig:exchange_posterior1}
     \end{subfigure}
     \begin{subfigure}[b]{0.25\textwidth}
         \centering
         \includegraphics[width=\textwidth]{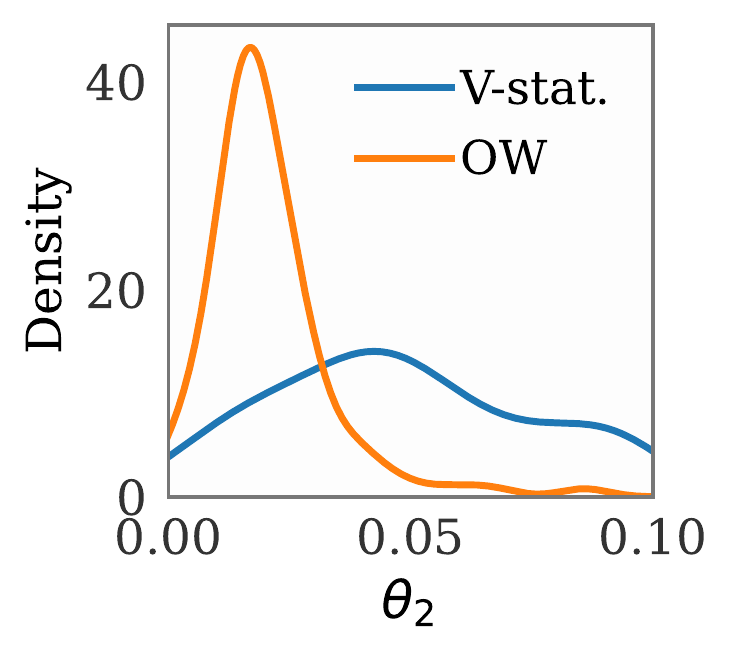}
         \caption{}
         \label{fig:exchange_posterior2}
     \end{subfigure}
     \begin{subfigure}[b]{0.235\textwidth}
         \centering
         \includegraphics[width=\textwidth]{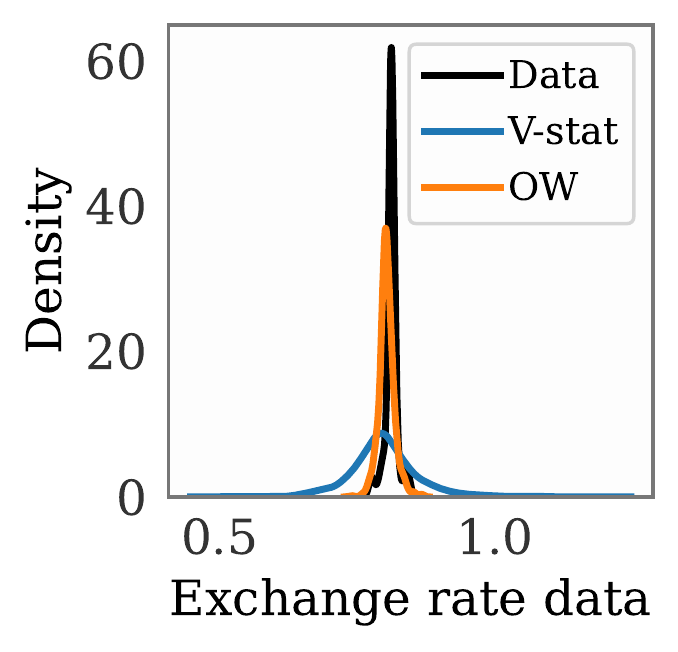}
         \caption{}
         \label{fig:exchange_data1}
     \end{subfigure}
     \begin{subfigure}[b]{0.255\textwidth}
         \centering
         \includegraphics[width=\textwidth]{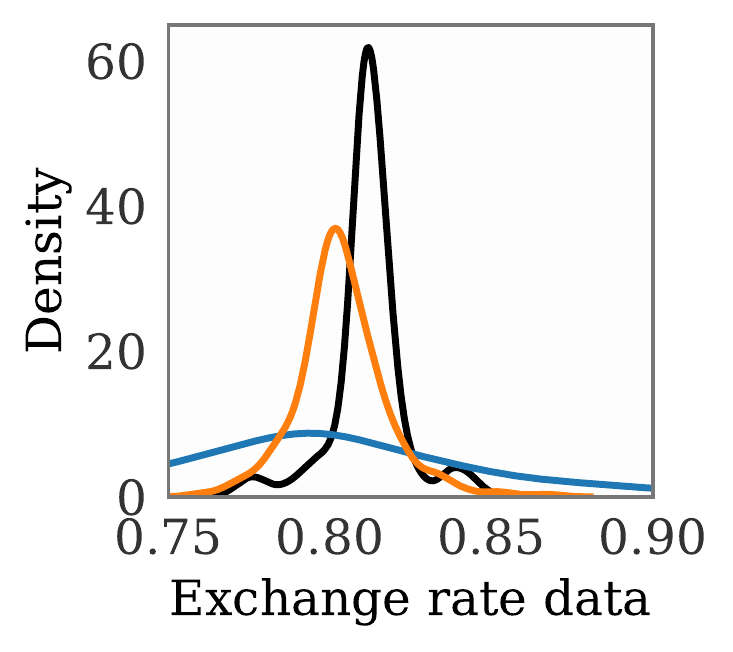}
         \caption{}
         \label{fig:exchange_data2}
     \end{subfigure}
        \caption{US dollar/Canadian dollar exchange rate data experiment on the g-and-k distribution. (a) ABC posterior for $\theta_1$. (b) ABC posterior for $\theta_2$. (c) Model fit based on the MAP estimates of $\theta_1$ and $\theta_2$ for V-statistic (in blue) and OW estimator (in orange). The OW estimator leads to a much better fit to the exchange rate data (shown in black). (d) Zoomed-in version of (c).}
        \label{fig:exchange}
        \vspace{-3mm}
\end{figure*}

\subsection{Composite goodness-of-fit test: details and additional results}
\label{app:compositetest}

\begin{figure}[h]
  \setlength{\algomargin}{0.2em}
  \begin{minipage}[t]{0.49\textwidth}
    \begin{algorithm2e}[H]
      \KwIn{$\P_{\theta}$, $\Q^n$, $\alpha$, $B$}
      $\hat{\theta}_{n} = \underset{\theta}{\arg \min} \operatorname{MMD}^2(\P_\theta, \Q^n)$ \;
      
      \For{$k \in \{1, \ldots, B\}$}{
          $\Q^n_{(k)} = \frac{1}{n} \sum_{i=1}^n \delta_{x_i^{(k)}}$, $\big\{x_i^{(k)}\big\}_{i=1}^n \sim \P_{\hat{\theta}_n}$\;
          
          $\hat{\theta}_{(k)}^n = \underset{\theta \in \Theta}{\arg \min} \operatorname{MMD}^2(\P_{\theta}, \Q^n_{(k)})$\;
          
          $\Delta_{(k)} = \operatorname{MMD}^2(\P_{\hat{\theta}_{(k)}^n}, \Q^n_{(k)})$\;
      }
      $c_\alpha = \operatorname{quantile}(\{\Delta_{(1)}, \ldots, \Delta_{(B)}\}, 1 - \alpha)$\;
      \BlankLine
      $\P_{\hat{\theta}_n}^m = \frac{1}{m} \sum_{i=1}^{m} \delta_{y_i}$, where $\{y_i\}_{i=1}^m \sim \P_{\hat{\theta}_n}$\;
      \eIf{$\operatorname{MMD}^2(\P_{\hat{\theta}_n}^m, \Q^n) > c_\alpha$}{\Return \text{reject}\;}{\Return do not reject\;}
    
      \caption{Composite goodness-of-fit test}
      \label{alg:composite_test}
    \end{algorithm2e}
  \end{minipage}
  \hfill
  \begin{minipage}[t]{0.49\textwidth}
    \begin{algorithm2e}[H]
      \KwIn{$\P_{\theta}$, $\Q^n$, $m$, $I$, $R$, $S$, $s$, $\Theta^\text{init}$}

      \SetKwProg{Fn}{Function}{ is}{end}
      \Fn{loss($\theta$)}{
        $\P_\theta^m = \frac{1}{m} \sum_{i=1}^m \delta_{y_i}$, where $\{y_i\}_{i=1}^m \sim \P_\theta$ \;
        \Return $\operatorname{MMD}^2(\P_\theta^m, \Q^n)$\;
      }
      
      $\theta^\text{trial}_{(1)},\ldots,\theta^\text{trial}_{(I)} \sim \Theta^\text{init}$ \;
      Select $\theta_{(1)}^\text{init}, \ldots, \theta_{(R)}^\text{init} \in \{\theta^\text{trial}_{(k)}\}_{k=1}^I$ that yield the smallest $\operatorname{loss}(\theta^\text{init}_{(k)})$\;
      $\hat{\theta}^\text{opt}_{(1)},\ldots,\hat{\theta}^\text{opt}_{(R)} = $
      \For{$k \in \{1, \ldots, R\}$}{
          $\hat{\theta}^\text{opt}_{(k)} = $ adam\_optimizer(loss, $S$, $s$, $\theta^\text{init}_{(k)}$)
      }
      \Return $\theta^* \in \{\hat{\theta}^\text{opt}_\text{(k)}\}_{k=1}^R$ s.t. $\forall k. \operatorname{loss}(\theta^*) \leq \operatorname{loss}(\hat{\theta}_{(k)}^\text{opt})$\;
    
      \caption{Random-restart optimiser}
      \label{alg:optimisation}
    \end{algorithm2e}
  \end{minipage}
\end{figure}

\Cref{alg:composite_test} shows the details of the composite goodness-of-fit test using the parametric bootstrap. The algorithm is written for the V-statistic estimator, but each instance of the squared MMD can be replaced with our OW estimator.
In practice, to compute ${\arg \min_\theta} \operatorname{MMD}^2(\P_\theta, \Q^n)$ we use gradient-based optimisation, as described in \Cref{alg:optimisation}.
The definitions of the hyperparameters of these two algorithms, and the values that we use, are given in \Cref{tab:hyper_def_app}.

\begin{table}[t]
\begin{center}
\centering
\caption{Definitions of the hyperparameters.}
\label{tab:hyper_def_app}
\begin{tabular}{c c l}
    \toprule
    hyperparameter & value & \\
    \midrule
    $\alpha$ & 0.05 & level of the test  \\
    $B$ & 200 & number of bootstrap samples \\
    $m$ & 100 & number of samples from the simulator \\
    $n$ & 500 & number of observations in the data \\
    $I$ & 50 & number of initial parameters sampled \\
    $R$ & 10 & number of initial parameters to optimise \\
    $S$ & 200 & number of gradient steps \\
    $s$ & 0.04 & step size \\
    \bottomrule
\end{tabular}
\end{center}
\end{table}

$\Theta_\text{init}$ is the distribution from which the initial parameters are sampled, and is a uniform distribution with the following ranges: $\theta_1: (0.001, 5)$, $\theta_2: (0.001, 5)$, $\theta_3: (0.001, 1)$, $\theta_5: (0.001, 1)$.
To compute the fraction of times that the null hypothesis is rejected (\Cref{tab:composite_test_results}) we repeat the experiment $150$ times.

\label{app:gof}

\subsection{Large scale wind farm model}\label{app:windfarm}

We include the comparison with the U-statistic estimator of MMD for the wind farm experiment in \Cref{fig:windfarm_Ustat}. Observations are similar to that of \Cref{fig:windfarm} --- our OW estimator leads to much more concentrated ABC posteriors around the true value than the U-statistic.

The low-order wake model is described in \citet{Kirby2023} and the code is available at \url{https://github.com/AndrewKirby2/ctstar_statistical_model/blob/main/low_order_wake_model.py}.

\begin{figure}[H]
	 \centering
	 \includegraphics[ width = 0.25 \linewidth]{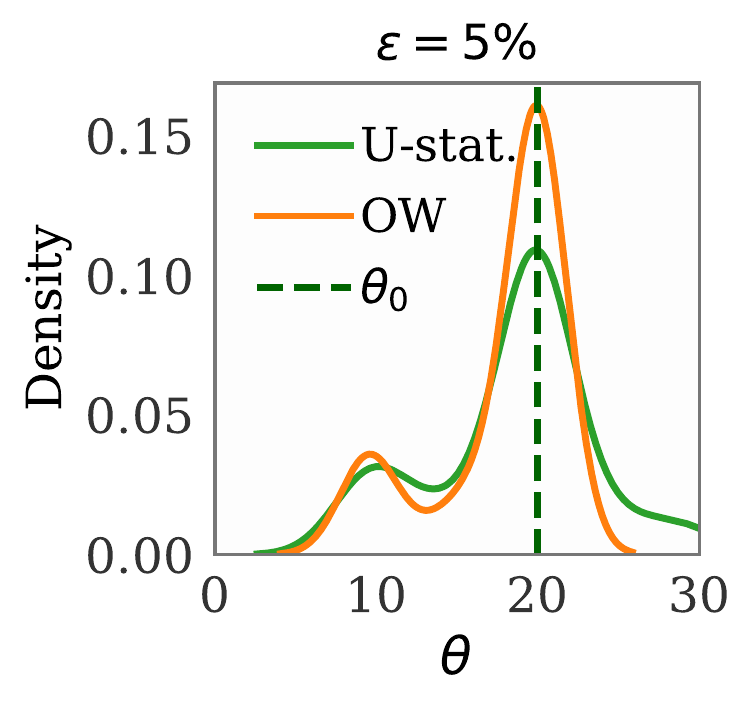}
	 \includegraphics[ width = 0.22 \linewidth]{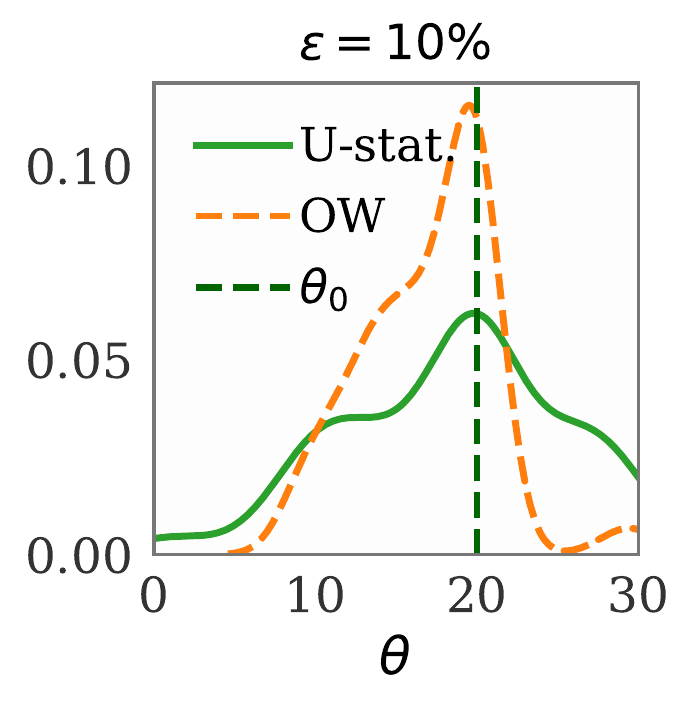}
\caption{ABC posteriors for the wind farm model. Our OW estimator yields posterior samples that are more concentrated around the true $\theta_0$ than the U-statistic. Settings: $n=100$, $\theta_0=20$.}
\label{fig:windfarm_Ustat}
\end{figure}

\end{document}